\documentclass[reqno,11pt]{article}
\usepackage[letterpaper,margin=1in]{geometry}
\usepackage[utf8]{inputenc}

\usepackage{amsmath,amssymb,amsthm,amsfonts}
\usepackage{mathtools}
\usepackage{graphicx} 
\usepackage{float}
\usepackage{subfigure} 
\usepackage{arydshln}
\usepackage{multirow}
\usepackage{authblk}
\usepackage{quiver}

\usepackage{bm} 
\usepackage{mathrsfs}   
\usepackage{setspace}

\usepackage{thmtools}
\usepackage{thm-restate}

\usepackage{xcolor}
\usepackage{hyperref}
\hypersetup{
    colorlinks,
    linkcolor={blue!80!black},
    urlcolor={blue!80!black},
    citecolor=teal
}

\usepackage{zref-clever}
\zcsetup{cap,noabbrev,nameinlink}

\newtheorem{theorem}{Theorem}[section]
\newtheorem{lemma}[theorem]{Lemma}
\newtheorem{corollary}[theorem]{Corollary}

\theoremstyle{definition}
\newtheorem{definition}[theorem]{Definition}

\theoremstyle{remark}
\newtheorem{remark}[theorem]{Remark}

\usepackage{url}
\usepackage[title]{appendix}

\usepackage{tikz}  
\usepackage{tikz-cd}  
\usetikzlibrary{positioning}
\usepackage{tikz-qtree}
\usepackage{mathdots}

\usepackage{qcircuit}
\newcommand{\ket}[1]{| #1 \rangle}
\newcommand{\bra}[1]{\langle #1|}
\newcommand{\inner}[1]{\langle #1 \rangle}

\newcommand{\ketbra}[2]{| #1 \rangle \langle #2 |}

\DeclareMathOperator{\poly}{poly}

\newcommand{\on}{\operatorname}
\newcommand{\mb}{\mathbb}

\newcommand{\mc}{\mathcal}

\renewcommand{\Im}{\operatorname{Im}}

\newcommand{\eps}{\varepsilon}

\usepackage{algorithm}
\usepackage{algorithmic}

\makeatletter
\newenvironment{breakablealgorithm}
  {
   \begin{center}
     \refstepcounter{algorithm}
     \hrule height1pt depth0pt \kern3pt
     \renewcommand{\caption}[2][\relax]{
       {\raggedright\textbf{\ALG@name~\thealgorithm} ##2\par}%
       \ifx\relax##1\relax 
         \addcontentsline{loa}{algorithm}{\protect\numberline{\thealgorithm}##2}%
       \else 
         \addcontentsline{loa}{algorithm}{\protect\numberline{\thealgorithm}##1}%
       \fi
       \kern3pt\hrule\kern3pt
     }
  }{
     \kern3pt\hrule\relax%
   \end{center}
  }
\makeatother

\newcommand{\be}{\begin{equation}}
\newcommand{\ee}{\end{equation}}
\newcommand{\bes}{\begin{equation*}}
\newcommand{\ees}{\end{equation*}}
\newcommand{\bea}{\begin{eqnarray}}
\newcommand{\eea}{\end{eqnarray}}
\newcommand{\beas}{\begin{eqnarray*}}
\newcommand{\eeas}{\end{eqnarray*}}
\newcommand{\bal}{\begin{aligned}}
\newcommand{\eal}{\end{aligned}}

\newcommand{\HT}{\mathrm{HT}}
\newcommand{\CT}{\mathrm{CT}}
\newcommand{\ET}{\mathrm{ET}}
\newcommand{\EHT}{\mathrm{EHT}}
\newcommand{\tO}{\widetilde{O}}

\newcommand{\DownTransduce}{
{\rotatebox[origin=c]{270}{$\rightsquigarrow$}}}
\usepackage{bbm}

\title{Elfs, transducers and quantum walks}
\author{Simon Apers\thanks{Universit\'e Paris Cit\'e, CNRS, IRIF, France}, Jérémie Roland\thanks{QuIC, Ecole Polytechnique de Bruxelles, Université libre de Bruxelles, Brussels, Belgium}, Yuxin Zhang\thanks{SKLMS, Academy of Mathematics and Systems Science, Chinese Academy of Sciences,
Beijing, China}}
\date{\today}

\begin{document}

\maketitle

\begin{abstract}
Electric flow sampling (elfs) is a new tool in the quantum walk toolbox and a useful primitive for solving search, sampling and optimization problems on graphs.
We refine this tool by showing that there exists a zero-error transducer for implementing elfs.
More broadly, we establish a zero-error transducer for reflecting about the intersection of two subspaces, yielding an error-free transducer version of the effective gap lemma.
Building on this result, we obtain improved quantum walk algorithms for estimating effective resistances and span program witness sizes with an optimal error scaling, and for sampling from the random walk arrival distribution, via the composition of many elfs.
Using this last algorithm, we obtain an up-to-quadratic quantum speedup for semi-supervised learning on expander graphs.
\end{abstract}

\section{Introduction and summary}

Inspired by the fruitful relationship between random walks, electric networks and linear algebra \cite{doyle1984random,levin2017markov}, a similarly strong relationship between quantum walks, electric networks and linear algebra is currently being uncovered.
Of particular relevance to this work are the contributions of Belovs \cite{belovs2013quantum,belovs2013time} and Piddock \cite{piddock2019quantum}, who showed that quantum walks can be used to prepare (as a quantum state) the electric flow from a source vertex to a sink.
Electric flow sampling (elfs) then corresponds to measuring the resulting quantum state, and this has found applications related to search \cite{belovs2013quantum,piddock2019quantum,li2025multidimensional}, optimization \cite{jeffery2023quantum,wesolowski2024advances} and sampling \cite{apers2022elfs}.
In this work we revisit elfs using the notion of ``transducer'', which is a recently introduced idealization of quantum algorithms \cite{belovs2024taming}.
We show that there exists a zero-error transducer for preparing electric flows.
Building on it, we get improved quantum walk algorithms for estimating effective resistances, improving over \cite{ito2019approximate,piddock2019quantum,apers2022elfs}, and for sampling from the random walk arrival distribution, improving over \cite{apers2022elfs}.
As an example application, we use this last algorithm to speed up the semi-supervised learning algorithm from \cite{zhu2003semi} on expander graphs.

\subsection{Background: graphs, electric networks and quantum walks}
We will consider an undirected graph $G = (V,E,w)$ with edge set $E \subseteq V \times V$ and edge weights $w:E \to \mathbb{R}_{\geq 0}$.
We set $w_{xy}=0$ for $(x,y) \notin E$ so that the degree of a vertex $x\in V$ is $d_x = \sum_{y} w_{xy}$ and $W = \sum_{x \in V} d_x$ denotes twice the total edge weight.
For a source vertex $s \in V$ and sink $M \subseteq V$, the \emph{unit electric flow} from $s$ to $M$ is the unique unit $s$-$M$ flow $f:E \to \mathbb{R}$ that minimizes the dissipated energy
\[
\mathcal{E}(f)
= \frac{1}{2} \sum_{(x,y) \in E} \frac{f_{xy}^2}{w_{xy}} . 
\]
The resulting energy is called the \emph{effective resistance} $R_s = \mathcal{E}(f)$ between $s$ and $M$.
Equivalently, $f$ is the unique unit $s$-$M$ flow that is a potential flow: there exist voltages $v:V \to \mathbb{R}_{\geq 0}$ such that $f_{x y}=w_{x y}\left(v_x-v_y\right)$, where we set $v_y=0$ for all $y \in M$, in which case $v_s=R_s$.

We can derive some relevant random walk quantities from the electric flow.
For a random walk starting from $s$, let the random variable $\tau_M$ denote the time at which the random walk hits $M$.
The \emph{hitting time} $\HT_s$ is then the expectation $\HT_s = \mathbb{E}[\tau_M]$.
The \emph{escape time} $\ET_s$ is the expected time of the last visit to $s$ before $\tau_M$, and the \emph{commute time} $\CT_s$ is the expected time of the first visit back to $s$ after $\tau_M$.
We can express all of these quantities using the voltages:
\[
\ET_s
= \frac{1}{R_s} \sum_x v_x^2 d_x
\quad \leq \quad
\HT_s
= \sum_x v_x d_x
\quad \leq \quad
\CT_s
= R_s \sum_x d_x
= R_s W,
\]
where the inequalities follow from the fact that $v_x \leq R_s$ for all $x$.

To define a quantum walk on $G$, we essentially follow \cite{apers2022elfs} and introduce the Hilbert space $\ell(E) = \mathrm{span}\{\ket{x,y} : (x,y) \in E\}$.
With $\left|\phi_x\right\rangle=\frac{1}{\sqrt{d_x}} \sum_y \sqrt{w_{x y}}|x y\rangle$ the star state at $x$ and $\mathrm{SWAP}\ket{x,y} = \ket{y,x}$, we define the quantum walk operator $U$ on $\ell(E)$ as
\[
U
= \Big( \sum_{x \notin \{s\}\cup M} 2\ket{\phi_x}\bra{\phi_x} - I \Big) \cdot \mathrm{SWAP}.
\]
Equivalently, $U = (2\Pi_*-I) \cdot (2\Pi_+-I)$ with $\Pi_*$ the projector onto $\mathrm{span}\{\ket{\phi_x}\}_{x \notin \{s\}\cup M}$ and $\Pi_+$ the projector onto $\mathrm{span}\{\ket{x,y} + \ket{y,x}\}_{(x,y) \in E}$.
Notably, the electric flow state
\[
\ket{f} = \frac{1}{\sqrt{2 R_s}} \sum_{(x,y) \in E} \frac{f_{xy}}{\sqrt{w_{xy}}}\ket{xy}
\]
is in the kernel of both $\Pi_*$ and $\Pi_+$, and is therefore left invariant by $U$.
In fact, with $P_{\ker(\Pi_+) \cap \ker(\Pi_*)}$ the projector onto these kernels, it crucially holds that
\begin{equation} \label{eq:proj-flow}
P_{\ker(\Pi_+) \cap \ker(\Pi_*)} \ket{\phi_s}
\propto \ket{f}.
\end{equation}
This is exploited in \cite{belovs2013quantum,piddock2019quantum,apers2022elfs,li2025multidimensional}, where they use quantum phase estimation to algorithmically project $\ket{\phi_s}$ into the invariant subspace of the quantum walk $U$, and thus prepare~$\ket{f}$.
Naively, the complexity of this algorithm should depend on the spectral gap of $U$, since that determines the precision required for quantum phase estimation.
In a more careful argument, however, it is possible to use the so-called ``effective gap lemma'' \cite{lee2011quantum} to show that this algorithm can prepare an $\varepsilon$-approximation of $\ket{f}$ in $O(\sqrt{\ET_s}/\varepsilon)$ quantum walk steps~\cite{apers2022elfs}.
In this work, we improve over this approach by showing that there exists a transducer that prepares~$\ket{f}$ exactly.

\subsection{Result 1: effective gap transducers}

A transducer is an exciting new notion in quantum algorithms that was recently introduced by Belovs, Jeffery and Yolcu \cite{belovs2024taming}.
It is an idealized proxy for the usual bounded-error quantum algorithm that can exhibit useful properties such as error-freeness and efficient composition.
In a hot take, Jeffery \cite{jeffery2025composing} even suggested that transducers, as compared to the circuit model, are the more natural model for designing quantum algorithms.

To demonstrate the concept, consider a target transformation
\[
\ket{\phi_{\text{in}}}
\overset{?}{\to} \ket{\phi_{\text{out}}}.
\]
A unitary $U$ is said to transduce $\ket{\phi_{\text{in}}}$ into $\ket{\phi_{\text{out}}}$, denoted by $\ket{\phi_{\text{in}}} \stackrel{U}{\leadsto} \ket{\phi_{\text{out}}}$, if there exists a helper state or ``catalyst'' $w$ that satisfies
\[
\ket{\phi_{\text{in}}} \oplus w
\overset{U}{\rightarrow} \ket{\phi_{\text{out}}} \oplus w.
\]
We call the space where $\ket{\phi_{\text{in}}},\ket{\phi_{\text{out}}}$ live the public space, and that where the catalyst $w$ lives the private space.
While we assume $\ket{\phi_{\text{in}}}$ and $\ket{\phi_{\text{out}}}$ to be normalized, $w$ is not and we call $W = \|w\|^2$ the transduction complexity.
The key insight of \cite{belovs2024taming} is that such a transducer implies the existence of a simple, bounded-error quantum algorithm that makes $O(W/\varepsilon^2)$ calls to $U$ and maps $\ket{\phi_{\text{in}}}$ to an $\varepsilon$-approximation of $\ket{\phi_{\text{out}}}$, \emph{without} knowing the catalyst~$w$.
Despite not having to know it, it is precisely the catalyst that can make transducers error-free.
Finally, by composing these error-free transducers, and only then turning the resulting transducer into a bounded-error algorithm, we get a single global error, rather than an accumulation of each subroutine's individual error.
The right analogy here is with the way randomized Las Vegas algorithms compose, as compared to Monte Carlo algorithms.

The key to our error-free elfs transducer is an error-free transducer version of the aforementioned effective gap lemma \cite{lee2011quantum}.
Since the effective gap lemma has proven a key tool also in other quantum algorithms and quantum query complexity arguments, we expect this error-free version to also find applications beyond elfs.

\begin{restatable}[Effective gap transducer]{theorem}{effgap} \label{thm:eff-gap-transducer}
Consider projectors $\Pi$ and $\Delta$ and quantum walk operator $U = (2\Pi-I)(2\Delta-I)$.
For any $\ket{\psi} \in \ker(\Pi)$ it holds that
\[
\ket{\psi}
\overset{U}{\rightsquigarrow} (2P_{\ker(\Pi) \cap \ker(\Delta)} - I) \ket{\psi}
\]
with public space $\ker(\Pi)$ and catalyst $w = (\Pi-\Pi \Delta \Pi)^+ \Delta \ket{\psi} \in \mathrm{im}(\Pi)$, where $A^+$ denotes the Moore--Penrose pseudoinverse of $A$.
\end{restatable}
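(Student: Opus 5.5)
The plan is to verify the defining identity $U(\ket{\psi}\oplus w)=\ket{\psi'}\oplus w$ directly, where $\ket{\psi'}=(2P-I)\ket{\psi}$ and $P=P_{\ker(\Pi)\cap\ker(\Delta)}$. Here the public part lives in $\ker(\Pi)$ and the private part in $\mathrm{im}(\Pi)$, so I will simply write $v=\ket{\psi}+w$.

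\emph{Reduction.} First I check $\ket{\psi'}\in\ker(\Pi)$: both $\ket{\psi}$ and $P\ket{\psi}$ lie in $\ker(\Pi)$. Hence $(2\Pi-I)(\ket{\psi'}+w)=w-\ket{\psi'}$. Since $2\Pi-I$ is an involution, the target identity $U v=\ket{\psi'}+w$ is equivalent to $(2\Delta-I)v=w-\ket{\psi'}$. Expanding both sides, this is equivalent to
\[
\Delta v = w + (I-P)\ket{\psi}.
\]
I will split this into its $\Pi$- and $(I-\Pi)$-components. Note that $\Pi v=w$, $(I-\Pi)v=\ket{\psi}$, and $(I-P)\ket{\psi}\in\ker(\Pi)$.

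\emph{$\Pi$-component.} This reads $\Pi\Delta v=w$, which is equivalent to $A w=\Pi\Delta\ket{\psi}$ with $A=\Pi-\Pi\Delta\Pi=\Pi(I-\Delta)\Pi\succeq 0$. Since $A^+=A^+\Pi$, we have $w=A^+\Pi\Delta\ket{\psi}$ and $Aw=P_{\mathrm{im}(A)}\Pi\Delta\ket{\psi}$. So it suffices to show $\Pi\Delta\ket{\psi}\perp\ker(A)$. This is the main step, and I expect it to be the only real obstacle.

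Because $A$ is Hermitian, $\ker(A)=\ker(\Pi)\oplus(\mathrm{im}(\Pi)\cap\mathrm{im}(\Delta))$. The vector $\Pi\Delta\ket{\psi}$ is clearly orthogonal to $\ker(\Pi)$. For $u\in\mathrm{im}(\Pi)\cap\mathrm{im}(\Delta)$ we get
\[
\langle u,\Pi\Delta\psi\rangle=\langle \Delta\Pi u,\psi\rangle=\langle u,\psi\rangle=0,
\]
using $\ket{\psi}\in\ker(\Pi)$. In finite dimension this settles it.

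\emph{$(I-\Pi)$-component.} Set $x=(I-\Delta)v$. Then $x\in\ker(\Delta)$, and the $\Pi$-component gives $\Pi x=0$, so $x\in\ker(\Pi)\cap\ker(\Delta)$ and $x=Px$. Moreover:
\begin{itemize}
\item $P\Delta=0$, so $Px=Pv$;
\item $Pw=0$ because $w\in\mathrm{im}(\Pi)$, so $Pv=P\ket{\psi}$.
\end{itemize}
Hence $x=P\ket{\psi}$. Therefore
\[
(I-\Pi)\Delta v=(I-\Pi)(v-x)=\ket{\psi}-P\ket{\psi},
\]
which is exactly the required component. Combining the two components gives $\Delta v=w+(I-P)\ket{\psi}$, and hence the transduction.
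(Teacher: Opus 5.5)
Your proof is correct, but it puts the real work in a different place than the paper does.

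Write $P=P_{\ker(\Pi)\cap\ker(\Delta)}$ and $A=\Pi-\Pi\Delta\Pi$, as you do. The paper isolates the identity $(I-\Delta)(\ket{\psi}+w)=P\ket{\psi}$ as a separate lemma. It gets that lemma in one line from a closed form for the intersection projector, $P=(I-\Delta)\left[I-A^+(I-\Delta)\right]$, which it imports from Piziak et al.\ via the Moore--Penrose identity $A^+=A^\dagger(AA^\dagger)^+$. The transduction then follows by applying $2\Delta-I$ (or any partial rotation $I-(1-e^{i\theta})(I-\Delta)$) and then $2\Pi-I$.

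You prove the same identity from scratch. Your $\Pi$-component step shows that $w$ solves the normal equations $Aw=\Pi\Delta\ket{\psi}$. This is solvable exactly because $\ket{\psi}\perp\mathrm{im}(\Pi)\cap\mathrm{im}(\Delta)$. It places $x=(I-\Delta)(\ket{\psi}+w)$ in $\ker(\Pi)\cap\ker(\Delta)$, and then $P\Delta=0$ and $P\Pi=0$ pin down $x=P\ket{\psi}$.

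What each route buys:
\begin{itemize}
\item Yours is self-contained and explains why this particular catalyst works.
\item The paper's is shorter, and it packages the identity as a lemma that immediately gives the partial-rotation version used later for fixed-point amplitude amplification.
\item Your fact $x=P\ket{\psi}$ is that same lemma, so your proof extends just as easily.
\item It also makes your separate $(I-\Pi)$-component check unnecessary, since $\Delta v=v-x=w+(I-P)\ket{\psi}$ follows at once.
\end{itemize}

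Two small points of justification:
\begin{itemize}
\item The decomposition $\ker(A)=\ker(\Pi)\oplus(\mathrm{im}(\Pi)\cap\mathrm{im}(\Delta))$ comes from $A=B^\dagger B$ with $B=(I-\Delta)\Pi$, not from Hermiticity alone. Hermiticity is what gives $\mathrm{im}(A)=\ker(A)^\perp$.
\item You use $w\in\mathrm{im}(\Pi)$ for $\Pi v=w$ and $Pw=0$. This follows from $\mathrm{im}(A^+)=\mathrm{im}(A)\subseteq\mathrm{im}(\Pi)$, and is worth stating explicitly.
\end{itemize}
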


In fact, the same transduction holds (up to a sign) if we just use the reflection $2\Delta-I$ instead of $U$.
We can apply this to electric flows by setting $\Pi = \Pi_*$, $\Delta = \Pi_+$, $\ket{\psi} = \ket{\phi_s}$ and using that $P_{\ker(\Pi_+) \cap \ker(\Pi_*)} \ket{\phi_s} \propto \ket{f}$ (\zcref{eq:proj-flow}).
In this case we show that the catalyst takes the simple form
\[
w
= \frac{1}{R_s \sqrt{d_s}} \sum_{x\neq s} v_x \sqrt{d_x} \ket{\phi_x}.
\]
The effective gap transducer thus implements the reflection
\[
\ket{\phi_s}
\overset{U}{\rightsquigarrow}
(2\ket{f}\bra{f} - I) \ket{\phi_s}
\]
with transduction complexity
\[
W
= \| w \|^2
= \frac{\ET_s}{R_s d_s}-1.
\]
At the same time, we have the trivial (yet important) transduction $\ket{f} \overset{U}{\rightsquigarrow} \ket{f}$ with a zero catalyst.

\subsection{Result 2: Estimating effective resistances}

By combining it with a reflection about $\ket{\phi_s}$, we can turn the reflection transducer into a transducer for the rotation
\[
(2\ket{\phi_s}\bra{\phi_s} - I) \cdot(2\ket{f}\bra{f} - I),
\]
whose nontrivial eigenvalues are $e^{\pm i 2 \theta}$ with
\[
\sin\theta
= |\inner{\phi_s|f}|
= \frac{1}{\sqrt{2R_s d_s}}.
\]
We can thus use quantum phase estimation to obtain an estimate of $R_s d_s$.
After balancing some terms by adding an extra edge to the graph, similarly (but not quite) to earlier works \cite{belovs2013quantum,apers2022elfs}, we obtain the following result.
It crucially uses the nice composition properties of transducers to implement powers of the rotation.

\begin{restatable}[Effective resistance estimation]{theorem}{effres} \label{thm:intro-eff-res}
Given an upper bound $\bar\ET_s \geq \ET_s$, there is a bounded-error quantum walk algorithm that $\varepsilon$-multiplicatively estimates $R_s d_s$ using
\[
O\left(\sqrt{\bar\ET_s} \left( \frac{1}{\varepsilon} + \log(R_s d_s) \right) \right)
\]
quantum walk steps in expectation.
\end{restatable}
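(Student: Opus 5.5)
The plan is to build a transducer for the rotation $V = (2\ket{\phi_s}\bra{\phi_s}-I)(2\ket{f}\bra{f}-I)$, compose it with itself to get transducers for $V^{2^j}$, and then run phase estimation on this composed transducer. The transduction-complexity bound is converted into an expected walk count only at the very end, using the conversion result of \cite{belovs2024taming}.

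\textbf{Step 1: balancing edge.} Plain phase estimation on $V$ meets two problems. The angle $\theta$ with $\sin\theta = 1/\sqrt{2R_sd_s}$ may be tiny, so its multiplicative estimation is costly. Moreover, the complexity of the reflection transducer, $\ET_s/(R_sd_s)-1$, is not of the form $\bar\ET_s$ we want. I would therefore first modify the graph. Add a new vertex $s'$ as source, joined to $s$ by a single edge of weight $w_0$.
\begin{itemize}
\item The new resistance is $R' = R_s + 1/w_0$ and the new degree is $d' = w_0$, so $R'd' = 1 + R_s d_s\cdot(w_0/d_s)$.
\item Choosing $w_0$ of order $1/R_s$ makes $R'd' = \Theta(1)$. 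Then $\theta$ is a constant, and a multiplicative estimate of $R'd'$ gives one of $R_s$ (hence of $R_sd_s$) up to constant factors.
\item Since $R_s$ is unknown, I would run an exponential search over guesses $w_0 = 2^{-k}d_s$, $k=0,1,2,\dots$. Each step uses a constant-precision phase estimate to test whether $R'd'$ has become $\Theta(1)$.
\end{itemize}
This search should produce the additive $\log(R_sd_s)$ term, with each test costing $O(\sqrt{\bar\ET_s})$.

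\textbf{Step 2: escape time of the modified graph.} The reflection transducer from the effective gap transducer (\zcref{thm:eff-gap-transducer}) on the modified graph has complexity $\ET'/(R'd')-1$. I would compute $\ET'$ from the voltages of the new network.
\begin{itemize}
\item Inside $G$ the new voltages equal the old ones; the new source sits at voltage $R'$.
\item The new vertex contributes at most $R'^2w_0$, and the $s$ term is modified by adding $w_0$ to $d_s$.
\end{itemize}
This gives $\ET' = O(\ET_s + R'w_0)$. Hence the complexity is $O(\ET_s R_s^{-1}/(R'd')+1)$, which should be $O(\bar\ET_s\cdot d_s/(R_sd_s)\cdots)$. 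Matching the claimed $\sqrt{\bar\ET_s}$ scaling requires care here. If the normalization comes out wrong, I would instead choose $w_0$ relative to $d_s$ so that the complexity bound becomes $O(\ET_s/(R_sd_s)\cdot R_sd_s)=O(\ET_s)$, and accept $\theta$ depending on the ratio.

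\textbf{Step 3: composition and phase estimation.} The reflection about $\ket{\phi_s}$ is free, since it is a public-space operation. Sequential composition of transducers (from \cite{belovs2024taming}) gives $V^{2^j}$ with complexity $2^j$ times the base complexity, in the sense that adds. Phase estimation to precision $\varepsilon\theta$ uses $V^{k}$ for $k$ up to $O(1/\varepsilon)$ in total. Its transduction complexity is the sum over calls, weighted by the squared input amplitudes. The standard conversion then yields $O(\sqrt{W_{\text{total}}\cdot T})$-type cost, with $T$ the number of calls. This is where I expect the main obstacle: I need the \emph{expected} number of walk steps to scale as $\sqrt{\bar\ET_s}/\varepsilon$ rather than $\bar\ET_s/\varepsilon^2$. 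This requires the refined time--complexity tradeoff for transducers, in which the query state's norms combine as $\sqrt{W}$ per layer. I would first try the bound $L\sqrt{W}$ with $L=O(1/\varepsilon)$ layers of the $U$-transducer, invoking the composition theorem that cost adds like $\sqrt{\sum_t W_t}\cdot\sqrt{L}$ with $W_t = O(\bar\ET_s)$. If this fails, I would fall back on applying the implementation lemma separately to each controlled power with constant error and using Las Vegas-style zero-error composition so that the errors do not accumulate.
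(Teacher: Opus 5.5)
There is a genuine gap, and it starts in Step~1: you balance the added edge in the wrong direction. With $x = R_s w_0$, the voltages of the modified network give exactly
\[
\frac{\ET'}{R'd'} = 1 + \frac{R_s^2}{R'^2} + \frac{R_s\,\ET_s}{R'^2 w_0} = 1 + \frac{x^2}{(1+x)^2} + \frac{x\,\ET_s}{(1+x)^2}.
\]
Your own Step~2 bound $\ET' = O(\ET_s + R'w_0)$ says the same: $\ET'/(R'd') = O(1 + \ET_s/(R'd'))$. So your choice $R'd' = \Theta(1)$ (constant angle) makes every reflection transducer cost order $\ET_s$. Phase estimation to multiplicative precision $\varepsilon$ then composes $\Theta(1/\varepsilon)$ rotations, for a total transduction complexity of order $\ET_s/\varepsilon$. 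For the same reason, the tests of your exponential search near the target cost order $\ET_s$, not $O(\sqrt{\bar\ET_s})$.

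The ``$\sqrt{W}$ per layer'' conversion you hope for in Step~3 is not something the transducer framework provides. \zcref{lem:trans-to-alg} gives error $2\sqrt{W/K}$ after $K$ calls, so constant error needs $K = \Theta(W)$ calls, linear in the total transduction complexity. The square-root behaviour you have in mind is that of the original effective gap lemma for a \emph{single} approximate reflection. Composing $1/\varepsilon$ of those, with accumulating errors, is exactly what produced the $1/\varepsilon^{3/2}$ and $1/\varepsilon^2$ scalings of earlier work. Your fallback (tuning $w_0$ to get complexity $O(\ET_s)$) has the same defect, so as written your scheme costs order $\bar\ET_s/\varepsilon$.

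The missing idea is to make $R'd'$ \emph{large} rather than constant. The cost of your phase estimation is of order $\frac{\ET'}{R'd'}\cdot\frac{\sqrt{R'd'}}{\varepsilon}$. For $x \geq 1$ this behaves like $\frac{1}{\varepsilon}\bigl(\sqrt{x} + \ET_s/\sqrt{x}\bigr)$, which is minimized at $x \approx \ET_s$. The small angle you wanted to avoid is exactly what buys an $O(1)$ catalyst.

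The paper accordingly attaches the edge with weight $\eta d_s$, where $\eta \approx \bar\ET_s/(R_s d_s)$. Then $\hat R_\sigma \hat d_\sigma = 1 + \eta R_s d_s \approx \bar\ET_s$, while $\hat\ET_\sigma/(\hat R_\sigma\hat d_\sigma) \leq 2 + \ET_s/(\eta R_s d_s) = O(1)$ by \zcref{lem:ET-mod-graph}. Now $\sin\theta \approx 1/\sqrt{2\bar\ET_s}$. A multiplicative $\varepsilon$-estimate needs $T = \Theta(\sqrt{\bar\ET_s}/\varepsilon)$ controlled powers of the rotation, each with only $O(1)$ transduction complexity. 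Composing them (\zcref{lem:contr-transd-comp}) and converting linearly gives $O(\sqrt{\bar\ET_s}/\varepsilon)$ walk steps.

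The search for the unknown scale works the same way, with $\eta = \tilde p\,\bar\ET_s$ and $\tilde p$ halved starting from $1$. Each test uses $T = \Theta(\sqrt{\bar\ET_s})$ rotations of $O(1)$ complexity, which gives the $\sqrt{\bar\ET_s}\log(R_s d_s)$ term.
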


For comparison, naively combining quantum phase estimation with the original effective gap lemma yields a scaling with the precision as $1/\varepsilon^2$.
Through a more careful argument, balancing out some terms, the state-of-the-art improved this scaling to $1/\varepsilon^{3/2}$ \cite{ito2019approximate,piddock2019quantum,apers2022elfs}.
We strictly improve this scaling to $1/\varepsilon$, also for the generalized problem of estimating the \emph{witness size} of a span program as considered in \cite{ito2019approximate}.
We prove that this $1/\varepsilon$-scaling is in fact optimal.

\subsection{Result 3: Preparing elfs}

The rotation $(2\ket{\phi_s}\bra{\phi_s} - I) (2\ket{f}\bra{f} - I)$ effectively implements quantum amplitude amplification, and so we can also use it to prepare the elfs state $\ket{f}$.
To obtain a high-precision algorithm, we use a variation called fixed-point amplitude amplification \cite{yoder2014fixed} that makes calls to the generalized rotation $((1-e^{-i \varphi}) \ket{\phi_s}\bra{\phi_s} - I) ((1-e^{i \phi})\ket{f}\bra{f} - I)$ for varying $\varphi,\phi$.
While the resulting transducer is no longer error-free, it does improve the error scaling for preparing elfs from polynomial to logarithmic.
Assuming a constant-factor approximation of~$R_s d_s$, we prove the following theorem.

\begin{theorem}[Informal] \label{thm:elfs-informal}
Given an upper bound $\bar\ET_s \geq \ET_s$, there exists a transducer $V$ that makes one call to the quantum walk operator and $\tO(1)$ other operations so that
\[
\ket{\phi_s}
\stackrel{V}{\leadsto} \ket{f} + O(\varepsilon)
\]
with transduction complexity $O\left(\sqrt{\bar\ET_s} \log(1/\varepsilon)\right)$.
\end{theorem}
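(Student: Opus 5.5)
Theorem~\ref{thm:elfs-informal} follows by composing the effective gap transducer (\zcref{thm:eff-gap-transducer}) with fixed-point amplitude amplification \cite{yoder2014fixed}. Throughout, transduction complexities add under sequential composition and multiply under nesting with the number of calls.

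\textbf{Step 1: balance the overlap.} Add an extra edge at $s$, as in the proof of \zcref{thm:intro-eff-res}, using the assumed constant-factor approximation of $R_s d_s$. The weight is chosen so that in the modified graph
\[
\sin\theta=|\langle\phi_s|f\rangle|=\frac{1}{\sqrt{2R_sd_s}}
\]
satisfies $\sin^2\theta=\Theta(1/\sqrt{\bar\ET_s})$. The modified flow state, restricted to the original edges, is proportional to the original $\ket{f}$, so this step changes nothing essential. I also need to check that the new $\ET_s/(R_sd_s)$ remains $O(\sqrt{\bar\ET_s})$.

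\textbf{Step 2: generalized reflections about $\ket f$.} Fixed-point amplitude amplification needs $(1-e^{i\phi})\ket f\bra f-I$, not just $2\ket f\bra f-I$. I will build these from \zcref{thm:eff-gap-transducer} with one walk call, controlled on an ancilla.

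Decompose $\ket{\phi_s}=\alpha\ket f+\ket{f^\perp}$, where $\ket{f^\perp}\in\ker(\Pi_*)$ is orthogonal to $\ker\Pi_+\cap\ker\Pi_*$.
\begin{itemize}
\item The effective gap transducer acts as $-I$ on $\ket{f^\perp}$, with catalyst $w$ of squared norm $O(\ET_s/(R_sd_s))$.
\item It acts as $+I$ on $\ket f$, with zero catalyst.
\end{itemize}
Running it controlled on a qubit in $\ket{+}$, together with single-qubit phases, lets a linear combination implement the phase $e^{i\phi}$ on the $\ket f$ branch and $1$ on the $\ket{f^\perp}$ branch, up to a global phase. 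The catalyst is the controlled copy of $w$, so its norm is unchanged.

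The reflection $(1-e^{-i\varphi})\ket{\phi_s}\bra{\phi_s}-I$ is free: it is a public-space operation because preparing $\ket{\phi_s}$ costs no walk steps.

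\textbf{Step 3: amplification and accounting.} Fixed-point amplitude amplification with $L=O(\log(1/\varepsilon)/\sin\theta)=O(\bar\ET_s^{1/4}\log(1/\varepsilon))$ rounds maps $\ket{\phi_s}$ to $\ket f+O(\varepsilon)$. This is the only source of error, since each generalized reflection is transduced exactly.

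The rounds are composed sequentially, but the $\ell$-th call acts on the intermediate state $\ket{\psi_\ell}$, not on $\ket{\phi_s}$. Its catalyst is therefore proportional to $\|P_{f^\perp}\ket{\psi_\ell}\|$ times the catalyst for $\ket{f^\perp}$ normalized. For transducers, complexities of sequentially composed steps add, each weighted by the squared norm of the non-trivially acted part of its input.

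Summing over the $L$ rounds gives
\[
O\!\left(L\cdot \frac{\ET_s}{R_s d_s}\cdot \frac{1}{\cos^2\theta}\right).
\]
To get $O(\sqrt{\bar\ET_s}\log(1/\varepsilon))$ from this, I combine $L\sim 1/\sin\theta$ with $\ET_s/(R_sd_s)\le \bar\ET_s\sin^2\theta\cdot O(1)$. The product is then $\bar\ET_s\sin\theta\log(1/\varepsilon)=O(\bar\ET_s^{3/4}\log(1/\varepsilon))$. This misses the target, so the balancing in Step~1 must be chosen differently.

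\textbf{Main obstacle.} The crux is this accounting. The catalyst for normalized $\ket{f^\perp}$ has squared norm about $\ET_s\sin^2\theta/(R_sd_s)\cdot 2R_sd_s\approx \ET_s/(R_sd_s)$ relative to the $f^\perp$ weight $1/\cos^2\theta$. I would set the extra-edge weight so that $\sin^2\theta\approx \ET_s/\ldots$, making $L\cdot W_{\text{step}}=\Theta(\sqrt{\bar\ET_s})$. This is the same balance that underlies the $\sqrt{\bar\ET_s}$ factor in \zcref{thm:intro-eff-res}.

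The remaining work is to check that the ancilla-controlled construction keeps the intermediate catalysts bounded. For this I would use the fact that fixed-point amplitude amplification keeps all intermediate states in $\mathrm{span}\{\ket{\phi_s},\ket f\}$, so each round's catalyst is a scalar multiple of the single vector $w$.
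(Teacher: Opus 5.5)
\textbf{Gap: the rebalancing.} Your overall approach is the paper's: fixed-point amplitude amplification over exactly transduced generalized reflections about $\ket{f}$, composed sequentially as in \zcref{thm:elfs-prep-fp}, on a graph rebalanced by an extra edge at $s$. The argument stops short at the step that actually produces the bound. You commit to $\sin^2\theta=\Theta(1/\sqrt{\bar{\ET}_s})$, correctly find that this gives only $O(\bar{\ET}_s^{3/4}\log(1/\varepsilon))$, and then leave the weight unspecified. Your Step~1 target of keeping the new ratio $\ET/(Rd)$ at $O(\sqrt{\bar{\ET}_s})$ is also the wrong one.

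The missing ingredient is \zcref{lem:ET-mod-graph}. Attaching $\sigma$ to $s$ with weight $\eta d_s$ gives $\hat R_\sigma\hat d_\sigma=1+\eta R_s d_s$. Using $\hat{\ET}_\sigma=\hat R_\sigma\hat d_\sigma+\eta R_s^2 d_s/\hat R_\sigma+R_s\ET_s/\hat R_\sigma$ and $\hat R_\sigma\geq R_s$, the per-round cost satisfies $\hat{\ET}_\sigma/(\hat R_\sigma\hat d_\sigma)\leq 2+\ET_s/(\eta R_s d_s)$. The number of rounds grows like $\sqrt{\eta R_s d_s}$, so the product is balanced by taking $\eta R_s d_s=\Theta(\bar{\ET}_s)$, i.e.\ $\eta=\tilde p\,\bar{\ET}_s$ with $\tilde p=\Theta(1/(R_s d_s))$ the assumed estimate. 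Then $\sin^2\theta=\Theta(1/\bar{\ET}_s)$ rather than $\Theta(1/\sqrt{\bar{\ET}_s})$, $L=O(\sqrt{\bar{\ET}_s}\log(1/\varepsilon))$, and every round costs $O(1)$. \zcref{thm:elfs-prep-fp} with $\bar p=1/(1+\bar{\ET}_s)$ then gives the claim.

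Your own estimate $L\cdot W\lesssim\bar{\ET}_s\sin\theta\log(1/\varepsilon)$ already points this way. It is valid as long as the modified escape time is $O(\bar{\ET}_s)$. Since $\hat R_\sigma\hat d_\sigma\leq\hat{\ET}_\sigma\leq 2\hat R_\sigma\hat d_\sigma+\ET_s$, this holds exactly down to $\sin^2\theta=\Theta(1/\bar{\ET}_s)$. So you should push $\sin\theta$ that far instead of stopping at $\bar{\ET}_s^{-1/4}$.

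\textbf{Secondary points.} In Step~2, one controlled call to the reflection transducer cannot give $I-(1-e^{i\phi})\ket{f}\bra{f}$ exactly. After that call the ancilla is $\ket{+}$ on the $\ket{f}$ branch and $\ket{-}$ on the $\ket{f^\perp}$ branch, so you need a second call to unmark (mark, phase, unmark). Alternatively, writing $I-(1-e^{i\phi})P=\frac{1+e^{i\phi}}{2}I+\frac{e^{i\phi}-1}{2}(2P-I)$ as a linear combination of unitaries succeeds only with probability $(|\cos(\phi/2)|+|\sin(\phi/2)|)^{-2}<1$. The extra call costs only a constant factor, but the paper avoids it. With $P=P_{\ker(\Pi_+)\cap\ker(\Pi_*)}$, \zcref{lem:proj} gives $(I-\Pi_+)(\ket{\psi}+w)=P\ket{\psi}$. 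Hence replacing $2\Pi_+-I$ by the partial rotation $I-(1-e^{i\theta})(I-\Pi_+)$ transduces exactly to $I-(1-e^{i\theta})P$ with the same catalyst (\zcref{thm:eff-gap-transducer-general,lem:elfs-rotation}).

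Finally, the extra edge is not inessential. The procedure outputs $\ket{\hat f}$, and by \zcref{eq:mod-flow} $|\inner{\hat f|f}|^2=1/(1+1/(\eta R_s d_s))\geq 1-2/\bar{\ET}_s$. This is a discrepancy of order $1/\bar{\ET}_s$ that is not controlled by $\varepsilon$, which the paper states as an explicit caveat.
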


In a second approach, we introduce a transducer version of the quantum amplitude amplification algorithm in \cite{boyer1998tight}, which is zero error but has unbounded (worst case) runtime.
Correspondingly, the transducer uses an infinite counter (similar to e.g.~\cite{belovs2025space,jeffery2025qma}) and introduces garbage (unlike \cite{belovs2025space,jeffery2025qma}).
Assuming a constant-factor approximation of $R_s d_s$, we obtain an error-free transducer $\hat V$ with an infinite counter so that 
\[
\ket{\phi_s} \ket{0}
\stackrel{\hat{V}}{\leadsto} \ket{f}\ket{\Gamma_s},
\]
where $\ket{\Gamma_s}$ is some garbage state.
The transducer $\hat V$ makes a single call to the quantum walk operator and has transduction complexity $O\left(\sqrt{\ET_s}\right)$.

\subsection{Result 4: Elfs process}
\label{sec:elfs-intro}
As our next contribution (and initial motivation), we compose our elfs transducer to simulate the elfs process, as introduced by Apers and Piddock \cite{apers2022elfs}:
\begin{quote}
\textbf{Elfs process:}\newline
From an initial source vertex $s$ and a fixed sink $M \subseteq V$, repeat the following:
\begin{enumerate}
\item
Let $f$ denote the electric flow from source $s$ to sink $M$.
Sample an edge $e$ with probability proportional to $f_e^2/w_e$.
\item
Pick a random endpoint $x$ from $e$ and set the source $s = x$.
\end{enumerate}
The process ends when step 2.~picks a vertex $x \in M$.
\end{quote}

Denoting by $\{Y_0=s,Y_1,\dots,Y_\rho \in M\}$ the Markov chain corresponding to the sources picked in the elfs process, we define the electric hitting time $\EHT_s = \mathbb{E}[\rho]$ as the expected time before the elfs hit the sink and the process terminates. 
Somewhat ethereally, it is shown in \cite{apers2022elfs} that the final sink vertex follows the arrival distribution of a random walk $\{X_0=s,X_1,\dots,X_\tau \in M\}$ from source $s$ to sink $M$.
Even stronger, we can couple the elfs process and the random walk by introducing random stopping times $0 < \nu_1 < \dots < \nu_\rho = \tau$ such that $Y_i = X_{\nu_i}$, as is illustrated in \zcref{fig:intro-coupling}.
The expected number of random walk steps between the elfs turns out to be given by the aforementioned escape time,
\[
\mathbb{E}[\nu_{i+1} - \nu_i] = \ET_{Y_i}/2,
\]
from which we get that
\[
\mathbb{E}\left[\sum_{i=0}^{\rho-1} \ET_{Y_i}\right]
= 2 \HT_s.
\]
This clearly implies that $\EHT_s \leq 2\HT_s$ and this difference can be exponential, e.g.~on an $n$-vertex path graph, in which case $\HT_s \in \Theta(n^2)$ while $\EHT_s \in \Theta(\log n)$.

\begin{figure}[htb]
\centering
\includegraphics[width=.6\textwidth]{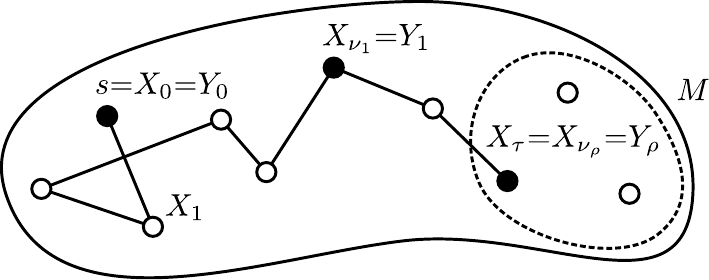}
\caption{Coupling between a random walk and elfs process through stopping rules $\nu_1<\dots<\nu_\rho$. Figure from \cite{apers2022elfs}.}
\label{fig:intro-coupling}
\end{figure}

The interest in this process is that, by using quantum walks to prepare elfs, it yields a way of quantumly sampling from the random walk arrival distribution, and with it find a quantum speedup for its many applications \cite{broder1989generating,zhu2003semi,chung2013solving,linden2022quantum}.
In an ideal implementation we would charge $\sqrt{\ET_x}$ for an electric flow sample from $x$, in which case the expected complexity would be
\[
\mathbb{E}\left[\sum_{i=0}^{\rho-1} \sqrt{\ET_{Y_i}}\right]
\in O(\sqrt{\EHT_s \HT_s}),
\]
where the bound follows from Jensen's inequality.
This can yield up to a quadratic speedup over a random walk, and it was the main motivation for studying this process in \cite{apers2022elfs}.
Crucially, however, the elfs process may require composing up to $\poly(|V|)$ many elfs, and in that case the overhead from using the existing low-precision algorithms washes away the quantum speedup.
In particular, the complexity in \cite{apers2022elfs} is $\tO\big(\sqrt{\EHT_s^3 \HT_s}\big)$, and this can be significantly bigger than~$\HT_s$.

Leaving it as maybe the main open question in \cite{apers2022elfs}, it was conjectured that there exists a more efficient way of composing elfs.
We resolve this question in the positive by using our elfs transducer and exploiting the elegant composition properties of transducers.
In one particular implementation, we assume estimates of $R_x d_x$ and $\ET_x$ to obtain the following theorem.

\begin{theorem}[Informal] \label{thm:intro-elfs-process}
There exists an (infinite-dimensional) transducer $\tilde V$, making one call to the quantum walk operator, that samples from the random walk arrival distribution with transduction complexity
\[
O\left(\sqrt{\EHT_s \HT_s}\right).
\]
\end{theorem}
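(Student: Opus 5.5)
The plan is to build $\tilde V$ by composing, along the Markov chain $Y_0=s,Y_1,\dots,Y_\rho$, the per-source elfs transducers of \zcref{thm:elfs-informal}, or preferably the error-free version $\hat V$. The key tool is the composition property of transducers from \cite{belovs2024taming}. If $\ket{\xi}\oplus w_1 \mapsto \ket{\xi'}\oplus w_1$ and the second stage acts on $\ket{\xi'}$ with catalyst $w_2$, the composed transducer has catalyst $w_1\oplus w_2$, so the complexities add. For superpositions, complexities add weighted by the squared amplitudes. Since both the input and the output public states are normalized, a sequential composition over a random number of rounds therefore has complexity $\mathbb{E}[\sum_i W_i]$, where the expectation is over the process. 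This linearity in expectation is what should avoid the accumulation of precision costs that afflicted \cite{apers2022elfs}.

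\textbf{Public and private registers.} The public space is $\ell(V)\otimes\ell(\text{history})$, which is infinite-dimensional because of the unbounded counter and history registers. One round acts on a source register $\ket{x}$ with $x\notin M$ as follows:
\begin{enumerate}
\item Prepare $\ket{\phi_x}$, which costs no walk steps.
\item Apply the elfs transducer for source $x$ to obtain $\ket{f^{(x)}}\ket{\Gamma_x}$.
\item Coherently "measure" the edge $(y,z)$ by copying it into a fresh history register.
\item Choose a random endpoint by attaching $\tfrac{1}{\sqrt2}(\ket{y}+\ket{z})$ as the new source.
\end{enumerate}
Controlled on whether the source lies in $M$, the round acts as the identity. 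Because every vertex uses a different walk operator $U^{(x)}$, the rounds are implemented as a single controlled operator $\sum_x \ket{x}\bra{x}\otimes U^{(x)}$, which is one call to a (source-controlled) quantum walk. The catalyst for round $i$ is conditioned on the history branch $\ket{Y_0,\dots,Y_i}$, with squared norm $O(\sqrt{\ET_{Y_i}})$, using the complexity of $\hat V$ and estimates of $\ET_x$ to pick the amplification schedule.

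\textbf{Infinite iteration.} The process has unbounded length, so I would use the infinite-counter construction of \cite{belovs2025space,jeffery2025qma}: a single unitary $\tilde V$ applied iteratively, with the transduction defined as the limit. The total catalyst is the orthogonal direct sum of all per-branch catalysts. The branches are orthogonal because of the history registers, so the norm is exactly
\[
\sum_{\text{histories}} \Pr[\text{history}]\sum_{i<\rho}\sqrt{\ET_{Y_i}}
=\mathbb{E}\Big[\sum_{i=0}^{\rho-1}\sqrt{\ET_{Y_i}}\Big].
\]
By Cauchy--Schwarz/Jensen, this is at most $\sqrt{\mathbb{E}[\rho]\,\mathbb{E}[\sum_i \ET_{Y_i}]}=\sqrt{2\EHT_s\HT_s}$, using the coupling identity $\mathbb{E}[\sum_i \ET_{Y_i}]=2\HT_s$ from \cite{apers2022elfs}. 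The output's final-vertex marginal is the arrival distribution, by the same result.

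\textbf{Main obstacle.} I expect the hardest step to be the convergence and well-definedness of this infinite composition. Finiteness of the expected catalyst norm must imply that the infinite direct sum converges, and the output state (a superposition over terminated histories) must be normalized. This requires $\rho<\infty$ almost surely, with the unterminated mass vanishing; I would argue this from $\mathbb{E}[\rho]=\EHT_s<\infty$.

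A secondary issue is that if one uses the approximate $V$ of \zcref{thm:elfs-informal}, the per-round errors $\varepsilon$ add in expectation to $\EHT_s\varepsilon$. In that case I would choose $\varepsilon\sim 1/\EHT_s$, paying only a log factor. Using $\hat V$ avoids this entirely, at the cost of garbage registers, which are harmless here since we only need a sample.
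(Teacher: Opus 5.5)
Your proposal is correct and follows essentially the paper's proof. The paper likewise composes the exact elfs transducer of \zcref{thm:exact-elfs} round by round through the infinite-counter composition of \zcref{lem:contr-transd-comp}, records each sample in a fresh history register as a deferred measurement, bounds the transduction complexity by a quantity of order $\mathbb{E}\big[\sum_t \sqrt{\bar\ET_{Y_t}}\big]$, and concludes by Cauchy--Schwarz/Jensen with $\mathbb{E}\big[\sum_t \ET_{Y_t}\big] = 2\HT_s$. The details you gloss over are all harmless. First, the exact elfs transducer actually outputs the flow $\ket{\hat f}$ on the modified graph $\hat G$, and needs a constant-factor estimate of $R_x d_x$ to choose $\eta$; the paper treats the extra edge as a self-loop of probability $O(1/\ET_{y_t}) \le 1/2$, which at most doubles the hitting time and leaves the arrival distribution unchanged. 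Second, the counter composition also places the intermediate states in the catalyst, which adds an $O(\EHT_s)$ term that gets absorbed. Third, your endpoint coin is unnecessary, because measuring the first register of the antisymmetric $\ket{f}$ already gives the elfs transition law.
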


\subsection{Result 5: Elfs and learning on expanders} \label{sec:elfs-learning}

We show that on expander graphs we can get uniform bounds on all relevant quantities as a function of the number of vertices $n$ and the size of the sink $m = |M|$.
Turning the above transducer into a (finite-dimensional) algorithm then yields a quantum algorithm that samples from a distribution $\varepsilon$-close to the random walk arrival distribution in
\[
O(\sqrt{n}/\varepsilon^2)
\]
quantum walk steps.
When $m \leq \sqrt{n}$ this is an up-to-quadratic speedup over a random walk, whose hitting time $\HT_s$ scales as $n/m$.

Demonstrating the utility of this quantum algorithm, we can use it to implement the semi-supervised learning algorithm from Zhu, Ghahramani and Lafferty \cite{zhu2003semi} on expander graphs.
In its simplest form, $n$ data points are represented by the vertices $V$ of a graph $G$, with the edges~$E$ capturing similarity between the data points.
A subset $M \subseteq V$ of the data points (typically with $m = |M| \ll n$) is labeled, with binary labels $\{b_x\}_{x \in M}$.
The task is to assign a label to an unlabeled vertex $s \in V \backslash M$.
The idea in \cite{zhu2003semi} is to assign to $s$ the number
\[
\tilde b_s
= \sum_{x \in M} p_x b_x
\]
with $p$ the arrival distribution of a random walk from $s$ to $M$, and to then round this number to obtain a label for $s$.
Estimating $\tilde b_s$ using a random walk requires $\HT_s \in O(n/m + \log(n))$ many steps.
Using our quantum algorithm, we can estimate $\tilde b_s$ using $O(\sqrt{n})$ quantum walk steps, yielding an up-to-quadratic quantum speedup.
For comparison, while existing quantum walk search algorithms \cite{belovs2013quantum,piddock2019quantum,apers2019unified} can find an element in $M$ from $s$ in $O(\sqrt{\CT_s}) \in O(\sqrt{n})$, these algorithms have no control over the distribution of the returned element.

\subsection{Discussion and open questions}

We used the recent theory of transducers \cite{belovs2024taming} to push forward the algorithmic theory of quantum walks and electric networks.
In particular, we showed that there exist zero-error transducers for reflecting around and preparing elfs.
Composing these leads to strictly improved quantum walk algorithms for estimating effective resistances and for solving sampling and learning tasks on graphs.
As the key technical tool, we established a zero-error transducer version of the effective gap lemma.
We anticipate this to have broader implications --- e.g., we use it for improved witness size estimation of span programs.
Towards improving our work, the following are critical questions:

\begin{itemize}
\item
Is there a zero-error transducer for quantum phase estimation and amplitude estimation?
This would yield a zero-error transducer for estimating effective resistances, and would make the elfs process transducer in \zcref{thm:algo for approx elfs process} exact.
A promising approach would be to try and build on the zero-error transducer for majority voting in \cite{belovs2025space}.
\item
Our elfs process transducer in \zcref{thm:intro-elfs-process} assumes upper bounds on the escape time for all source vertices that we sample.
Can we avoid these?
Ideally we would only require an upper bound on the (electric) hitting time from the initial source to the sink.
\item
The elfs process transducer (\zcref{thm:intro-elfs-process}) effectively simulates deferred measurements of the elfs, which implies that the space complexity scales with the number of elfs composed.
This contrasts with the approach in \cite{apers2022elfs}, where the space complexity is essentially that of the quantum walk operator (though the quantum speedup is smaller).
\end{itemize}

\noindent
Towards extending our work, we leave the following for future work:
\begin{itemize}
\item
Aside from learning on graphs, further applications of sampling from the random walk arrival distribution include solving heat equations \cite{chung2013solving,linden2022quantum} and sampling random spanning trees \cite{broder1989generating}.
Can we use our elfs process transducer to find a quantum speedup for these?
\item
Finally, even if we do not care about the arrival distribution, the elfs process can speed up search on graphs.
A long line of work \cite{szegedy2004quantum,magniez2007search,krovi2016quantum,ambainis2020quadratic,piddock2019quantum} established a quadratic speedup over the random walk hitting time when starting from a quantum version of the stationary distribution.
It remains a very interesting open question whether quantum walks can generally speed up the hitting time from a single vertex.
In~\cite{apers2022elfs} this question was resolved on trees using the elfs process.
We hope that our elfs process transducer can help to resolve the question on other graphs as well.
\end{itemize}

\section{Transducers and the effective gap lemma}

In this section we introduce transducers and prove our error-free transducer version of the effective gap lemma (\zcref{thm:eff-gap-transducer}).

\subsection{Transducers}

We use the following definition from \cite{belovs2024taming}.

\begin{definition}
    Let $S$ be a unitary that acts on a direct sum of two vector spaces $\mathcal{H} \oplus \mathcal{L}$.
    For every $\xi \in \mathcal{H}$, there exist $\tau=\tau(S, \xi) \in \mathcal{H}$ and $w=w(S, \xi) \in \mathcal{L}$ such that
    \[
    S: \xi \oplus w \mapsto \tau \oplus w.
    \]
    Moreover, the vector $\tau \in \mathcal{H}$ is uniquely determined by $\xi$ and $S$, and the mapping $\xi \mapsto \tau$ is unitary.
\end{definition}

We say that $S$ transduces $\xi$ into $\tau$, and denote this by $\xi \stackrel{S}{\leadsto} \tau$.
The vector $w$ is called the catalyst of this transduction, and $W(S, \xi)=\|w(S, \xi)\|^2$ is the transduction complexity of $S$ on $\xi$.
We call $\mc H$ the \emph{public space} of $S$, and we denote by $S\DownTransduce_{\mc H}$ the unitary \emph{transduction action} of $S$ on public space~$\mc H$.

The notion of a transducer is motivated by the following lemma, which shows that a transducer can be turned into a bounded-error quantum algorithm.

\begin{lemma}[Theorem 5.5 of \cite{belovs2024taming}] \label{lem:trans-to-alg}
    Let $K\in\mb N$.
    For every transducer $S: \mathcal{H} \oplus \mathcal{L} \rightarrow \mathcal{H} \oplus \mathcal{L}$ and initial state $\ket{\xi} \in \mc H$, there exists a quantum algorithm that transforms $\ket{\xi}$ into $\ket{\tau^{\prime}}$ such that
    \[
    \left\|\ket{\tau^{\prime}}-\ket{\tau(S, \xi)}\right\| \leqslant 2 \sqrt{\frac{W(S, \xi)}{K}}.
    \]
    The algorithm uses $K$ controlled applications of $S$ and $O(K)$ other elementary operations.
\end{lemma}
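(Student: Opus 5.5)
The plan is to reduce the lemma to one perturbation estimate. We build a unitary $A$, made of $K$ controlled calls to $S$ plus counter bookkeeping. It should map $\xi \oplus c$ exactly to $\tau \oplus c'$, where the ``diluted'' catalysts satisfy $\|c\|,\|c'\| \le \sqrt{W/K}$. The algorithm then simply runs $A$ on $\xi \oplus 0$, which needs no knowledge of $w$, and outputs the public part. By unitarity and the triangle inequality,
\[
\|A(\xi\oplus 0) - \tau\oplus 0\|
\le \|A(\xi\oplus 0) - A(\xi\oplus c)\| + \|\tau\oplus c' - \tau \oplus 0\|
\le \|c\| + \|c'\| \le 2\sqrt{W(S,\xi)/K}.
\]
This is exactly the claimed bound. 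Projecting onto (or discarding the private part back to) the public register does not increase the distance, so $\ket{\tau'}$ satisfies the statement. The work therefore lies entirely in constructing $A$.

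For $A$ we use a counter register $\ket{t}$, $t\in\{0,\dots,K-1\}$, and enlarge the private space to $\mathcal L\otimes \mathbb C^K$. We place a $1/K$-fraction of the catalyst in each counter slot, $c = \frac1K\sum_{t} w\otimes\ket{t}$, so that $\|c\|^2 = K\cdot W/K^2 = W/K$.

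The construction is a ``conveyor belt''. At each of the $K$ steps, $A$ applies $S$ to the public register together with the private slot currently indexed by the counter, and then cyclically shifts the counter. The public register is first spread as $\xi\otimes\frac{1}{\sqrt K}\sum_t\ket t$ and unspread at the end (a counter-controlled uncomputation of the Fourier/uniform state).

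The intended invariant is that after each step the piece of the state sitting in slot $t$ is again of the form ``public output $\oplus$ $w/K$''. The defining identity $S(\xi\oplus w)=\tau\oplus w$, used by linearity on the scaled pieces $\xi/\sqrt K \oplus w/K$, then makes the total action on $\xi\oplus c$ equal $\tau\oplus c'$. Here $c'$ is a relabelled copy of $c$, so $\|c'\|=\|c\|$. Each of the $K$ steps uses one controlled application of $S$ and $O(1)$ counter operations, giving $K$ calls and $O(K)$ other gates.

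The main obstacle is making this invariant exactly true. A naive loop that applies $S$ repeatedly to the same public register would compose the public action $\xi\mapsto\tau$ $K$ times instead of once. The counter must therefore be arranged so that each branch of the uniform superposition sees $S$ exactly once, on a private slot that still holds $w/K$. The remaining applications must act on the ``already processed'' part as the identity, or as the inverse of the public-space bookkeeping.

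My first attempt is to make the controlled-$S$ at step $j$ fire only on the counter branch $t=j$. Then each branch undergoes exactly one genuine transduction, and the public action is that of $S\DownTransduce_{\mathcal H}$ applied once per branch. The uniform-superposition unspreading then recombines the branches coherently. Verifying that the catalyst bookkeeping closes up with the weights $1/\sqrt K$ on the public part and $1/K$ on the private part, and not $1/\sqrt K$ on both (which would give no improvement), is the delicate calculation I expect to take most of the effort.
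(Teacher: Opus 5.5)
Your outer reduction is correct and matches how the cited theorem of \cite{belovs2024taming} obtains $2\sqrt{W/K}$. You build $A$ with $A(\xi\oplus c)=\tau\oplus c'$ exactly, run it on $\xi\oplus 0$, and pay $\|c\|+\|c'\|$. The gap is in the construction of $A$. This is not a calculation still to be done; it is a step that fails. The transduction identity only controls $S$ on multiples $\alpha(\xi\oplus w)$ with the \emph{same} scalar on both summands. On a slot holding $\frac{1}{\sqrt K}\xi\oplus\frac1K w$ you get
\[
S\Bigl(\tfrac{1}{\sqrt K}\xi\oplus\tfrac{1}{K}w\Bigr)=\tfrac{1}{\sqrt K}(\tau\oplus w)-\Bigl(\tfrac{1}{\sqrt K}-\tfrac{1}{K}\Bigr)S(0\oplus w),
\]
so each slot picks up an uncontrolled term of norm $(\frac{1}{\sqrt K}-\frac1K)\sqrt W$. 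In the version where each slot meets $S$ once, the claimed identity $A(\xi\oplus c)=\tau\oplus c'$ is therefore off by $(1-\frac{1}{\sqrt K})\sqrt W$, which does not decay with $K$. More generally, any scheme in which each branch owns its own piece of catalyst forces that piece to be $\frac{1}{\sqrt K}w$, which gives $\|c\|^2=W$. This is the dead end you anticipated, and no bookkeeping avoids it.

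The missing idea is that the catalyst comes back unchanged, so it can be \emph{shared} across branches rather than split among them. Take a single private vector $c=\frac{1}{\sqrt K}w$ that is not labelled by the branch, and spread the input to $\frac{1}{\sqrt K}\sum_t\ket{t}\otimes\xi$. At step $j=0,\dots,K-1$, apply $S$ on $(\ket{j}\otimes\mathcal H)\oplus\mathcal L$ and the identity on the other public branches. Because $S\bigl(\frac{1}{\sqrt K}\xi\oplus\frac{1}{\sqrt K}w\bigr)=\frac{1}{\sqrt K}\tau\oplus\frac{1}{\sqrt K}w$, branch $j$ is transduced exactly once and the same catalyst is handed on to branch $j+1$. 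This also settles your worry about composing $\xi\mapsto\tau$ several times. After $K$ steps the state is $\frac{1}{\sqrt K}\sum_t\ket{t}\otimes\tau\oplus\frac{1}{\sqrt K}w$, and unspreading gives $\tau\oplus c$ with $\|c\|^2=W/K$. The factor $1/K$ comes from reusing one catalyst $K$ times, not from diluting $K$ copies. In your conveyor-belt language, only one slot carries catalyst, and the counter shift moves that single slot past each public branch in turn. Each step is one controlled call to $S$ plus $O(1)$ counter operations, which gives the stated resource count.
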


We will repeatedly invoke transducer composition.
In the most general case that we will require, we consider a potentially infinite family of transducers $S_0, S_1, \dots, S_{m-1}$ (with $m \in \mathbb{N} \cup \{\infty\}$) on the same public space $\mathcal{H}$, with $\mathcal{H} = \mathcal{H}_0 \oplus \mathcal{H}_1$.
We are interested in the sequence of transductions
\begin{align*}
	\begin{aligned}
		&\psi_{0,0}\;\overset{S_0}{\rightsquigarrow}\;\\[-7pt]
		&\;\\[-7pt]
		&\;
	\end{aligned}
	\left\{
	\begin{aligned}
	&\psi_{1,0}\;\overset{S_1}{\rightsquigarrow}\;\\[-7pt]
	&\;{\scriptscriptstyle+}\\[-7pt]
	&\psi_{1,1}
\end{aligned}
	\right.
	\left\{
\begin{aligned}
	&\psi_{2,0}\;\overset{S_2}{\rightsquigarrow}\;\\[-7pt]
	&\;{\scriptscriptstyle+}\\[-7pt]
	&\psi_{2,1}
\end{aligned}
\right.
\begin{aligned}
	&\ldots\;\\[-7pt]
	&\;\\[-7pt]
	&\;
\end{aligned}
\end{align*}
where $\psi_{t,i} \in \mathcal{H}_i$ and $\psi_{t,0} \overset{S_t}{\rightsquigarrow} \psi_{t+1,0} + \psi_{t+1,1}$ with catalyst $w_t$ and transduction complexity $W(S_t,\psi_{t,0}) \eqcolon \|\psi_{t,0}\|^2 \cdot W_t$.
We can compose these into one global transducer using the following lemma.
It is a variation of \cite[Proposition 9.9]{belovs2024taming}, which covers the case when $m$ is finite and $\psi_{t,1} = 0$ for all $t$.
We use the notation $T_{c,m}$ to denote the cost of increasing and conditioning operations on a length-$(m+1)$ counter.
If $m$ is finite then $T_{c,m} \in O(\log m)$.

\begin{lemma}[Transducer composition] \label{lem:contr-transd-comp}
Let $m \in \mathbb{N} \cup \{\infty\}$.
There exists a transducer $S$ with a counter of length $m+1$ that transduces $\psi_{0,0}$ into the final state $\tau = \sum_{t=1}^m \ket{t} \otimes \psi_{t,1}$, with transduction complexity and time complexity
    \[
    W(S, \xi)=W_0+ \sum_{t=1}^{m-1} \|\psi_{t,0}\|^2 (1+W_t), \qquad
    T(S) = T_{c,m} + \max_{0 \leq t < m} T(S_t).
    \]
\end{lemma}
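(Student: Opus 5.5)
We construct $S$ explicitly on the public space $\mathcal H \otimes \mathbb{C}^{m+1}$, with the counter register $\ket{t}$ recording how many transducers have been applied. The plan follows the proof of \cite[Proposition 9.9]{belovs2024taming}.

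\textbf{Construction.} Define $S$ as the product of two operations. The first is a controlled application
\[
C = \sum_t \ketbra{t}{t} \otimes S_t,
\]
acting on the $\mathcal{H}_0$-part only, with each $S_t$ using its own private space $\mathcal L_t$. The second is a controlled increment of the counter, applied only when the register lies in $\mathcal H_0$; the $\mathcal H_1$-components are left untouched. These two operations cost $\max_t T(S_t)$ and $T_{c,m}$, which gives the stated time bound. The public space is $\ket{0}\otimes\mathcal H_0$ for the input together with $\bigoplus_{t\ge 1}\ket{t}\otimes\mathcal H_1$ for the outputs. The private space is $\bigoplus_{t\ge1}\ket{t}\otimes\mathcal H_0$, which holds the intermediate states, plus $\bigoplus_t \mathcal L_t$ for the individual catalysts.

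\textbf{Catalyst and invariance check.} Take the catalyst
\[
w=\sum_{t=1}^{m-1}\ket{t}\otimes\psi_{t,0} \;\oplus\; \bigoplus_{t=0}^{m-1} \|\psi_{t,0}\|\, w_t',
\]
where $w_t'$ is the catalyst of $S_t$ on the normalized input $\psi_{t,0}/\|\psi_{t,0}\|$. By linearity, the catalyst for $\psi_{t,0}$ itself is $\|\psi_{t,0}\| w_t'$, with squared norm $\|\psi_{t,0}\|^2W_t$. Then check that $S(\ket{0}\psi_{0,0}\oplus w)=\tau\oplus w$. For each $t$, $C$ maps $\ket{t}\psi_{t,0}\oplus\|\psi_{t,0}\|w_t'$ to $\ket{t}(\psi_{t+1,0}+\psi_{t+1,1})\oplus\|\psi_{t,0}\|w_t'$. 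The increment then moves this to $\ket{t+1}\psi_{t+1,0}+\ket{t+1}\psi_{t+1,1}$; the counter and the operator $C$ need a convention so that the $\mathcal H_1$ part lands in $\ket{t+1}$, which can be arranged by incrementing before splitting off. The $\ket{t+1}\psi_{t+1,0}$ terms reconstitute the private part of $w$, telescoping, while the $\ket{t+1}\psi_{t+1,1}$ terms form $\tau$. Summing gives
\[
\|w\|^2=\sum_{t\ge1}\|\psi_{t,0}\|^2+\sum_t\|\psi_{t,0}\|^2W_t = W_0+\sum_{t=1}^{m-1}\|\psi_{t,0}\|^2(1+W_t),
\]
using $\|\psi_{0,0}\|=1$.

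\textbf{Main obstacle.} The delicate point is $m=\infty$. There, $w$ must actually lie in the Hilbert space, so we restrict to the case where the bound is finite; otherwise the statement is vacuous. We must also check that the telescoping identity holds as a genuinely convergent sum. Since $C$ and the increment are bounded operators and each partial sum is mapped correctly, the identity passes to the limit by continuity. Uniqueness of $\tau$ and unitarity of $\xi\mapsto\tau$ then follow from the general transducer theory in \cite{belovs2024taming}, which applies to infinite-dimensional spaces. If that result were unavailable, we would argue that the transduction action on $\xi$ is the strong limit of the finite-$m$ transductions obtained by truncating the counter at $m$.
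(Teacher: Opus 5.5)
Your proposal is essentially the paper's proof: $S=V_c\cdot\sum_t\ket{t}\bra{t}\otimes S_t$ with catalyst $\ket{0}\otimes w_0+\sum_{t=1}^{m-1}\ket{t}\otimes(\psi_{t,0}+w_t)$, where your $\mathcal L_t$ is $\ket{t}\otimes\mathcal L$. The norm count and time bound are the same, and your remarks on $m=\infty$ add care the paper omits.

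The one wrinkle is that you condition the increment on $\mathcal H_0$. The paper conditions it on the whole public space $\mathcal H=\mathcal H_0\oplus\mathcal H_1$ of the $S_t$ (i.e.\ on not lying in $\mathcal L$), which places $\psi_{t+1,1}$ at $\ket{t+1}$ directly and makes your ``increment before splitting off'' patch unnecessary. Your literal choice would only give the relabeled output $\sum_t\ket{t-1}\otimes\psi_{t,1}$. Also, $C$ should simply act as $S_t$ on all of $\mathcal H\oplus\mathcal L$ in branch $t$, not on ``the $\mathcal H_0$-part only''.
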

\begin{proof}
Let $V_c$ be the operation that acts as $\sum_{t}\ket{t+1}\bra{t}$ (increasing the counter) on the first register conditionally on the state of the second register lying in the public space $\mathcal{H}$ of the $S_t$'s, and as the identity otherwise.
Then the unitary
\[
S
= V_c \cdot \sum_{t=0}^{m-1} \ket{t}\bra{t} \otimes S_t
\]
transduces $\ket{0} \otimes \psi_{0,0} \overset{S}{\rightsquigarrow} \sum_{t=1}^m \ket{t} \otimes \psi_{t,1}$ with catalyst
$
w = \ket{0}\otimes w_0 + \sum_{t=1}^{m-1} \ket{t} \otimes (\psi_{t,0}+w_t)
$
and the claimed transduction complexity.
The time complexity of $S$ is determined by the cost $T_{c,m}$ of increasing and conditioning on the counter, and the cost $\max_{0 \leq t < m} T(S_t)$ of doing a single, parallel controlled call to the transducers $S_t$.
\end{proof}

\begin{remark}[Infinite dimensions]
Similar to some constructions in \cite{belovs2024taming,belovs2025space,jeffery2025qma}, the resulting transducers might use an infinite counter as a control, and (unlike \cite{belovs2024taming,belovs2025space,jeffery2025qma}) we will even construct a transducer that uses an infinite array of qubits.
The key point is that the actual quantum algorithm derived from the transducer (\zcref{lem:trans-to-alg}) only makes $K$ calls to the transducer $S$, and that $S^k$ applied to the initial state $\psi_{0,0}$ will only touch on the first $\tO(K)$ states of the counter and the first $\tO(K)$ many qubits of the array.
We can thus truncate these spaces, and the resulting quantum algorithm will be finite-dimensional.
See \cite[Section 4.1]{belovs2025space} for a more detailed discussion.
\end{remark}

\subsection{Effective gap transducer}

Here we restate the theorem from the introduction:
\effgap*

Equivalently, we have that
\[
U\DownTransduce_{\ker\Pi}
= 2P_{\ker(\Pi) \cap \ker(\Delta)} - I.
\]
The theorem essentially follows from the following lemma.

\begin{lemma} \label{lem:proj}
If $\ket{\psi} \in \ker(\Pi)$ and $w = (\Pi-\Pi \Delta \Pi)^+ \Delta \ket{\psi}$ then $(I-\Delta) (\ket{\psi}+w) = P_{\ker(\Pi) \cap \ker(\Delta)} \ket{\psi}$.
\end{lemma}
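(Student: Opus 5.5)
The plan is to show that the vector $u := (I-\Delta)(\ket{\psi}+w)$ has three properties: it lies in $\ker(\Delta)$, it lies in $\ker(\Pi)$, and $\ket{\psi}-u$ is orthogonal to $\ker(\Pi)\cap\ker(\Delta)$. Together these characterize $u = P_{\ker(\Pi)\cap\ker(\Delta)}\ket{\psi}$. Membership in $\ker(\Delta)$ is immediate, since $\Delta(I-\Delta)=0$.

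For the orthogonality, note that $\ket{\psi}-u = \Delta(\ket{\psi}+w) - w$. Take any $\ket{z}\in\ker(\Pi)\cap\ker(\Delta)$. Then $\bra{z}\Delta = 0$ because $\Delta$ is self-adjoint and $\Delta\ket{z}=0$. Also $\langle z|w\rangle = 0$, because $w\in\mathrm{im}(\Pi)$: the pseudoinverse of the operator $\Pi-\Pi\Delta\Pi = \Pi(I-\Delta)\Pi$, whose range lies in $\mathrm{im}(\Pi)$, has range equal to that operator's range. So $\ket{\psi}-u \perp \ker(\Pi)\cap\ker(\Delta)$.

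The main step is $\Pi u = 0$. Write $A := \Pi(I-\Delta)\Pi$. This is a positive semidefinite operator, it is supported on $\mathrm{im}(\Pi)$, and $\Pi w = w$. Using $\Pi\ket{\psi}=0$, we get
\[
\Pi u = -\Pi\Delta\ket{\psi} + \Pi(I-\Delta)\Pi w = -\Pi\Delta\ket{\psi} + A A^+ \Delta\ket{\psi}.
\]
Here $AA^+ = P_{\mathrm{im}(A)}$, and since $\mathrm{im}(A)\subseteq\mathrm{im}(\Pi)$ we have $AA^+\Pi = AA^+$. So it suffices to show that $\Pi\Delta\ket{\psi}\in\mathrm{im}(A)$. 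Equivalently, $\Pi\Delta\ket{\psi}\perp\ker(A)$, because $A$ is self-adjoint.

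To prove this, take $\ket{y}\in\ker(A)$. Then $0=\langle y|A|y\rangle = \|(I-\Delta)\Pi y\|^2$, so $\Pi\ket{y}\in\ker(I-\Delta)=\mathrm{im}(\Delta)$. Hence
\[
\langle y|\Pi\Delta|\psi\rangle = \langle \Delta\Pi y|\psi\rangle = \langle \Pi y|\psi\rangle = \langle y|\Pi|\psi\rangle = 0,
\]
using $\Pi\ket{\psi}=0$. This gives $\Pi u=0$.

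Combining the three properties: $u\in\ker(\Pi)\cap\ker(\Delta)$ and $\ket{\psi}-u\perp\ker(\Pi)\cap\ker(\Delta)$, so $u=P_{\ker(\Pi)\cap\ker(\Delta)}\ket{\psi}$. The main obstacle is the range argument $\Pi\Delta\ket{\psi}\in\mathrm{im}(A)$. In infinite dimensions one would additionally need the range of $A$ to be closed for $A^+$ to be bounded; I would assume finite dimensions here, as is implicit in the statement.
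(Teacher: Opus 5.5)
Your proof is correct, and it follows a different route from the paper's.

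\textbf{The paper's route.} The paper never checks the defining properties of the orthogonal projection. It quotes the closed-form identity $P_{\ker(\Pi)\cap\ker(\Delta)} = (I-\Delta) - (\Pi(I-\Delta))^+\Pi(I-\Delta)$ from Piziak et al. Using $A^+ = A^\dagger(AA^\dagger)^+$, it rewrites this as
\[
P_{\ker(\Pi)\cap\ker(\Delta)} = (I-\Delta)\bigl[I-(\Pi-\Pi\Delta\Pi)^+(I-\Delta)\bigr].
\]
It then applies this identity to $\ket{\psi}$. The only input needed is $(\Pi-\Pi\Delta\Pi)^+\ket{\psi}=0$, which holds because $\ket{\psi}\in\ker(\Pi)$. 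This is essentially a one-line argument and gives an operator identity valid on the whole space. However, the real content sits inside the cited formula.

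\textbf{Your route.} Your argument is self-contained. You verify three things:
\begin{itemize}
\item $u\in\ker(\Delta)$;
\item $\ket{\psi}-u\perp\ker(\Pi)\cap\ker(\Delta)$, using $w\in\mathrm{im}(\Pi)$;
\item $\Pi u=0$.
\end{itemize}
The third step rests on the range argument $\Pi\Delta\ket{\psi}\in\mathrm{im}(A)$. Equivalently, $w=A^+\Pi\Delta\ket{\psi}$ solves $Aw=\Pi\Delta\ket{\psi}$ exactly, not just in the least-squares sense. That is precisely the nontrivial fact hidden in the cited identity.

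\textbf{What your decomposition buys.} It lines up with the two requirements of the transduction in the theorems that use this lemma. Write $U(\theta)(\ket{\psi}\oplus w) = \ket{\psi}+w-(1-e^{i\theta})u$. Then:
\begin{itemize}
\item $\Pi u=0$ is exactly the condition that the catalyst $w$ is returned unchanged;
\item $u=P_{\ker(\Pi)\cap\ker(\Delta)}\ket{\psi}$ is what makes the public output correct.
\end{itemize}

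Your finite-dimensionality caveat matches the paper's implicit setting.
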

\begin{proof}
We use the fact that\footnote{E.g., derive this from $P_{\ker(\Pi) \cap \ker(\Delta)} = (I-\Delta) - (\Pi(I-\Delta))^+ \Pi (I-\Delta)$ \cite[Eq.(4.7)]{piziak1999constructing} using the Moore--Penrose identity $A^+ = A^\dagger (A A^\dagger)^+$ which implies $(\Pi(I-\Delta))^+ = (I-\Delta) \Pi (\Pi - \Pi \Delta \Pi)^+$.}
\[
P_{\ker(\Pi) \cap \ker(\Delta)}
= (I-\Delta)\left[ I - (\Pi - \Pi \Delta \Pi)^+ (I-\Delta) \right]
\]
to rewrite
\[
P_{\ker(\Pi) \cap \ker(\Delta)} \ket{\psi}
= (I-\Delta)\left[ I - (\Pi - \Pi \Delta \Pi)^+ (I-\Delta) \right] \ket{\psi}
= (I-\Delta)\left[ I + (\Pi - \Pi \Delta \Pi)^+ \Delta \right] \ket{\psi}
\]
where the last step uses that $(\Pi - \Pi \Delta \Pi)^+ \ket{\psi} = 0$ since $\ket{\psi} \in \ker(\Pi)$.
\end{proof}

Using it, we can prove a slight generalization of \zcref{thm:eff-gap-transducer}.
The generalization shows how to do also partial rotations around the intersection of kernels, as we will require later.

\begin{theorem} \label{thm:eff-gap-transducer-general}
Consider projectors $\Pi$ and $\Delta$, and define the partial rotation $U(\theta) = I-(1-e^{i\theta})(I-\Delta)$ over angle $\theta$.
For any $\ket{\psi} \in \ker(\Pi)$ it holds that
\[
\ket{\psi}
\overset{U(\theta)}{\rightsquigarrow} (I - (1-e^{i\theta})P_{\ker(\Pi) \cap \ker(\Delta)}) \ket{\psi}
\]
with public space $\ker(\Pi)$ and catalyst $w = (\Pi-\Pi \Delta \Pi)^+ \Delta \ket{\psi} \in \mathrm{im}(\Pi)$.
\end{theorem}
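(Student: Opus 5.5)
We verify directly that $U(\theta)(\ket{\psi}\oplus w) = \tau\oplus w$ with $\tau = (I-(1-e^{i\theta})P)\ket{\psi}$ and $P=P_{\ker(\Pi)\cap\ker(\Delta)}$. Here the public space is $\ker(\Pi)$ and the private space is $\mathrm{im}(\Pi)$, and we read the direct sum $\ket{\psi}\oplus w$ as the ordinary sum $\ket{\psi}+w$ in the ambient space.

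The first step is to check that $w\in\mathrm{im}(\Pi)$. Since $\Pi-\Pi\Delta\Pi = \Pi(I-\Delta)\Pi$ is Hermitian with range contained in $\mathrm{im}(\Pi)$, its pseudoinverse also has range in $\mathrm{im}(\Pi)$. Together with $\ket{\psi}\in\ker(\Pi)$, this makes the decomposition $\ket{\psi}+w$ respect $\ker(\Pi)\oplus\mathrm{im}(\Pi)$.

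Next we expand
\[
U(\theta)(\ket{\psi}+w) = \ket{\psi}+w-(1-e^{i\theta})(I-\Delta)(\ket{\psi}+w).
\]
By \zcref{lem:proj}, $(I-\Delta)(\ket{\psi}+w) = P\ket{\psi}$, so the right-hand side equals $\ket{\psi}-(1-e^{i\theta})P\ket{\psi}+w = \tau + w$. Since $P\ket{\psi}\in\ker(\Pi)$, the vector $\tau$ lies in the public space and the private component is exactly $w$, as required.

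It remains to check the definitional requirement that $\xi\mapsto\tau$ be unitary on $\ker(\Pi)$. The map $I-(1-e^{i\theta})P$ restricted to $\ker(\Pi)$ acts as multiplication by $e^{i\theta}$ on $\ker(\Pi)\cap\ker(\Delta)$ and as the identity on its orthogonal complement within $\ker(\Pi)$. It preserves $\ker(\Pi)$ and is unitary there. Uniqueness of $\tau$ then follows from the general transducer theory of \cite{belovs2024taming}.

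Once \zcref{lem:proj} is granted, the only subtle point is confirming $w\in\mathrm{im}(\Pi)$ and that $P\ket{\psi}$ stays in the public space; both follow from the range properties of the pseudoinverse. \zcref{thm:eff-gap-transducer} is the special case $\theta=\pi$, for which $U(\pi)=2\Delta-I$ up to sign. In that case $\Pi$ acts as $-I$ on the private space and as $+I$ on the public space, so multiplying by $(2\Pi-I)$ preserves the transduction up to an overall sign convention.
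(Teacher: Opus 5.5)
Your proof is correct and follows the paper's argument: apply \zcref{lem:proj} to get $(I-\Delta)(\ket{\psi}+w)=P_{\ker(\Pi)\cap\ker(\Delta)}\ket{\psi}$ and substitute into $U(\theta)(\ket{\psi}+w)$. Your extra checks (that $w\in\mathrm{im}(\Pi)$, that the output stays in $\ker(\Pi)$, and that the induced map is unitary) are details the paper leaves implicit.

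One slip in your closing aside: $U(\pi)=2\Delta-I$ exactly, and $2\Pi-I$ acts as $+I$ on the private space $\mathrm{im}(\Pi)$ and as $-I$ on the public space $\ker(\Pi)$. With the signs reversed as you wrote them, the catalyst would be negated and the transduction would fail. With the correct signs, $2\Pi-I$ fixes $w$ and turns $(I-2P_{\ker(\Pi)\cap\ker(\Delta)})\ket{\psi}$ into exactly $(2P_{\ker(\Pi)\cap\ker(\Delta)}-I)\ket{\psi}$.
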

\begin{proof}
From \zcref{lem:proj} we get that
\begin{align*}
U(\theta) \left(\ket{\psi} \oplus w \right)
&= \big( I-(1-e^{i\theta})(I-\Delta) \big) \left(\ket{\psi} \oplus w \right)\\
&= \ket{\psi} \oplus w - (1-e^{i\theta}) P_{\ker(\Pi) \cap \ker(\Delta)} \ket{\psi} \\
&= (I - (1-e^{i\theta})P_{\ker(\Pi) \cap \ker(\Delta)}) \ket{\psi} \oplus w. \qedhere
\end{align*}
\end{proof}

\zcref{thm:eff-gap-transducer} then follows from setting $\theta = \pi$, and adding the extra reflection $2\Pi-I$ which only flips the sign of the transduction.
We note that our expression for the catalyst corresponds to the generic expression of the catalyst given in \cite[Eq.~(5.2)]{belovs2024taming}, which is
\[
w
= (\Pi - \Pi U \Pi)^+ \Pi U \ket{\psi}
= (\Pi - \Pi U \Pi)^+ U \ket{\psi}.
\]
For the special case where $U = (2\Pi-I)(2\Delta-I)$, and using that $\ket{\psi} \in \ker(\Pi)$, this simplifies to
\begin{align*}
w
&= (\Pi - \Pi U \Pi)^+ \Pi U \ket{\psi} \\
&= (\Pi - \Pi (2\Delta-I) \Pi)^+ (2\Delta-I) \ket{\psi} \\
&= 2 (\Pi - \Pi (2\Delta-I) \Pi)^+ \Delta \ket{\psi}
= (\Pi - \Pi \Delta \Pi)^+ \Delta \ket{\psi}.
\end{align*}

\section{Elfs and quantum walks}

\subsection{Graphs, electric networks and random walks}

Here we introduce key concepts from graphs, electric networks and random walks, making some concepts from the introduction more precise.
We refer the interested reader to \cite{doyle1984random,lyons2017probability} for more details.

We consider an undirected graph $G=(V, E, w)$ with edge set $E \subseteq V \times V$ so that $(x,y) \in E$ if and only if $(y,x) \in E$, and with nonnegative weights $w: E \rightarrow \mathbb{R}_{\geq 0}$ satisfying $w_{xy}=w_{yx}$ for all $(x,y)\in E$.
For convenience, for $(x,y) \notin E$ we set $w_{xy}=0$.
We let $d_x = \sum_y w_{xy}$ denote the degree at $x$, and $W = \sum_{(x,y) \in E} w_{xy} = \sum_x d_x$ denote twice the total edge weight.

For source $s \in V$ and sink $M \subseteq V$, a unit $s$-$M$ flow $g:E \to \mathbb{R}$ is a function on the edges that is asymmetric, $g_{xy} = -g_{yx}$, and satisfies the demands
\[
\sum_y g_{sy} = 1, \qquad
\sum_{m \in M} \sum_y g_{ym} = 1, \qquad
\sum_z g_{xz} = 0, \;\; \forall x \notin \{s\} \cup M.
\]
The \emph{unit electric flow} from $s$ to $M$ is the unique unit $s$-$M$ flow $f: E \rightarrow \mathbb{R}$ that minimizes the dissipated energy
\[
\mathcal{E}(f)
= \frac{1}{2} \sum_{(x,y) \in E} \frac{f_{xy}^2}{w_{xy}} .
\]
The resulting energy of the electric flow is called the effective resistance $R_s$.
Equivalently, the electric flow is also the unique unit $s$-$M$ flow that is a potential flow: there exist voltages $v:V \to \mathbb{R}_{\geq 0}$ such that
\begin{equation} \label{eq:f-volt}
f_{x y}=w_{x y}\left(v_x-v_y\right), \quad \forall (x,y) \in E.
\end{equation}
As a convention, we set $v_y=0$ for all $y \in M$, in which case $v_s=R_s$.

The electric flow is tightly connected to properties of a simple random walk on $G$, which is described by a Markov chain $\{X_0,X_1,\dots\}$ on $V$ with $\mathbb{P}(X_{t+1} = y \mid X_t = x) = w_{xy}/d_x$.
For a random walk starting from $s$, so $X_0 = s$, we will be interested in the following quantities:
\begin{itemize}
\item
The \emph{escape probability} $p_s$ is the probability that the random walk from $s$ hits $M$ before returning to $s$.
\item
The \emph{hitting time} $\HT_s = \mathbb{E}[\tau_M]$ is the expectation of $\tau_M = \min\{t : X_t \in M\}$, which is the number of steps until the random walk hits $M$.
\item
The \emph{escape time} $\ET_s = \mathbb{E}[\sigma]$ is the expectation of $\sigma = 1 + \max\{t < \tau_M : X_t = s\}$, which is the time at which the random walk left $s$ for the final time before hitting $M$.
\item
The \emph{commute time} $\CT_s = \mathbb{E}[\kappa]$ is the expectation of $\kappa = \min\{t > \tau_M : X_t = s\}$, which is the first return time to $s$ after hitting $M$.
\end{itemize}
We can express these quantities using the electric potential:
\[
\frac{1}{p_s}
= R_s d_s
\quad \leq \quad
\ET_s
= \frac{1}{R_s} \sum_x v_x^2 d_x
\quad \leq \quad
\HT_s
= \sum_x v_x d_x
\quad \leq \quad
\CT_s
= R_s \sum_x d_x
= R_s W
\]
where the inequalities follow from the fact that $v_x \leq v_s=R_s$ for all $x$.
While the expressions for the hitting time and the commute time have long been known \cite{doyle1984random,chandra1989electrical}, the expression for the escape time was proven in \cite{apers2022elfs}.

\subsection{Quantum walks and elfs}

The edge space of $G$, denoted $\ell(E)$, is defined as the Hilbert space with the standard orthonormal basis $\{|x y\rangle,|y x\rangle : (x, y) \in E\}$.
The symmetric and antisymmetric subspaces $\ell_{\pm}(E) \subset \ell(E)$ are defined as
\[
\ell_\pm(E)=\on{span}\left\{\ket{xy} \pm \ket{yx} : (x,y) \in E\right\}.
\]
The star state $\left|\phi_x\right\rangle$ at $x$ and the star subspace $\ell_*(E) \subset \ell(E)$ are defined as 
\[
\left|\phi_x\right\rangle=\frac{1}{\sqrt{d_x}} \sum_y \sqrt{w_{x y}}|x y\rangle, 
\qquad 
\ell_*(E)=\on{span}\{\left|\phi_x\right\rangle : x\in V \backslash (\{s\}\cup M)\}.
\]
With $f$ the unit electric $s$-$M$ flow, we define the electric flow state $\ket{f}$ as the normalized vector
\begin{equation}
\label{eq:elf state}
    \ket{f} = \frac{1}{\sqrt{2 R_s}} \sum_{(x,y) \in E} \frac{f_{xy}}{\sqrt{w_{xy}}}\ket{xy}.
\end{equation}
The state is antisymmetric, i.e. $\ket{f} \in \ell_-(E)$, and satisfies
\[
\sqrt{d_s} \inner{\phi_s|f} = \frac{1}{\sqrt{2R_s}}, \qquad
\sum_{m \in M} \sqrt{d_m} \inner{\phi_m|f} = -\frac{1}{\sqrt{2R_s}}, \qquad
\inner{\phi_x|f} = 0, \;\; \forall x \notin \{s\} \cup M.
\]

Following \cite{apers2022elfs}, we define a quantum walk on $G$ as the unitary operator $U$ on $\ell(E)$ given by
\begin{equation} \label{eq:QW-operator}
U
= (2\Pi_*-I)\cdot \mathrm{SWAP}
= (2\Pi_*-I)\cdot (2\Pi_+ - I)
\end{equation}
with $\mathrm{SWAP} \ket{x,y} = \ket{y,x}$, $\Pi_+$ the projector onto the symmetric subspace $\ell_+(E)$, and $\Pi_*$ the projector onto the star subspace $\ell_*(E)$.
The following is a useful fact, where a flow $g$ is ``closed outside $M$'' if it is asymmetric and $\inner{\phi_x|g} = 0$ for all $x \notin M$ (including $s$).
\begin{lemma}[Lemma 9 of \cite{apers2022elfs}]
The invariant space of $U$ equals $\ker(\Pi_+) \cap \ker(\Pi_*)$, and it is spanned by the electric flow $\ket{f}$ and all flows $\ket{g}$ that are closed outside $M$.
\end{lemma}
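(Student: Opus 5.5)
The plan has two parts: first identify the invariant space of $U$ with $\ker(\Pi_+)\cap\ker(\Pi_*)$, then describe that intersection explicitly. For the first part, write $R_* = 2\Pi_*-I$ and $R_+ = 2\Pi_+-I$, so that $U = R_*R_+$. If $\ket{g}\in\ker(\Pi_+)\cap\ker(\Pi_*)$, then $R_+\ket{g} = -\ket{g}$ and $R_*(-\ket{g}) = \ket{g}$, so $U\ket{g}=\ket{g}$. Conversely, suppose $U\ket{g}=\ket{g}$. Then $R_+\ket{g} = R_*\ket{g}$, because $R_*^{-1}=R_*$. Expanding both sides gives $\Pi_+\ket{g} = \Pi_*\ket{g} =: \ket{h}$. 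Here $\ket{h}$ lies in $\mathrm{im}(\Pi_+)\cap\mathrm{im}(\Pi_*)$, which is the invariant space of both projectors (eigenvalue $+1$ of $U$ but coming from the $+1/+1$ sector). This needs more care, so I would instead argue directly as follows. Since $\|\Pi_+ g\|^2 = \inner{g|\Pi_+|g}$ and similarly for $\Pi_*$, we have $\ket{h}=\Pi_+\ket{g}$ and $\ket{h}=\Pi_*\ket{g}$, hence $\ket{h}$ belongs to both images. The statement of the lemma therefore implicitly requires that $\mathrm{im}(\Pi_+)\cap\mathrm{im}(\Pi_*)=\{0\}$, and I would verify this as the main obstacle.

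\textbf{Trivial intersection of the images.} A symmetric vector lying in the span of the stars $\{\ket{\phi_x}\}_{x\notin\{s\}\cup M}$ has the form $\sum_x c_x\sqrt{d_x}\ket{\phi_x} = \sum_{x,y} c_x\sqrt{w_{xy}}\ket{xy}$. Symmetry forces $c_x = c_y$ on every edge $(x,y)$. It also forces the component on edges $\ket{xy}$ with $y\in\{s\}\cup M$ to vanish, since there is no star at $y$. Hence $c_x=0$ for every $x$ adjacent to $\{s\}\cup M$, and then by propagation along edges, $c\equiv 0$ on every connected component that touches $\{s\}\cup M$. I assume, as is implicit, that every component of $G$ touches $\{s\}\cup M$, or else that isolated components are discarded. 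This gives $\ket{h}=0$, so $\ket{g}\in\ker\Pi_+\cap\ker\Pi_*$.

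\textbf{Description of the kernel intersection.} $\ker(\Pi_+)=\ell_-(E)$ consists exactly of antisymmetric vectors $\ket{g}=\sum g_{xy}/\sqrt{w_{xy}}\ket{xy}$, that is, flows $g$ (rescaled). $\ker(\Pi_*)$ imposes $\inner{\phi_x|g}=0$ for $x\notin\{s\}\cup M$, which is flow conservation at all vertices outside $\{s\}\cup M$. So the intersection is the space of antisymmetric flows conserved away from $\{s\}\cup M$.

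\textbf{Spanning set.} Given such a $g$, let $a = \sqrt{d_s}\inner{\phi_s|g}$ be its net outflow at $s$. Since $\ket{f}$ has net outflow $1/\sqrt{2R_s}$ at $s$ (the stated identity), the vector $\ket{g} - a\sqrt{2R_s}\ket{f}$ has zero outflow at $s$. It is still conserved outside $\{s\}\cup M$, so it is closed outside $M$. Conversely, $\ket{f}$ and the closed flows lie in the intersection by the stated properties of $\ket{f}$. This gives the span claim. The only delicate step is the image-intersection argument, which depends on connectivity to $\{s\}\cup M$.
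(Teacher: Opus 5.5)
Your proof is correct. The paper gives no argument of its own (it only cites Lemma~9 of Apers--Piddock), and your route is the standard one: $U\ket{g}=\ket{g}$ forces $\Pi_+\ket{g}=\Pi_*\ket{g}\in\mathrm{im}(\Pi_+)\cap\mathrm{im}(\Pi_*)$; that intersection vanishes by propagating $c_x=c_y$ along edges from $\{s\}\cup M$; and an antisymmetric flow conserved off $\{s\}\cup M$ splits as $a\sqrt{2R_s}\ket{f}$ plus a flow closed outside $M$.

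You are also right that a connectivity hypothesis is needed but left implicit in the statement. If some component $C$ of $G$ with at least one edge avoids $\{s\}\cup M$, then $\sum_{x\in C}\sqrt{d_x}\ket{\phi_x}$ is a $+1$-eigenvector of $U$ lying in $\mathrm{im}(\Pi_+)\cap\mathrm{im}(\Pi_*)$ rather than in $\ker(\Pi_+)\cap\ker(\Pi_*)$, so the lemma as written fails without that assumption.
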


Now let $\Pi_{\mathrm{cl}}$ denote the projector onto the flows closed outside $M$.
Then, by Kirchhoff's cycle law \cite{lyons2017probability}, we have that $\Pi_{\mathrm{cl}} \ket{f} = 0$ and so we can decompose
\[
P_{\ker(\Pi_+) \cap \ker(\Pi_*)}
= \ket{f}\bra{f} + \Pi_{\mathrm{cl}}.
\]
Since $\Pi_{\mathrm{cl}} \ket{\phi_s} = 0$ by definition, we get the useful consequence that
\begin{equation} \label{eq:elf-proj}
P_{\ker(\Pi_+) \cap \ker(\Pi_*)} \ket{\phi_s}
= \inner{f|\phi_s} \ket{f}
= \frac{1}{\sqrt{2 R_s d_s}} \ket{f}.
\end{equation}

\subsection{Elfs transducer}
Having established \zcref{eq:elf-proj}, we can use a transducer to reflect about $\ket{f}$.
Specifically, we prove the following theorem, which is a special case of the effective gap transducer (\zcref{thm:eff-gap-transducer}).
\begin{theorem}[Elfs transducer] \label{thm:elf-transd-refl}
With $U = (2\Pi_*-I) (2\Pi_+ - I)$ we have that
\[
\ket{\phi_s}
\overset{U}{\rightsquigarrow} (2\ket{f}\bra{f}-I) \ket{\phi_s}
\]
with public space $\ker(\Pi_*)$, catalyst
\[
w
= \frac{1}{R_s \sqrt{d_s}} \sum_{x\neq s} v_x \sqrt{d_x} \ket{\phi_x}
\in \mathrm{Im}(\Pi_*)
\]
and transduction complexity $W = \frac{\ET_s}{R_s d_s} - 1$.
\end{theorem}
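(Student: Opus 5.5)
We specialize \zcref{thm:eff-gap-transducer} to $\Pi=\Pi_*$, $\Delta=\Pi_+$ and $\ket{\psi}=\ket{\phi_s}$. Since $s\notin V\setminus(\{s\}\cup M)$ and the star states of distinct vertices have disjoint supports, we have $\ket{\phi_s}\perp\ell_*(E)$, so $\ket{\phi_s}\in\ker(\Pi_*)$. The theorem then gives the transduction $\ket{\phi_s}\rightsquigarrow(2P_{\ker(\Pi_+)\cap\ker(\Pi_*)}-I)\ket{\phi_s}$. By \zcref{eq:elf-proj}, $P\ket{\phi_s}=\inner{f|\phi_s}\ket{f}=\ket{f}\bra{f}\phi_s\rangle$, so the output equals $(2\ket{f}\bra{f}-I)\ket{\phi_s}$, as claimed.

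It remains to identify the catalyst $(\Pi_*-\Pi_*\Pi_+\Pi_*)^+\Pi_+\ket{\phi_s}$. Rather than computing the pseudoinverse directly, I would use uniqueness. Any $w'\in\mathrm{Im}(\Pi_*)$ with $(I-\Pi_+)(\ket{\phi_s}+w')=P\ket{\phi_s}$ is a valid catalyst, because the proof of \zcref{thm:eff-gap-transducer-general} only uses this identity together with $w'\in\mathrm{Im}(\Pi_*)$, so that $2\Pi_*-I$ fixes it. Thus I will simply verify that the proposed $w$ satisfies
\[
(I-\Pi_+)(\ket{\phi_s}+w)=\frac{1}{\sqrt{2R_sd_s}}\ket{f}.
\]
Membership $w\in\mathrm{Im}(\Pi_*)$ holds because $v_x=0$ on $M$, so only vertices $x\notin\{s\}\cup M$ contribute.

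For the verification, write
\[
\ket{\phi_s}+w=\frac{1}{R_s\sqrt{d_s}}\sum_x v_x\sqrt{d_x}\ket{\phi_x}
=\frac{1}{R_s\sqrt{d_s}}\sum_{x,y}v_x\sqrt{w_{xy}}\ket{xy},
\]
using $v_s=R_s$. Since $I-\Pi_+$ is the antisymmetrizer $\tfrac12(I-\mathrm{SWAP})$, applying it yields
\[
\frac{1}{2R_s\sqrt{d_s}}\sum_{x,y}(v_x-v_y)\sqrt{w_{xy}}\ket{xy}
=\frac{1}{2R_s\sqrt{d_s}}\sum_{x,y}\frac{f_{xy}}{\sqrt{w_{xy}}}\ket{xy},
\]
where the equality is \zcref{eq:f-volt}. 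Comparing with \zcref{eq:elf state}, this is $\frac{\sqrt{2R_s}}{2R_s\sqrt{d_s}}\ket{f}=\frac{1}{\sqrt{2R_sd_s}}\ket{f}$, as needed.

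For the complexity, the star states $\ket{\phi_x}$ are orthonormal, so
\[
\|w\|^2=\frac{1}{R_s^2d_s}\sum_{x\neq s}v_x^2d_x .
\]
Using the escape-time formula $\ET_s=\frac1{R_s}\sum_x v_x^2d_x$ and subtracting the $x=s$ term, which equals $R_s^2d_s$, gives
\[
\|w\|^2=\frac{R_s\ET_s-R_s^2d_s}{R_s^2d_s}=\frac{\ET_s}{R_sd_s}-1 .
\]

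The main subtlety I expect is the catalyst identification: the stated formula involves a pseudoinverse, and one must justify that the explicit $w$ coincides with it, or at least that it is a legitimate catalyst. The uniqueness/sufficiency argument above handles the latter. If exact equality with the pseudoinverse formula is desired, I would additionally argue as follows. Two valid catalysts in $\mathrm{Im}(\Pi_*)$ differ by some $u\in\mathrm{Im}(\Pi_*)$ with $(I-\Pi_+)u=0$, i.e.\ $u\in\ell_+(E)$. Such a $u=\sum_x c_x\ket{\phi_x}$ must be symmetric, which forces $c_x/\sqrt{d_x}=c_y/\sqrt{d_y}$ across every edge. Edges into $M$ or $s$ then force $c_x=0$, at least when each component meets $\{s\}\cup M$. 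Components that do not meet $\{s\}\cup M$ are handled by noting that the pseudoinverse output is orthogonal to the kernel of $\Pi_*-\Pi_*\Pi_+\Pi_*$.
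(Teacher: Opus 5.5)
Your proof is correct and follows essentially the same route as the paper. Like the paper, you verify directly from $f_{xy}=w_{xy}(v_x-v_y)$ that $(I-\Pi_+)(\ket{\phi_s}+w)=\frac{1}{\sqrt{2R_sd_s}}\ket{f}=P_{\ker(\Pi_+)\cap\ker(\Pi_*)}\ket{\phi_s}$, invoking \eqref{eq:elf-proj} for the last equality. You also check that $w\in\mathrm{Im}(\Pi_*)$, and you compute $\|w\|^2=\frac{\ET_s}{R_sd_s}-1$ explicitly from the escape-time formula, which the paper leaves implicit.

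Your optional final paragraph, identifying $w$ with the pseudoinverse catalyst, goes beyond what the paper does and is not needed for the statement. If you keep it, note one point it leaves open: on a component avoiding $\{s\}\cup M$ the voltages are constant, so your explicit $w$ restricted there is itself a symmetric kernel vector. It therefore coincides with the pseudoinverse catalyst only under the convention $v_x=0$ on such components, which is also the convention under which the escape-time formula holds.
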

\begin{proof}
From \zcref{eq:f-volt} it follows that
\[
(I-\Pi_+) (\ket{\phi_s} \oplus w)
= \frac{1}{\sqrt{2 R_s d_s}} \ket{f}
= P_{\ker(\Pi_+) \cap \ker(\Pi_*)} \ket{\phi_s}.
\]
Following \zcref{eq:elf-proj}, this implies that
\[
U \left(\ket{\phi_s} \oplus w \right)
= (2\ket{f}\bra{f}-I)\ket{\phi_s} \oplus w. \qedhere
\]
\end{proof}

Since we also have that
\[
\ket{f} \overset{U}{\rightsquigarrow} \ket{f}
\]
with the zero catalyst, this fully characterizes the transduction action of $U$ on the subspace spanned by $\ket{f}$ and $\ket{\phi_s}$.

\subsection{Graph modification} \label{sec:graph-mod}

As will be required for our later applications, and in analogy to earlier works \cite{belovs2013quantum,ito2019approximate,piddock2019quantum,apers2019unified,apers2022elfs}, we need to rebalance terms to get good algorithms.
In particular, we will need to minimize the product of the elfs transduction complexity, $\frac{\ET_s}{R_s d_s}-1$ (\zcref{thm:elf-transd-refl}), with the inverse of the overlap, $\frac{1}{\inner{f|\phi_s}} = \sqrt{2 R_s d_s}$, yielding a target complexity scaling as $\frac{\ET_s}{\sqrt{R_s d_s}}$.
To minimize this complexity we create a modified graph $\hat G=(V\cup\{\sigma\},E\cup\{(\sigma,s)\}, \hat w)$ by attaching a new vertex~$\sigma$ to the source $s$ via an additional edge $(\sigma,s)$ of weight $w_{\sigma s} = \eta d_s$ (see \zcref{fig:R-graph}).
As discussed in \cite[Lemma 19]{apers2019unified}, we can easily modify the quantum walk operator on $G$ to implement the quantum walk operator on $\hat G$ (even without knowing $d_s$).

\begin{figure}[H]
\centering
\includegraphics[width=.4\textwidth]{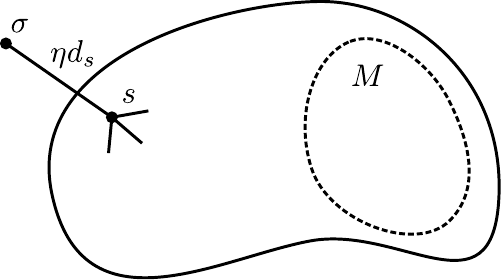}
\caption{Modified graph $\hat G$. Figure derived from \cite{apers2022elfs}.}
\label{fig:R-graph}
\end{figure}

The voltages, resistance and degrees on $\hat G$ are
\[
{\hat v}_\sigma = {\hat R}_\sigma = R_s + \frac{1}{\eta d_s}, 
\qquad {\hat v}_s = v_s = R_s,
\qquad {\hat d}_\sigma = \eta d_s, 
\qquad {\hat d}_s = (1+\eta)d_s,
\]
and ${\hat v}_x = v_x$ and ${\hat d}_x = d_x$ for all $x \notin \{\sigma,s\}$.
We can prove the following bound on the escape time.

\begin{lemma} \label{lem:ET-mod-graph}
For $\eta \geq 1$, we have
\[
\hat{R}_\sigma \hat{d}_\sigma = 1 + \eta R_s d_s
\quad \text{ and } \quad
\frac{\hat\ET_\sigma}{\hat{R}_\sigma\hat{d}_\sigma}
\leq 2 + \frac{\ET_s}{\eta R_s d_s}.
\]
\end{lemma}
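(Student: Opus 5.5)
The plan is a direct computation. I would substitute the modified voltages and degrees listed just before the lemma into the electric-network expressions for $R_s d_s$ and $\ET_s$ from the previous subsection, now applied to $\hat G$ with source $\sigma$. I do not expect the condition $\eta \geq 1$ to be needed; the computation below appears to give the bound for every $\eta>0$.

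\textbf{The first identity.} It is immediate:
\[
\hat R_\sigma \hat d_\sigma = \Big(R_s + \tfrac{1}{\eta d_s}\Big)\eta d_s = 1 + \eta R_s d_s .
\]

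\textbf{The escape-time bound.} I would use $\hat\ET_\sigma = \frac{1}{\hat R_\sigma}\sum_{x} \hat v_x^2 \hat d_x$, where the sum runs over $V\cup\{\sigma\}$, and split it into three pieces: the vertex $\sigma$, the vertex $s$, and all other vertices.
\begin{itemize}
\item The $\sigma$ term is $\hat R_\sigma^2\,\eta d_s$.
\item The $s$ term is $R_s^2(1+\eta)d_s$.
\item The remaining terms coincide with the original graph, since $\hat v_x = v_x$ and $\hat d_x = d_x$ for $x\notin\{\sigma,s\}$.
\end{itemize}
Regrouping the part $R_s^2 d_s$ of the $s$ term with the remaining terms gives the original sum $\sum_x v_x^2 d_x = R_s \ET_s$. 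Hence
\[
\hat R_\sigma \hat\ET_\sigma = \hat R_\sigma^2 \eta d_s + \eta R_s^2 d_s + R_s \ET_s .
\]
Dividing by $\hat R_\sigma^2 \hat d_\sigma = \hat R_\sigma^2 \eta d_s$ yields
\[
\frac{\hat\ET_\sigma}{\hat R_\sigma \hat d_\sigma} = 1 + \frac{R_s^2}{\hat R_\sigma^2} + \frac{R_s \ET_s}{\eta d_s \hat R_\sigma^2}.
\]

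\textbf{Bounding the terms.} Both remaining terms are controlled by the single observation $\hat R_\sigma \geq R_s$, which holds because $\hat R_\sigma = R_s + \frac{1}{\eta d_s}$.
\begin{itemize}
\item The middle term satisfies $R_s^2/\hat R_\sigma^2 \leq 1$.
\item The last term satisfies $\frac{R_s}{\hat R_\sigma^2} \leq \frac{1}{R_s}$, so it is at most $\frac{\ET_s}{\eta R_s d_s}$.
\end{itemize}
Together these give the claimed bound $2 + \frac{\ET_s}{\eta R_s d_s}$.

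The only step needing care is the bookkeeping at $s$: one must correctly identify $\sum_x v_x^2 d_x = R_s\ET_s$ in the original graph, which includes the $s$ term $R_s^2 d_s$, and not double count it against the enlarged degree $(1+\eta)d_s$.
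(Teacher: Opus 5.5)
Your proposal is correct and follows essentially the same route as the paper: you split $\sum_x \hat v_x^2 \hat d_x$ at $\sigma$, at $s$, and at the remaining vertices, absorb the $R_s^2 d_s$ part of the $s$ term into $R_s \ET_s$, and then bound both remaining terms using $\hat R_\sigma \geq R_s$. Your remark that the hypothesis $\eta \geq 1$ is never used is also accurate, since the paper's computation does not use it either.
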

\begin{proof}
The equality is direct.
For the inequality, we first note that
\[
    \hat\ET_\sigma
    =\frac{1}{\hat R_\sigma} \sum_x {\hat v}_x^2 {\hat d}_x
    = \hat{R}_\sigma \hat{d}_\sigma + \frac{\hat{v}_s^2 \hat{d}_s}{\hat{R}_\sigma} + \frac{1}{\hat R_\sigma} \sum_{x\neq s,\sigma} v_x^2 d_x
= \hat{R}_\sigma \hat{d}_\sigma + \eta \frac{R_s^2}{\hat{R}_\sigma} d_s + \frac{R_s}{\hat R_\sigma} \ET_s.
\]
Then we use $\hat{R}_\sigma \geq R_s$ to bound
\[
\frac{\hat\ET_\sigma}{\hat{R}_\sigma\hat{d}_\sigma}
= 1 +  \eta\frac{R_s^2 d_s}{\hat{R}_\sigma^2 \hat{d}_\sigma} + \frac{R_s}{\hat{R}_\sigma^2 \hat{d}_\sigma} \ET_s
\leq 2 + \frac{\ET_s}{\eta R_s d_s}. \qedhere
\]
\end{proof}

This yields a bound on the target complexity of the modified graph
\[
\frac{\hat{\ET}_\sigma}{\sqrt{\hat{R}_\sigma \hat{d}_\sigma}}
\leq \left( 2 + \frac{\ET_s}{\eta R_s d_s} \right) \sqrt{1 + \eta R_s d_s}.
\]
By picking $\eta = \frac{\bar\ET_s}{R_s d_s}$ for $\bar\ET_s \geq \ET_s$, this is $O(\sqrt{\bar\ET_s})$.

\section{Effective resistance estimation}

Through a first non-trivial composition of our elfs transducer with quantum phase estimation, we obtain a quantum algorithm for estimating the escape probability or effective resistance in a graph.
For this, we slightly modify the quantum walk by setting
\[
U'
= (I-2\ket{\phi_s}\bra{\phi_s}) U
= \left( 2 \sum_{x \notin M} \ket{\phi_x}\bra{\phi_x}-I \right) (2\Pi_+-I).
\]
Then the transduction action of $U'$ on $\ket{\psi} \in \mathrm{span}\{\ket{\phi_s},\ket{f}\}$ (with the same public space $\ker\Pi_*$ and catalyst $w$) is $\ket{\psi} \overset{U'}{\rightsquigarrow} R \ket{\psi}$, with $R$ the rotation
\[
R
= (I-2\ket{\phi_s}\bra{\phi_s}) (2\ket{f}\bra{f}-I).
\]
This rotation acts non-trivially only on $\mathrm{span}\{\ket{\phi_s},\ket{f}\}$.
In that subspace, $R$ has eigenvectors $\ket{v^\pm}$ with eigenvalues $e^{\pm i 2 \theta}$ for $\sin\theta = |\inner{f|\phi_s}| = \frac{1}{\sqrt{2 R_s d_s}}$, and we can rewrite $\ket{\phi_s} = \frac{e^{i\theta}}{\sqrt{2}} \ket{v^+} + \frac{e^{-i\theta}}{\sqrt{2}} \ket{v^-}$ \cite{brassard2000quantum}.

We can use quantum phase estimation on $\ket{\phi_s}$ to get an estimate of $\theta$.
This requires access to a controlled implementation of powers of $R$, which we can easily obtain by composing the transducer for $R$.

\begin{lemma} \label{lem:contr-transd}
Given integer $T \geq 1$ and upper bound $\eta \geq \frac{\ET_s}{R_s d_s}$, there is a quantum algorithm that maps $\ket{\phi_s}\ket{0}$ to a state $9/10$-close to
\[
\frac{1}{\sqrt{T}} \sum_{t=0}^{T-1} R^t \ket{\phi_s} \ket{t}
\]
while making $O(\eta T)$ calls to $U'$ and $\widetilde{O}(\eta T)$ other elementary operations.
\end{lemma}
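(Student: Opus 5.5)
Our plan is to build one composed transducer that produces the uniform superposition of powers of $R$, and then convert it into an algorithm with \zcref{lem:trans-to-alg}. First we prepare the uniform superposition $\frac{1}{\sqrt T}\sum_{t=0}^{T-1}\ket{t}$ on a control register of length $T$. This costs $\tO(1)$ operations and no calls to $U'$.

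We then apply the transducer composition of \zcref{lem:contr-transd-comp} in the controlled form $\sum_t \ket{t}\bra{t}\otimes(\cdot)$. On branch $t$ we want $t$ applications of the transducer $U'$ for $R$, with public space $\ker\Pi_*$. The current public vector on branch $t$ after $j<t$ steps is $R^j\ket{\phi_s}\in\mathrm{span}\{\ket{\phi_s},\ket{f}\}$. By \zcref{thm:elf-transd-refl}, the transduction action of $U'$ on this span is $R$. The catalyst is linear in the public input: only the $\ket{\phi_s}$-component needs the catalyst $w$, while $\ket{f}$ needs none.

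The only real computation is bounding the catalyst norms. Write $R^j\ket{\phi_s}=\alpha_j\ket{\phi_s}+\beta_j\ket{f}$. The transduction complexity of step $j$ is $|\alpha_j|^2\|w\|^2 \le C\,\|w\|^2$ for some constant $C$ uniform in $j$; this holds because $\ket{\phi_s}$ and $\ket{f}$ are non-orthogonal with overlap $\sin\theta\le 1/\sqrt2$, so the coefficients are bounded. Moreover $\|w\|^2=\frac{\ET_s}{R_sd_s}-1\le\eta$. Plugging into \zcref{lem:contr-transd-comp} with $\|\psi_{t,0}\|=1$, each branch costs $O(t(1+\eta))=O(t\eta)$ (using $\eta\ge1$, since $\ET_s\ge R_sd_s$). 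Averaging over the uniform control gives a total complexity $W=\frac1T\sum_t O(t\eta)=O(\eta T)$.

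There is a technicality in making all branches one transducer. We add a second counter $j$ that runs to $T$, and apply $U'$ on branch $t$ only while $j<t$. After $t$ steps the branch leaves the public space, which is exactly the $\psi_{t,1}$ mechanism of \zcref{lem:contr-transd-comp}. To make the final state land in the same counter configuration for all branches, we uncompute or align the counters. The time cost per application is $T_{c,T}=O(\log T)$ plus one call to $U'$.

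Finally, we invoke \zcref{lem:trans-to-alg} with $K=c\,\eta T$ for a large enough constant $c$. This gives error $2\sqrt{W/K}\le 9/10$ using $O(\eta T)$ controlled calls to the composed transducer. Each call uses one $U'$ and $\tO(1)$ counter operations, for $\tO(\eta T)$ other operations in total.

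I expect the main obstacle to be bounding the catalyst norms uniformly over $j$: the $\ket{\phi_s}$-component of $R^j\ket{\phi_s}$ in the non-orthogonal decomposition must stay bounded. For this I would use the explicit eigen-decomposition in $\ket{v^\pm}$, or alternatively project out $\ket{f}$ and use the catalyst $P^\perp$ form.
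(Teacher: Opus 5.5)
Your proposal is correct and follows essentially the paper's route: prepare the uniform superposition over $t$, compose controlled applications of the $U'$ transducer via \zcref{lem:contr-transd-comp} for a total transduction complexity of $O(\eta T)$, and convert with \zcref{lem:trans-to-alg} using $K\in\Theta(\eta T)$. Your bound on the $\ket{\phi_s}$-coefficient of $R^j\ket{\phi_s}$ (modulus at most $\sqrt{2}$, since $\sin\theta\le 1/\sqrt{2}$ because $R_sd_s\ge 1$) fills in a step the paper leaves implicit. The paper also sidesteps your counter-alignment technicality: it keeps every branch in the public space and applies the identity once a branch is done, $S_j = U'\otimes\Pi_{>j} + I\otimes\Pi_{\le j}$, so all branches end with the same counter value.
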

\begin{proof}
We can implement the transduction $\ket{\phi_s}\ket{0} \rightsquigarrow \frac{1}{\sqrt{T}} \sum_{t=0}^{T-1} R^t \ket{\phi_s} \ket{t}$ by using transducer composition (\zcref{lem:contr-transd-comp}) with $S_t = U' \otimes \Pi_{> t} + I \otimes \Pi_{\leq t}$.
The transduction complexity is bounded by $\frac{\ET_s}{R_s d_s} T$ and the transducer effectively makes 1 call to the controlled quantum walk operator $U'$ and $O(\log T)$ other elementary operations.

The lemma then follows from invoking \zcref{lem:trans-to-alg} to turn this transducer into a quantum algorithm.
With $K \in \Theta(\eta T)$ this yields a quantum algorithm that returns a state that is $9/10$-close to the transduced state while making $O(K)$ calls to the transducer and other elementary operations.
\end{proof}

By standard results \cite[Appendix C]{cleve1998quantum}, if we apply the quantum Fourier transform on the $\log(T)$ ancilla qubits of a state that is $9/10$-close to $\frac{1}{\sqrt{T}} \sum_{t=0}^{T-1} R^t \ket{\phi_s} \ket{t}$, then measuring these qubits returns with probability at least $2/3$ an estimate $\tilde\theta = \theta + O(1/T)$ (and hence $\sin(\tilde\theta) = 1/\sqrt{2R_sd_s} + O(1/T)$).
This implies the following lemma.

\begin{lemma} \label{lem:eff-res-prim}
Given an upper bound $\tau \geq \frac{\ET_s}{R_s d_s}$ and integer $T \geq 1$, there is a bounded-error quantum algorithm that returns an estimate $\frac{1}{\sqrt{R_s d_s}} + O(1/T)$ while making $O(\tau T)$ quantum walk steps and $\widetilde{O}(\tau T)$ other elementary operations.
\end{lemma}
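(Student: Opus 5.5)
The plan is to combine \zcref{lem:contr-transd} with the standard phase-estimation analysis quoted just before the statement. The estimate concerns $\sin\theta = |\inner{f|\phi_s}| = 1/\sqrt{2R_sd_s}$, so up to the constant $\sqrt 2$ it is the target quantity $1/\sqrt{R_sd_s}$.

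First, I apply \zcref{lem:contr-transd} with $\eta = \tau$; this is allowed because $\tau \geq \ET_s/(R_sd_s)$ by assumption. From $\ket{\phi_s}\ket{0}$ this produces a state $9/10$-close to $\frac{1}{\sqrt T}\sum_{t=0}^{T-1} R^t\ket{\phi_s}\ket{t}$. It uses $O(\tau T)$ calls to $U'$, and each call is one quantum walk step followed by a reflection about $\ket{\phi_s}$. It also uses $\tO(\tau T)$ other elementary operations.

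Second, I write $\ket{\phi_s} = \frac{e^{i\theta}}{\sqrt2}\ket{v^+}+\frac{e^{-i\theta}}{\sqrt2}\ket{v^-}$. The ideal state then splits into two branches, $\frac{e^{\pm i\theta}}{\sqrt2}\ket{v^\pm}\otimes \frac{1}{\sqrt T}\sum_t e^{\pm 2i\theta t}\ket t$. Applying the inverse QFT to the counter and measuring gives, in either branch, a phase estimate $\pm 2\theta + O(1/T)$ with constant probability. Since the branches are orthogonal in the first register, the counter's reduced state is a mixture and no interference arises. Taking absolute value (mod $2\pi$) recovers $\theta + O(1/T)$. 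Closeness $9/10$ only costs a constant in the success probability, as in \cite[Appendix C]{cleve1998quantum}. If needed, I boost the success probability by taking a median of $O(1)$ repetitions, which yields a bounded-error algorithm.

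Finally, I convert the estimate. The map $\theta\mapsto \sin\theta$ is $1$-Lipschitz, so $\sin\tilde\theta = 1/\sqrt{2R_sd_s}+O(1/T)$, and multiplying by $\sqrt 2$ gives $1/\sqrt{R_sd_s} + O(1/T)$. The step needing the most care is the phase-estimation analysis with an approximate input state and the $\pm$ ambiguity, but this is standard. The cost accounting is immediate from \zcref{lem:contr-transd}.
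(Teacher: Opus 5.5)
Your proposal is correct and is essentially the paper's argument: invoke \zcref{lem:contr-transd} with $\eta=\tau$, apply the quantum Fourier transform to the counter and measure as in \cite[Appendix C]{cleve1998quantum} to get $\tilde\theta=\theta+O(1/T)$, then convert via $\sin\tilde\theta = 1/\sqrt{2R_sd_s}+O(1/T)$. Your explicit handling of the $\pm 2\theta$ branches and the $\sqrt2$ rescaling fills in details the paper leaves implicit; as in the paper, ``$9/10$-close'' should be read as a sufficiently small constant distance, which you can always arrange through the constant in $K\in\Theta(\tau T)$.
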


\subsection{Known estimate} \label{sec:eff-res-known}
Given an upper bound $\bar\ET_s \geq \ET_s$, and assuming that we already have a constant-factor estimate $R_s d_s/2 \leq p \leq R_s d_s$, we can set $\tau = \frac{\bar\ET_s}{p} \geq \frac{\ET_s}{R_s d_s}$ and $T \in \Theta(\frac{1}{\varepsilon \sqrt{p}})$ to obtain a multiplicative estimate $(1 + O(\varepsilon)) R_s d_s$ using a number of quantum walk steps
\[
O\left( \frac{1}{\varepsilon} \frac{\bar\ET_s}{\sqrt{R_s d_s}} \right).
\]
Applying this algorithm to the modified graph $\hat G$ from \zcref{sec:graph-mod} with $\eta = \tau$ yields
\[
O\left( \frac{1}{\varepsilon} \frac{\bar\ET_s}{\sqrt{\hat R_\sigma \hat d_\sigma}} \right)
\overset{\text{\zcref{lem:ET-mod-graph}}}{\in} O\left(\frac{1}{\varepsilon} \sqrt{\bar\ET_s}\right)
\]
quantum walk steps and $\widetilde{O}\left(\frac{1}{\varepsilon} \sqrt{\bar\ET_s}\right)$ other elementary operations for returning an estimate $\hat p = (1 + O(\varepsilon)) \hat R_\sigma \hat d_\sigma$.
We can then turn this into an estimate $(\hat p-1)/\eta = (1 + O(\varepsilon)) R_s d_s$.

\subsection{Unknown estimate}
\label{subsec:eff-res-unknown}

In the case where we only have an upper bound $\bar\ET_s \geq \ET_s$, we can do a binary search to find a good estimate of $R_s d_s$.

\begin{breakablealgorithm}
\caption{}
\begin{algorithmic}[1]
\REQUIRE
An upper bound $\bar\ET_s \geq \ET_s$. Initial estimate $\tilde{p}=1$.
\STATE
Construct the modified graph $\hat G$ with $\eta=\tilde{p}\cdot \bar\ET_s$, which satisfies $\frac{\hat\ET_\sigma}{\hat R_\sigma \hat d_\sigma} \leq 3$.
\STATE
Run the algorithm from \zcref{lem:eff-res-prim} on $\hat G$ with $\tau = 3$ and $T \in \Theta(\sqrt{\bar\ET_s})$ to get an estimate
\[
a
= \frac{1}{\sqrt{\hat R_\sigma \hat d_\sigma}} \pm \frac{1}{4\sqrt{\bar\ET_s}}
= \frac{1}{\sqrt{1 + \eta R_s d_s}} \pm \frac{1}{4\sqrt{\bar\ET_s}}.
\]
If $a \leq \frac{1}{2\sqrt{\bar\ET_s}}$ and $\tilde p \geq \frac{2}{\bar\ET_s}$, halve $\tilde p$ and go back to Step 1.
Otherwise, output $\tilde p$.
\ENSURE
$\tilde p \in \Theta(1/(R_s d_s))$ with constant probability.
\end{algorithmic}
\label{alg:fpaa}
\end{breakablealgorithm}

Assuming zero error probability in step 2, the algorithm terminates after $O(\log(R_s d_s))$ iterations.
The output $\tilde p$ is such that $\frac{1}{\sqrt{1 + 2 \tilde p \bar\ET_s R_s d_s}} - \frac{1}{4\sqrt{\bar\ET_s}} \leq \frac{1}{2\sqrt{\bar\ET_s}}$ while $\frac{1}{\sqrt{1 + \tilde p \bar\ET_s R_s d_s}} + \frac{1}{4\sqrt{\bar\ET_s}} \geq \frac{1}{2\sqrt{\bar\ET_s}}$, which implies that $7/18 \leq \tilde p R_s d_s \leq 16$.
Taking into account the error, we can apply the same argument as in \cite[Theorem 15]{brassard2000quantum} to argue that (i) the algorithm terminates after $O(\log(R_s d_s))$ iterations in expectation (and $\lceil\log(\bar\ET_s)\rceil$ iterations in the worst case), and (ii) the output satisfies the same guarantee with constant probability.
Combining this with the algorithm in \zcref{sec:eff-res-known} yields the following theorem.

\effres*

\subsection{Lower bound}

We can easily prove that the error-scaling in \zcref{thm:intro-eff-res} is optimal.

\begin{lemma}
Fix $0 < \varepsilon < 1/2$.
Estimating $R_s d_s$ to multiplicative precision $<\varepsilon$ with constant success probability requires $\Omega(1/\varepsilon)$ quantum walk steps.
\end{lemma}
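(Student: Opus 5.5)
We will prove the lower bound by a reduction from a problem with a known quantum query lower bound: distinguishing two nearby values of a phase, or equivalently approximate counting / amplitude estimation. The idea is to build two graph families, with the same escape time up to constants, whose values of $R_s d_s$ differ by a factor $1+2\varepsilon$. An $\varepsilon$-multiplicative estimator must then distinguish them. We then argue that any algorithm making $T$ quantum walk steps can be simulated with $O(T)$ queries to an oracle for which distinguishing the two instances needs $\Omega(1/\varepsilon)$ queries.

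\textbf{Construction.} Take a star-like graph. The source $s$ is connected to a vertex set whose edges to the sink $M$ carry weights encoded by a black-box oracle. The simplest choice is the following: $s$ has $N$ neighbors $x_1,\dots,x_N$, each with an edge of weight $1$ to $s$. Neighbor $x_i$ connects to $M$ with weight $1$ if the oracle bit $b_i=1$, and to a dead-end leaf otherwise. Then the escape probability is $p_s = 1/(R_s d_s) = \Theta(k/N)$, where $k=|b|$ (up to explicit constants from the series resistances). A $(1\pm\varepsilon)$ estimate of $R_s d_s$ therefore yields a $(1\pm O(\varepsilon))$ estimate of $k$.

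\textbf{Simulation and lower bound.} One quantum walk step on this graph, meaning one application of $U$ or its controlled version, can be implemented with $O(1)$ queries to the oracle for $b$. Only the star states $\ket{\phi_{x_i}}$ depend on $b_i$, through a local edge weight, and these can be prepared with one query and uncomputed with another. We then invoke the approximate counting lower bound of Nayak--Wu / Brassard--H{\o}yer--Mosca--Tapp. Distinguishing $k=K$ from $k=(1+2\varepsilon)K$ requires $\Omega\big(\sqrt{N/(\varepsilon K)}+\sqrt{K(N-K)}/(\varepsilon K)\big)$ queries. Taking $K=N/2$ gives $\Omega(1/\varepsilon)$, and since each walk step costs $O(1)$ queries, this gives $\Omega(1/\varepsilon)$ walk steps. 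Escape times remain $O(1)$ in this regime, so the bound is genuinely about the $\varepsilon$-dependence, and it matches the $\sqrt{\bar\ET_s}/\varepsilon$ upper bound of \zcref{thm:intro-eff-res}.

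\textbf{Alternative approach.} An even simpler route is to skip the counting oracle and use a single edge weight that differs between the two instances. The resulting walk operators $U_0,U_1$ then satisfy $\|U_0-U_1\|=O(\varepsilon)$. The hybrid argument then shows that $T$ steps change the final state by at most $O(T\varepsilon)$, so constant distinguishability forces $T=\Omega(1/\varepsilon)$.

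This second version is the one I would actually write, since it needs no external oracle model. For example, take a path $s$--$x$--$m$ where the $x$--$m$ edge has weight $1$ or $1+3\varepsilon$. Then $R_s d_s$ changes by a $(1+\Theta(\varepsilon))$ factor while the star state at $x$ moves by $O(\varepsilon)$ in norm.

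\textbf{Main obstacle.} The main care point is to make sure the "quantum walk step" model really charges the algorithm for each use of $U$, including controlled and inverse uses, while everything else is input-independent. With that in place, the hybrid bound $\|U_0^{(T)}\ket{\psi}-U_1^{(T)}\ket{\psi}\|\le T\|U_0-U_1\|$ applies directly. We must also check the multiplicative gap: the two values of $R_s d_s$ must differ by more than a factor $(1+\varepsilon)^2$, so that a precision-$\varepsilon$ estimate separates them.
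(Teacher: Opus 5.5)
Your hybrid version is essentially the paper's proof. The paper uses the same three-vertex path $s$--$x$--$t$ with $M=\{t\}$. It notes that one walk step is the reflection about $\ket{\phi_x}$ composed with an input-independent SWAP. It then cites \cite[Theorem 5.1]{belovs2025space} for the $\Omega(1/\varepsilon)$ cost of distinguishing two such reflections. Your bound $\|\psi_0-\psi_1\|\le T\,\|U_0-U_1\|$, which also covers controlled and inverse calls, is a self-contained substitute for that citation. The counting reduction could be made to work but is an unnecessary detour.

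There is, however, a concrete slip: the instance you say you would write fails the separation check you list yourself. Take $w_{sx}=1$ and $w_{xm}\in\{1,1+3\varepsilon\}$. Then $R_s d_s=1+1/w_{xm}\in\{2,\,(2+3\varepsilon)/(1+3\varepsilon)\}$. The ratio of these is $2(1+3\varepsilon)/(2+3\varepsilon)\approx 1+\tfrac32\varepsilon$, which is below both $(1+\varepsilon)^2$ and $(1+\varepsilon)/(1-\varepsilon)$. The fixed series resistance $1/w_{sx}$ roughly halves the relative change. For $\varepsilon=0.01$, the two windows of admissible estimates overlap on roughly $(1.980,1.991)$. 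So an estimator with relative error $<\varepsilon$ need not tell the two instances apart, and the proof as written does not go through. Your counting instance has the same hairline defect: $k=K$ versus $k=(1+2\varepsilon)K$ gives ratio $1+2\varepsilon<(1+\varepsilon)^2$.

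The repair is only in the constants, and there are two options:
\begin{itemize}
\item Enlarge the perturbation. For example, $w_{xm}\in\{1,1+8\varepsilon\}$ separates the windows for $\varepsilon<1/6$. The remaining range $\varepsilon\in[1/6,1/2)$ only asks for $\Omega(1)$ steps. That follows from any pair of instances whose values of $R_s d_s$ differ by a factor $>3$, since with zero steps the output is input-independent.
\item Use the paper's symmetric choice $w_{sx}=\tfrac12+\delta$, $w_{xt}=\tfrac12-\delta$ with $\delta=\pm\varepsilon/2$. This gives $R_s d_s=2/(1\mp\varepsilon)$. Any estimate with relative error $<\varepsilon$ then lies strictly above $2$ in one case and strictly below $2$ in the other, for every $0<\varepsilon<1/2$. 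The two star states have overlap $\sqrt{1-\varepsilon^2}$, so $\|U_0-U_1\|=2\varepsilon$ exactly, and your hybrid argument yields $T=\Omega(1/\varepsilon)$ directly.
\end{itemize}
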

\begin{proof}
With $\delta \in \{\varepsilon/2,-\varepsilon/2\}$, consider a 3-vertex graph with vertices $\{s,x,t\}$, edges $\{(s,x),(x,t)\}$, and weights $\{w_{s,x} = 1/2+\delta,w_{x,t} = 1/2-\delta\}$.
A quantum walk on this graph corresponds to a reflection about the state $\ket{\phi_x} = \sqrt{1/2+\delta} \ket{xs} + \sqrt{1/2-\delta} \ket{xt}$.
Now if the sink $M = \{t\}$ then $R_s d_s = 1/(1/2-\delta)$, and an estimate of $R_s d_s$ to multiplicative precision $<\varepsilon$ would distinguish between $\delta = \varepsilon/2$ or $\delta = -\varepsilon/2$.
However, by \cite[Theorem 5.1]{belovs2025space}, solving this problem with constant success probability requires $\Omega(1/\varepsilon)$ reflections around~$\ket{\phi_x}$.
\end{proof}

\subsection{Witness size estimation} \label{sec:witness-size}

Generalizing beyond the graph setting, we can consider arbitrary projectors $\Pi,\Delta$ with an initial state $\ket{\psi} \in \ker(\Pi)$, and aim to estimate $\omega = 1/\| P_{\ker(\Pi)\cap\ker(\Delta)} \ket{\psi}\|^2$.
Using our effective gap transducer (\zcref{thm:eff-gap-transducer}), we get that the unitary $U = (2\Pi-I)(2\Delta-I)$ transduces $\ket{\psi} \overset{U}{\rightsquigarrow} (2P_{\ker(\Pi)\cap\ker(\Delta)}-I) \ket{\psi}$ with transduction complexity $W = \|(\Pi-\Pi \Delta \Pi)^+ \Delta \ket{\psi}\|^2$.
Following the same argument that led to \zcref{lem:eff-res-prim}, we can then get the following lemma.

\begin{lemma} \label{lem:witness-prim}
Given an upper bound $\tau \geq W+1$ and integer $T \geq 1$, there is a quantum algorithm that with probability at least $2/3$ returns an estimate $1/\sqrt{\omega} + O(1/T)$ while making $O(\tau T)$ calls to $U$ and $\widetilde{O}(\tau T)$ other elementary operations.
\end{lemma}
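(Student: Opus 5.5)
The proof mirrors the graph case, Lemmas~\ref{lem:contr-transd} and~\ref{lem:eff-res-prim}, with the elfs transducer replaced by the general effective gap transducer of \zcref{thm:eff-gap-transducer}. Write $P = P_{\ker(\Pi)\cap\ker(\Delta)}$ and set $\ket{\phi} = P\ket{\psi}/\|P\ket{\psi}\|$, so that $|\inner{\phi|\psi}| = 1/\sqrt{\omega}$. The plan is to build a two-dimensional rotation on $\mathrm{span}\{\ket{\psi},\ket{\phi}\}$, phase-estimate it, and read off $\sin\theta = 1/\sqrt{\omega}$.

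First I fix the building block $U' = (I - 2\ket{\psi}\bra{\psi})U$, with public space $\ker(\Pi)$. The reflection about $\ket{\psi}$ acts only on the public space because $\ket{\psi}\in\ker(\Pi)$, so it composes with the transduction as a public operation.
\begin{itemize}
\item By \zcref{thm:eff-gap-transducer}, $\ket{\psi}\overset{U'}{\rightsquigarrow} R\ket{\psi}$ with $R = (I-2\ket{\psi}\bra{\psi})(2P-I)$.
\item The catalyst is $w = (\Pi-\Pi\Delta\Pi)^+\Delta\ket{\psi}$, with complexity $W$.
\item Since $\ket{\phi}\in\ker(\Pi)\cap\ker(\Delta)$, we have $U\ket{\phi}=\ket{\phi}$, so $\ket{\phi}$ is transduced with zero catalyst.
\end{itemize}
By linearity of transduction (catalysts add), every unit vector in $\mathrm{span}\{\ket{\psi},\ket{\phi}\}$ is transduced by $U'$ into its image under $R$. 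For $a\ket{\psi}+b\ket{\phi}$ the catalyst is $a w$, with norm squared $|a|^2 W$. Writing a unit vector in this span as $a\ket{\psi}+b\ket{\phi}$ with possibly $|a|>1$ (the two vectors are not orthogonal), I will bound $|a|^2 \le \omega$ in the worst case. That bound is too crude, so I will instead track exactly the states that occur, namely $R^t\ket{\psi}$.

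Next I compose powers, as in \zcref{lem:contr-transd}. I use \zcref{lem:contr-transd-comp} with controlled steps $S_t = U'\otimes\Pi_{>t} + I\otimes\Pi_{\le t}$ to transduce $\ket{\psi}\ket{0}$, after a uniform superposition on the counter $\ket{t}$, into $\frac{1}{\sqrt T}\sum_t R^t\ket{\psi}\ket{t}$. The $t$-th application acts on $R^{k}\ket{\psi}$, and its catalyst is the $\ket{\psi}$-component of $R^k\ket{\psi}$ in the non-orthogonal basis, times $w$, plus whatever the previous steps leave behind.

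\textbf{Main obstacle.} I must show the total complexity is $O((W+1)T)$, as in the graph case, where it was bounded by $\frac{\ET_s}{R_sd_s}T = (W+1)T$. The cleaner route is to decompose into orthogonal pieces: $\ket{\psi} = \frac{1}{\sqrt\omega}e^{i\cdot}\ket{\phi} + \ket{\psi^\perp}$, with $\ket{\psi^\perp} = (I-P)\ket{\psi}$. This vector is transduced by $U$ into $-\ket{\psi^\perp}$ with catalyst $w$ and complexity $W$, while $\ket{\phi}$ is free. Then $R^k\ket{\psi}$ has $\ket{\psi^\perp}$-coefficient of modulus at most $1$ in the orthonormal basis $\{\ket{\phi},\ket{\psi^\perp}/\|\psi^\perp\|\}$. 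Hence each step costs at most $(W+1)$ times the squared norm, counting the $+1$ from the counter's $\|\psi_{t,0}\|^2$ term in \zcref{lem:contr-transd-comp}. Summing over the $T$ controlled steps gives at most $(W+1)T \le \tau T$.

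Finally I invoke \zcref{lem:trans-to-alg} with $K=\Theta(\tau T)$ to get a state $9/10$-close to the target. I then apply the QFT on the counter as in \cite[Appendix C]{cleve1998quantum}. The eigenvalues of $R$ on the span are $e^{\pm2i\theta}$ with $\sin\theta = 1/\sqrt\omega$, and $\ket{\psi}$ has equal weight on both eigenvectors. So with probability at least $2/3$ the measurement gives $\tilde\theta = \theta\pm O(1/T)$, and $\sin\tilde\theta = 1/\sqrt{\omega}+O(1/T)$. The call count is $O(\tau T)$ to $U$, with $\tO(\tau T)$ other operations from the counter costs $T_{c,T}=O(\log T)$.
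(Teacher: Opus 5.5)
Your route (the effective gap transducer for $U'=(I-2\ket{\psi}\bra{\psi})U$, composing powers as in \zcref{lem:contr-transd}, then phase estimation) is the one the paper intends. However, the step you flag as the main obstacle is argued incorrectly. The vector $\ket{\psi^\perp}=(I-P)\ket{\psi}$ has catalyst $w$, so the \emph{normalized} vector $\ket{\chi}=\ket{\psi^\perp}/\|\psi^\perp\|$ has catalyst $w/\|\psi^\perp\|$. Its complexity is $W/\|\psi^\perp\|^2=W\omega/(\omega-1)$, not $W$.

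Concretely, $R^k\ket{\psi}=\sin((2k+1)\theta)\ket{\phi}+\cos((2k+1)\theta)\ket{\chi}$ (up to signs). Its catalyst is therefore $\frac{\cos((2k+1)\theta)}{\cos\theta}\,w$, whose squared norm ranges up to $W\omega/(\omega-1)$. This is also the true worst case in your first attempt: a unit vector $a\ket{\psi}+b\ket{\phi}$ has $|a|^2\le\omega/(\omega-1)$, not $\omega$.

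When $\omega$ is close to $1$ and $W$ is not small, this is not $O(W+1)$. For example, take $\Pi=\ket{e_3}\bra{e_3}$, $\Delta=\ket{v}\bra{v}$ with $\ket{v}=\sqrt{\epsilon}\ket{e_2}+\sqrt{1-\epsilon}\ket{e_3}$, and $\ket{\psi}=\sqrt{1-\epsilon}\ket{e_1}+\sqrt{\epsilon}\ket{e_2}$. Then $w=\sqrt{1-\epsilon}\ket{e_3}$, $W=1-\epsilon$ and $\omega=1/(1-\epsilon)$, so $\tau=2$ is admissible. Yet the composed transducer has complexity $\Theta(T/\epsilon)$ once $T\gg 1/\sqrt{\epsilon}$.

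So your per-step bound holds only up to the factor $\omega/(\omega-1)$. In the graph case this factor is at most $2$, because $\omega=2R_sd_s$ and $R_sd_s=1/p_s\ge 1$. That is what the argument behind \zcref{lem:contr-transd}, and hence the paper's ``same argument'', implicitly uses. For the general lemma you must enforce such a separation.

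One clean way is padding. Adjoin a fresh vector $\ket{\bot}$ with $\Pi\ket{\bot}=0$ and $\Delta\ket{\bot}=\ket{\bot}$, so the walk becomes $U\oplus(-1)$, still one controlled call to $U$. Start from $\ket{\psi'}=(\ket{\psi}+\ket{\bot})/\sqrt{2}$.
\begin{itemize}
\item The intersection $\ker(\Pi)\cap\ker(\Delta)$ is unchanged, so $\omega'=2\omega\ge 2$.
\item The catalyst becomes $w/\sqrt{2}$, so $W'=W/2$.
\item Each step then costs at most $1+W'\omega'/(\omega'-1)\le 1+W\le\tau$ per unit squared norm.
\item An $O(1/T)$-estimate of $1/\sqrt{2\omega}$ yields one of $1/\sqrt{\omega}$.
\end{itemize}
With this modification the rest of your argument goes through: summing over the $T$ steps, \zcref{lem:trans-to-alg} with $K=\Theta(\tau T)$, and phase estimation.
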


As we elaborate in \zcref{app:span-program}, this estimation problem exactly corresponds to the \emph{witness size estimation} problem for span programs, as was considered by Ito and Jeffery in \cite{ito2019approximate}.
In that case, $\Pi$ and $\ket{\psi}$ correspond to some fixed projector and state, $\Delta$ is a projector depending on some input string, $\omega$ is called the positive witness size, and $W$ is upper bounded by some measure of the maximum negative witness size.\footnote{In their notation, $\Pi = \Pi_{T^\perp}$, $\Delta = \Pi_{H(x)^\perp}$, $\ket{\psi} = \ket{w_0}$, $\omega = w_+(x)$ and $W$ is bounded by $\tilde W_-$.}
In their \cite[Theorem 10]{ito2019approximate} they give a rebalancing procedure, similar to our graph modification (\zcref{sec:graph-mod}), that allows us to balance between the transduction complexity $W$ and the quantity~$\omega$ that we want to estimate.
In particular, it allows us to implement a binary search strategy as in \zcref{subsec:eff-res-unknown} to find a multiplicative estimate.
This yields the following theorem, which improves on the $\tO(\sqrt{\omega W}/\varepsilon^{3/2})$ complexity in \cite[Theorem 11]{ito2019approximate}.

\begin{theorem} \label{thm:witness-size}
Consider projectors $\Pi,\Delta$ and an initial state $\ket{\psi} \in \ker(\Pi)$.
Given an upper bound $\sigma \geq \sqrt{\omega W}$, there is a bounded-error quantum algorithm that returns an $\varepsilon$-multiplicative estimate of~$\omega$ in
\[
O\left(\sigma \left(\frac{1}{\varepsilon} + \log(\omega)\right)\right)
\]
calls to $U$ and with a logarithmically larger time complexity.
\end{theorem}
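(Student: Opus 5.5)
The plan is to mirror the graph case of \zcref{sec:eff-res-known} and \zcref{subsec:eff-res-unknown}, with the graph modification of \zcref{sec:graph-mod} replaced by the span program rebalancing of \cite[Theorem 10]{ito2019approximate}.

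First I would pin down what that rebalancing gives in our language. For a parameter $\eta>0$ (our analogue of the extra edge weight $\eta d_s$), it yields projectors $\hat\Pi,\hat\Delta$ and a state $\ket{\hat\psi}\in\ker(\hat\Pi)$. I expect these to satisfy relations analogous to \zcref{lem:ET-mod-graph}:
\[
\hat\omega = 1+\eta\,\omega
\qquad\text{and}\qquad
\hat W+1 \leq 2+\frac{W}{\eta\,\omega}.
\]
Concretely, $\ket{\psi}$ is replaced by a normalized combination of a fresh basis vector (playing the role of $\ket{\phi_\sigma}$) and $\ket{\psi}$, and $\hat\Delta$ is extended so that the new "edge" lies in its kernel. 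I would verify these relations directly. The kernel intersection of the modified pair projects $\ket{\hat\psi}$ onto a vector built from $P_{\ker(\Pi)\cap\ker(\Delta)}\ket{\psi}$ plus the new coordinate. The catalyst $(\hat\Pi-\hat\Pi\hat\Delta\hat\Pi)^+\hat\Delta\ket{\hat\psi}$ from \zcref{thm:eff-gap-transducer} is a rescaling of the old catalyst plus a bounded extra term. The computation of $\hat\omega$ is a one-dimensional series-resistance calculation. I expect this verification to be the main obstacle: we need the exact form of the catalyst bound rather than Ito--Jeffery's approximate one, and that form is only clear once $\hat\Pi,\hat\Delta$ are written out explicitly. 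Unlike the graph case, we cannot appeal to escape-time formulas here.

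Known estimate. Suppose $\omega/2\le p\le\omega$. Take $\eta = \sigma^2/p^2$, which is at least $W/\omega$ because $\sigma^2\ge\omega W$. Then $\hat W+1\le 3$ and $\hat\omega = \Theta(\sigma^2/\omega)$. Apply \zcref{lem:witness-prim} with $\tau=3$ and $T=\Theta(\sqrt{\hat\omega}/\varepsilon)=\Theta(\sigma/(\varepsilon\sqrt{\omega}))$. This returns $1/\sqrt{\hat\omega}$ to additive error $\varepsilon/\sqrt{\hat\omega}$, hence an $O(\varepsilon)$-multiplicative estimate of $\hat\omega$, and then one of $\omega=(\hat\omega-1)/\eta$ (using $\eta\omega\ge1$). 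The cost is $O(T)=O(\sigma/(\varepsilon\sqrt\omega))\subseteq O(\sigma/\varepsilon)$ calls to the modified $U$. One call to the modified walk should be implementable with $O(1)$ calls to $U$, as in \cite[Lemma 19]{apers2019unified}.

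Unknown estimate. Run the halving search of \zcref{subsec:eff-res-unknown} on $\tilde p$ (a guess for $1/\omega$). In each round, set $\eta=\tilde p\sigma^2/\omega$-style choices so that $\hat W+1\le3$ whenever $\tilde p\gtrsim1/\omega$. Use $T=\Theta(\sigma)$ and test whether the estimate of $1/\sqrt{\hat\omega}$ is below $\Theta(1/\sigma)$. The same inequalities as before give $\tilde p\,\omega=\Theta(1)$ on termination. The search uses $O(\log\omega)$ rounds in expectation, each costing $O(\sigma)$, and errors are handled via \cite[Theorem 15]{brassard2000quantum}. Adding the two stages gives $O(\sigma(1/\varepsilon+\log\omega))$. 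The logarithmic time overhead comes from the counter in \zcref{lem:contr-transd-comp} and from the QFT.
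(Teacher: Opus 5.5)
Your overall route (Ito--Jeffery rebalancing, then \zcref{lem:witness-prim} with $\tau=O(1)$ given a constant-factor estimate, then a halving search) is the one the paper sketches. However, the rebalancing relation you posit, $\hat W+1\le 2+W/(\eta\omega)$, is off by a factor of $\omega$, and your parameter choices rest on it. In the graph instance ($\Pi=\Pi_*$, $\Delta=\Pi_+$, $\ket{\psi}=\ket{\phi_s}$) one has $\omega=2R_sd_s$ and $W+1=\ET_s/(R_sd_s)$, so \zcref{lem:ET-mod-graph} actually reads $\hat W+1\le 2+(W+1)/\eta$. Your version cannot hold in general. The identity in the proof of \zcref{lem:ET-mod-graph} gives $\hat\omega(\hat W+1)=2\hat\ET_\sigma\ge \frac{2R_s}{\hat R_\sigma}\ET_s\ge\frac12\,\omega(W+1)$ whenever $\eta R_sd_s\ge 1$. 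So rebalancing trades $\omega$ against $W$ but never shrinks their product, whereas your relation with $\eta\approx W/\omega$ would make the product $O(1+W)$. The symptom is your final bound $O(\sigma/(\varepsilon\sqrt{\omega}))$, i.e.\ $O(\sqrt{W}/\varepsilon)$ when $\sigma=\sqrt{\omega W}$, which beats the theorem by a factor $\sqrt{\omega}$ and is impossible. On a path of length $n$ with $s$ at one end, $W+1\approx 2n/3$, so you would obtain a constant-factor estimate of $R_s$ in $O(\sqrt n)$ walk steps. Yet $\Omega(n)$ steps are needed before the walk started at $\ket{\phi_s}$ can even feel the sink.

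Concretely, with your choice $\eta=\sigma^2/p^2\approx\sigma^2/\omega^2$, the true modified transduction complexity is of order $(W+1)\omega^2/\sigma^2\approx\omega$, not $O(1)$. Running \zcref{lem:witness-prim} with $\tau=3$ therefore does not produce a valid estimate, and with the correct $\tau$ the cost is about $\sqrt{\omega}$ times the claimed bound. The fix is to require $\eta\gtrsim W+1$: take $\eta=\sigma^2/p$, reading $\sigma^2\gtrsim\omega(W+1)$, which is what $\tilde W_-\ge\tilde w_-(x)=1+W$ provides in the span program setting. Then $\hat W+1\le 3$, $\hat\omega=1+\eta\omega=\Theta(\sigma^2)$ and $T=\Theta(\sigma/\varepsilon)$, so the cost is $O(\sigma/\varepsilon)$, which is exactly the stated bound and not better. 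In the search, likewise use $\eta=\tilde p\,\sigma^2$, the analogue of the paper's $\eta=\tilde p\,\bar\ET_s$. Your ``$\tilde p\sigma^2/\omega$'' inherits the wrong scaling and also depends on the unknown $\omega$. With that change, your threshold test at $\Theta(1/\sigma)$ with $T=\Theta(\sigma)$ goes through as in \zcref{subsec:eff-res-unknown}.
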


Following the arguments in~\cite[Section 5]{ito2019approximate}, this improved witness size estimation algorithm also yields an improved algorithm for estimating the effective resistance of an unweighted graph in the adjacency matrix model.

\begin{corollary}
Given an upper bound $\mu \geq n\sqrt{R_s}$, there is a bounded-error quantum algorithm that returns an $\varepsilon$-multiplicative estimate of $R_s$ in an unweighted graph using
\[
O\left(\mu \left(\frac{1}{\varepsilon} + \log(R_s)\right)\right)
\]
quantum queries in the adjacency matrix model and with a logarithmically larger time complexity.
\end{corollary}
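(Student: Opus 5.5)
The corollary is obtained by plugging a specific span program into \zcref{thm:witness-size}, following \cite[Section 5]{ito2019approximate}. There, Ito and Jeffery give a span program for $s$-$t$ connectivity in the adjacency matrix model, where the input string is the adjacency matrix of an unweighted graph on $n$ vertices. In our notation its ingredients are as follows. The projector $\Pi$ and state $\ket{\psi}$ are fixed, essentially $\ket{\psi} \propto \ket{s}-\ket{t}$ in a vertex space. The projector $\Delta$ depends on the input through the available edges. The positive witness size is $\omega = R_s$ (up to the normalization convention they use, possibly a constant factor such as $2R_s$). The relevant negative-witness quantity is bounded by $\tilde W_- \in O(n^2)$, because a negative witness is essentially a cut/potential function with values in $[0,1]$ summed over at most $n^2$ pairs. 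I would first check, via the dictionary in the footnote ($\Pi = \Pi_{T^\perp}$, $\Delta = \Pi_{H(x)^\perp}$, $\ket{\psi}=\ket{w_0}$) and \zcref{app:span-program}, that $W \le \tilde W_- \in O(n^2)$ and $\omega \in \Theta(R_s)$.

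With these identifications, $\sqrt{\omega W} \in O(n\sqrt{R_s})$, so the given $\mu \ge n\sqrt{R_s}$ serves, up to a constant, as the upper bound $\sigma$ required in \zcref{thm:witness-size}. That theorem then returns an $\varepsilon$-multiplicative estimate of $\omega$, and hence of $R_s$, using $O(\mu(1/\varepsilon + \log R_s))$ calls to $U = (2\Pi-I)(2\Delta-I)$. Any constant factor relating $\omega$ and $R_s$ does not affect multiplicative precision.

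To convert calls to $U$ into query complexity, note that $2\Pi - I$ is input-independent and costs no queries. The reflection $2\Delta - I$ is diagonal or block-local in the edge register, conditioned on whether the pair $(u,v)$ is an edge, so it can be implemented with $O(1)$ queries to the adjacency matrix. This gives the claimed query bound. The time overhead is polylogarithmic, because $2\Pi-I$ has an efficient implementation for this span program, as in \cite{ito2019approximate}.

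The main obstacle is the careful matching of normalizations. The bound $W \lesssim n^2$ must hold for the exact catalyst norm $\|(\Pi-\Pi\Delta\Pi)^+\Delta\ket{\psi}\|^2$ as defined here, and not only for Ito–Jeffery's $\tilde W_-$. I would argue this from the minimal-norm property of the pseudoinverse: the catalyst norm is at most the norm of any feasible negative witness. Concretely, the potential $v_x\in[0,1]$ gives a negative witness of norm $O(n^2)$.
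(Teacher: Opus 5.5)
Your route is the paper's: instantiate \zcref{thm:witness-size} with the $s$-$t$ connectivity span program of \cite[Section 5]{ito2019approximate}, bound $\sqrt{\omega W}$ by $O(n\sqrt{R_s})$, and implement $U$ with $O(1)$ queries. The $\mu/\varepsilon$ term comes out right. However, your identification of $\ket{\psi}$ and $\omega$ is wrong, and this matters for the logarithmic term.

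The vector $\ket{s}-\ket{t}$ is the target $\tau\in V$. The state is $\ket{\psi}=\ket{w_0}=A^+\tau$, which is an edge-space vector: the unit $s$-$t$ electric flow on the complete graph $K_n$, with $\|A^+(\ket{s}-\ket{t})\|^2=R_{s,t}(K_n)=2/n$. Since \zcref{thm:witness-size} needs $\ket{w_0}$ to be a unit vector, $\tau$ must be rescaled by $\sqrt{n/2}$, not by a constant. In these units $\omega=w_+(x)=R_s/R_{s,t}(K_n)=nR_s/2$. By \zcref{lem:transd-span}, the normalized potential divided by $\sqrt{n/2}$ gives $W=\tilde w_-(x)-1\le\frac{2}{n}\binom{n}{2}-1<n$.

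The product $\omega W$ is invariant under rescaling $\tau$. So $\sqrt{\omega W}\le n\sqrt{R_s/2}\le\mu$ still holds, provided both factors are taken in the same normalization; pairing your $W\in O(n^2)$ with the correct $\omega$ would only give $O(n^{3/2}\sqrt{R_s})$. Since $n$ is known, an estimate of $\omega$ still converts to one of $R_s$.

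The search term, however, depends on $\omega$ alone. The black-box theorem yields $O(\mu(1/\varepsilon+\log(nR_s)))$, and your $\log R_s$ comes only from the false claim $\omega\in\Theta(R_s)$. Note that $R_s$ can be as small as $2/n$, so $\log R_s$ can even be negative, whereas $\log\omega\ge 0$. Recovering the displayed form needs something beyond the black box. For example, since $W<n$ for \emph{every} input, one can fix the rebalancing parameter in advance instead of searching for it. The paper's proof is only a pointer to \cite[Section 5]{ito2019approximate} and does not spell this out either.

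A smaller point concerns your use of the minimal-norm property of the pseudoinverse. It only compares the catalyst with other solutions $\nu\in T^\perp$ of $\Pi_{T^\perp}\Pi_{H(x)}(\ket{w_0}+\nu)=0$, i.e.\ with error-minimizing negative witnesses; this is exactly \zcref{lem:transd-span}. It does not compare it with arbitrary feasible witnesses. For instance, $\omega'$ with $A^\dagger\omega'=\ket{w_0}$ is feasible with $\|A^\dagger\omega'\|^2=1$, yet two $n/2$-cliques joined by a single edge, with $s$ and $t$ in different cliques, have $W=\Theta(n)$. Your concrete witness, the normalized potential with constants in $[0,1]$ on components not containing $s,t$, is the error minimizer by the Dirichlet principle, so the bound itself is fine once phrased this way.
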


\section{High-precision elfs} \label{sec:high-prec-elfs}

We now turn to the task of actually preparing elfs.
In a first approach, we use fixed-point amplitude amplification \cite{yoder2014fixed} to obtain a finite-dimensional transducer that prepares elfs to high precision, proving \zcref{thm:elfs-informal} from the introduction.
As the algorithm in \cite{yoder2014fixed} requires partial rotations
\[
S_{f}(\theta)=I-(1-e^{i \theta}) \ketbra{f}{f} \quad \text{ and } \quad S_s(\phi)=I-(1-e^{-i \phi})\ketbra{\phi_s}{\phi_s},
\]
we will use the following lemma.
It follows from combining \zcref{thm:eff-gap-transducer-general} (with $\Pi=\Pi_*$ and $\Delta=\Pi_+$) and \zcref{thm:elf-transd-refl}.

\begin{lemma}[Elfs rotation]
\label{lem:elfs-rotation}
Define the operator
\[
U(\theta,\phi)
= S_s(\phi) \Big( I - (1-e^{i\theta})(I-\Pi_+) \Big)
\]
with parameters $\theta,\phi\in(0,\pi)$.
For any $\ket{\psi} \in \mathrm{span}\{\ket{\phi_s},\ket{f}\}$ it holds that
\[
\ket{\psi}
\overset{U(\theta,\phi)}{\rightsquigarrow}
S_s(\phi) S_{f}(\theta) \ket{\psi}
\]
with public space $\ker(\Pi_*)$ and transduction complexity $W = \frac{\ET_s}{R_s d_s} - 1$.
\end{lemma}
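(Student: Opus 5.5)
The plan is to treat the two basis directions of $\mathrm{span}\{\ket{\phi_s},\ket{f}\}$ separately and then use linearity of transductions. Write $\ket{\psi} = \alpha\ket{\phi_s} + \beta\ket{f}$; both vectors lie in $\ker(\Pi_*)$, since $\ket{\phi_s}$ is orthogonal to every $\ket{\phi_x}$ with $x\neq s$ and $\inner{\phi_x|f}=0$ for $x\notin\{s\}\cup M$.

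First, I will apply \zcref{thm:eff-gap-transducer-general} with $\Pi=\Pi_*$, $\Delta=\Pi_+$ and input $\ket{\psi}$. This gives
\[
\ket{\psi}\overset{U(\theta)}{\rightsquigarrow}\big(I-(1-e^{i\theta})P_{\ker(\Pi_+)\cap\ker(\Pi_*)}\big)\ket{\psi}
\]
with catalyst $w=(\Pi_*-\Pi_*\Pi_+\Pi_*)^+\Pi_+\ket{\psi}\in\mathrm{im}(\Pi_*)$. Here $U(\theta)=I-(1-e^{i\theta})(I-\Pi_+)$.

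Next I identify the action on the two directions. On $\ket{\phi_s}$, \zcref{eq:elf-proj} gives $P_{\ker(\Pi_+)\cap\ker(\Pi_*)}\ket{\phi_s}=\ketbra{f}{f}\phi_s\rangle$. On $\ket{f}$, which lies in the invariant space, the projector acts as the identity, and $\Pi_+\ket{f}=0$, so the catalyst for $\ket{f}$ is zero. By linearity the output is $S_f(\theta)\ket{\psi}$. The catalyst is $\alpha w_s$, where $w_s$ is the explicit catalyst of \zcref{thm:elf-transd-refl}. That catalyst coincides with the formula above because it makes $(I-\Pi_+)(\ket{\phi_s}\oplus w_s)$ equal to the projection, as shown in the proof of \zcref{thm:elf-transd-refl}.

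Then I post-compose with $S_s(\phi)$. This step needs care, since $S_s(\phi)$ acts on the whole space, not only the public part. However, $\ket{\phi_s}\in\ker(\Pi_*)$, so $S_s(\phi)$ is the identity on $\mathrm{im}(\Pi_*)$ and therefore leaves the catalyst untouched. It also maps $\ker(\Pi_*)$ into itself. Hence
\[
U(\theta,\phi)\big(\ket{\psi}\oplus \alpha w_s\big)=S_s(\phi)S_f(\theta)\ket{\psi}\oplus\alpha w_s,
\]
which is the claimed transduction with public space $\ker(\Pi_*)$. The underlying principle is that a public-space unitary applied after a transducer composes trivially.

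Finally, the complexity is $|\alpha|^2\|w_s\|^2$, which is at most $\frac{\ET_s}{R_sd_s}-1$ by \zcref{thm:elf-transd-refl}, and equals it for $\ket{\psi}=\ket{\phi_s}$. The main obstacle is interpreting the complexity for general $\ket{\psi}$. I read the statement as the complexity on $\ket{\phi_s}$, or as a uniform upper bound over normalized inputs. This needs $|\alpha|\le 1$, which fails for non-orthogonal decompositions. I would handle it by decomposing along $\ket{f}$ and the orthogonal complement of $\ket{f}$ in the span, instead of along $\ket{\phi_s}$ and $\ket{f}$. Then $w$ depends only on the $\Pi_+$-component of $\ket{\psi}$, and the worst case is still $\ket{\phi_s}$ up to normalization by the overlap. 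This may cost a factor $1/(1-\frac{1}{2R_sd_s})\le 2$, which is harmless.
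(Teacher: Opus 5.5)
Your proposal is correct and follows the paper's route: apply the partial-rotation effective gap transducer with $\Pi=\Pi_*$, $\Delta=\Pi_+$, identify the projection via the elfs projection identity and the explicit catalyst from the elfs transducer, then observe that $S_s(\phi)$ fixes $\mathrm{im}(\Pi_*)$ and preserves $\ker(\Pi_*)$. You do not need the explicit catalyst to equal the pseudoinverse expression, since any $w\in\mathrm{im}(\Pi_*)$ with $(I-\Pi_+)(\ket{\phi_s}+w)=P_{\ker(\Pi_+)\cap\ker(\Pi_*)}\ket{\phi_s}$ works. Your caveat on general normalized inputs is also accurate: the complexity is $|\alpha|^2(\frac{\mathrm{ET}_s}{R_sd_s}-1)\le \frac{1}{1-1/(2R_sd_s)}(\frac{\mathrm{ET}_s}{R_sd_s}-1)\le 2(\frac{\mathrm{ET}_s}{R_sd_s}-1)$, a harmless sharpening of what the paper states.
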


We can now prove the following theorem.
\begin{theorem} \label{thm:elfs-prep-fp}
Assuming a lower bound $\bar p \leq \frac{1}{R_s d_s}$, there is a transducer $\tilde U$ so that
\[
\ket{\phi_s}
\overset{\tilde U}{\rightsquigarrow} \ket{f} + O(\varepsilon)
\]
with transduction complexity
\[
O\left( \frac{1}{\sqrt{\bar p}} \frac{\mathrm{ET}_s}{R_s d_s}\log \frac{1}{\varepsilon} \right).
\]
The transducer $\tilde U$ makes one call to the quantum walk operator and $\tO(1)$ other elementary operations.
\end{theorem}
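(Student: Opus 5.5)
The plan is to combine fixed-point amplitude amplification \cite{yoder2014fixed} with the elfs rotation transducer (\zcref{lem:elfs-rotation}) via transducer composition (\zcref{lem:contr-transd-comp}).

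\textbf{Choosing the amplification schedule.} Yoder, Low and Chuang's algorithm prepares a target state from an initial state $\ket{\phi_s}$ whose overlap with the target satisfies $|\inner{f|\phi_s}|^2 = \frac{1}{2R_sd_s} \geq \bar p/2$. It uses a sequence of $L$ generalized Grover iterates $S_s(\phi_j)S_f(\theta_j)$, $j=1,\dots,L$. Choosing the angles for lower bound $\bar p/2$ and error $\varepsilon$ gives $L \in O(\frac{1}{\sqrt{\bar p}}\log\frac1\varepsilon)$ and final state $\ket{f}+O(\varepsilon)$, up to a global phase that can be absorbed. The angles depend only on $\bar p$ and $\varepsilon$, not on $R_s d_s$. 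Every intermediate state stays in $\mathrm{span}\{\ket{\phi_s},\ket{f}\}$, since both partial rotations act nontrivially only on this two-dimensional subspace.

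\textbf{Composing the transducers.} By \zcref{lem:elfs-rotation}, each iterate is transduced by $U(\theta_j,\phi_j)$ on public space $\ker(\Pi_*)$ with transduction complexity at most $\|\psi\|^2(\frac{\ET_s}{R_sd_s}-1)$ for input $\psi$ in the span. I apply \zcref{lem:contr-transd-comp} with $m=L$, $S_t = U(\theta_{t+1},\phi_{t+1})$, $\psi_{t,1}=0$ for $t<L$, and $\psi_{L,1}$ equal to the final state. Because the transducers must commute with public control, I instead use the counter variant in the style of \zcref{lem:contr-transd}: at counter value $t$ apply $U(\theta_{t+1},\phi_{t+1})$, then increment the counter. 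Every intermediate state is normalized, so the total complexity is
\[
\sum_{t=0}^{L-1}\Big(1+\tfrac{\ET_s}{R_sd_s}-1\Big) = L\,\frac{\ET_s}{R_sd_s},
\]
which gives the claimed bound.

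The final public state is $\ket{L}\otimes(\ket{f}+O(\varepsilon))$. Uncomputing the counter, or treating $\ket{L}$ as a fixed known label, yields $\ket{\phi_s}\overset{\tilde U}{\rightsquigarrow}\ket{f}+O(\varepsilon)$.

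\textbf{Cost of one call.} $\tilde U$ is a single counter-controlled application of $U(\theta,\phi)$. The angles are looked up from the counter, so one call uses a controlled partial reflection $I-(1-e^{i\theta})(I-\Pi_+)$, which is one call to the walk via phase kickback on the SWAP reflection. It also uses a controlled $S_s(\phi)$, which requires preparing $\ket{\phi_s}$ (assumed available). The remaining work is $O(\log L)=\tO(1)$ counter operations.

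\textbf{Main obstacle.} The delicate step is checking that \zcref{lem:elfs-rotation} applies with a uniform complexity bound at every stage. The catalyst scales linearly in the input: it is $\alpha w$ for the $\ket{\phi_s}$-component $\alpha$, and $0$ for the $\ket{f}$-component. Hence $\|w\|^2\le \frac{\ET_s}{R_sd_s}-1$ for every normalized input in the span, which is what the sum above needs. A second point to verify is that the public-space restriction ($\ker\Pi_*$, controlled on the counter) is compatible with composition.
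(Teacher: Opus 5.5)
Your proposal is correct and follows the paper's proof. It uses fixed-point amplitude amplification \cite{yoder2014fixed} with $L\in O(\bar p^{-1/2}\log(1/\varepsilon))$ iterates $S_s(\phi_\ell)S_f(\theta_\ell)$, each implemented by the elfs rotation transducer (\zcref{lem:elfs-rotation}) and chained by the counter-based composition of \zcref{lem:contr-transd-comp}, giving total complexity $O(L\,\ET_s/(R_sd_s))$.

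One small correction to your ``main obstacle'' paragraph: $\ket{\phi_s}$ and $\ket{f}$ are not orthogonal, so the $\ket{\phi_s}$-coefficient of a normalized vector in their span can be as large as $(1-\frac{1}{2R_sd_s})^{-1/2}\le\sqrt{2}$, using $R_sd_s\ge 1$. The per-step catalyst bound is therefore $2(\frac{\ET_s}{R_sd_s}-1)$ rather than $\frac{\ET_s}{R_sd_s}-1$, which only changes constants.
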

\begin{proof}
Given the lower bound $\bar p \leq \frac{1}{R_s d_s} = 2|\inner{f|\phi_s}|^2$, the fixed-point quantum search algorithm in \cite[Equation 11]{yoder2014fixed} describes a sequence of efficiently computable angles $\{\theta_\ell,\phi_\ell\}_{\ell=1}^L$ for $L \in O\left(\frac{1}{\sqrt{\bar p}} \log \frac{1}{\varepsilon} \right)$ such that
\[
S_s(\phi_L) S_f(\theta_L) \dots S_s(\phi_1) S_f(\theta_1) \ket{\phi_s}
= \ket{f} + O(\varepsilon).
\]
We can implement this sequence as a transducer by using \zcref{lem:contr-transd-comp} to sequentially compose~$L$ elfs rotations, as in \zcref{lem:elfs-rotation}.
\end{proof}

To get the complexity in \zcref{thm:elfs-informal}, we assume an upper bound $\bar\ET_s \geq \ET_s$ and a constant factor approximation $\frac{1}{2 R_s d_s} \leq \tilde p \leq \frac{1}{R_s d_s}$.\footnote{By using the median of means trick (see e.g.~\cite{vershynin2018high}), we can get such a constant-factor approximation with success probability $1-\delta$ in complexity $O(\sqrt{\bar\ET_s} \log(R_s d_s) \log(1/\delta))$ by applying \zcref{thm:intro-eff-res} $O(\log(1/\delta))$ times.}
We can then consider the modified graph $\hat G$ with $\eta = \tilde p \cdot \bar\ET_s$ (\zcref{sec:graph-mod}), which satisfies $\hat R_\sigma \hat d_\sigma = 1 + \eta R_s d_s \leq 1 + \bar\ET_s$.
Applying \zcref{thm:elfs-prep-fp} with $\bar p = \frac{1}{1+\bar\ET_s}$ then yields a transducer to prepare the electric flow to precision~$\varepsilon$ with transduction complexity
\[
O\left( \frac{1}{\sqrt{\bar p}} \frac{\hat\ET_\sigma}{\hat R_\sigma \hat d_\sigma} \log \frac{1}{\varepsilon} \right)
\in O\left( \sqrt{\bar\ET_s} \log \frac{1}{\varepsilon} \right).
\]

A technical caveat is that this prepares the electric flow $\ket{\hat f}$ from $\sigma$ to $M$ on $\hat G$, rather than $\ket{f}$ from $s$ to $M$ on $G$.
However, $\ket{\hat f}$ and $\ket{f}$ are proportional outside of the added edge $(\sigma s)$, and it holds that
\begin{equation} \label{eq:mod-flow}
|\inner{\hat f|f}|^2
= \frac{R_s}{\hat R_\sigma}
= \frac{1}{1 + \frac{1}{\eta R_s d_s}}
\geq 1 - \frac{2}{\bar\ET_s}.
\end{equation}
This suffices to simulate the elfs process on the original graph in \zcref{sec:elfs-process}, and we suspect it suffices for most applications.

\section{Exact elfs}

We can prepare elfs exactly by using more advanced transducer machinery.
In particular, by allowing the use of an infinite counter, transducers can capture the complexity of Las Vegas algorithms that have unbounded worst-case complexity (both in the classical sense \cite{motwani1996randomized}, as we will use it, and in the quantum sense \cite{belovs2023one}).
We use this to implement as a transducer the zero-error Las Vegas algorithm for quantum search from \cite{boyer1998tight}, the idea of which is to increase the number of calls to the amplitude amplification operator, until the algorithm succeeds.
The expected complexity of this algorithm is shown to scale optimally, and it is this expected complexity that transducers can capture.

In the following theorem, we consider an initial state
\[
\ket{\phi} = \alpha \ket{\phi_1} + \beta \ket{\phi_0}
\]
with $\ket{\phi_0}$, $\ket{\phi_1}$ orthogonal states.
We assume that we have two transducers $U$ and $V$ that on $\ket{\psi} \in \mathrm{span}\{\ket{\phi_0},\ket{\phi_1}\}$ implement the reflections
\[
\ket{\psi} \overset{U}{\rightsquigarrow} (2\ket{\phi}\bra{\phi}-I) \ket{\psi}
\quad \text{ and } \quad
\ket{\psi} \overset{V}{\rightsquigarrow} (2\ket{\phi_1}\bra{\phi_1}-I) \ket{\psi}
\]
with respective transduction complexities $W_U$ and $W_V$ and time complexities $T_U$ and $T_V$.
We defer the proof of the theorem to \zcref{app:zero-error aa}.

\begin{restatable}{theorem}{aatransd} \label{thm:zero-error aa}
There exists a transducer with an infinite counter that transduces
\[
\ket{\phi} \ket{0}
\rightsquigarrow
\ket{\phi_1} \ket{\Gamma}
\]
with $\ket{\Gamma}$ some garbage state.
The transducer has transduction complexity $O\left(\frac{W_U + W_V + 1}{\alpha}\right)$, makes 1~call to controlled $U$ and $V$, and has additional time complexity $O(T_{c,\infty}/\alpha)$.
\end{restatable}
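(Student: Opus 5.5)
We will turn the Las Vegas search algorithm of Boyer, Brassard, Høyer and Tapp into a transducer, using the composition lemma (\zcref{lem:contr-transd-comp}) with an infinite counter. The relevant Grover iterate is $G = (2\ket{\phi}\bra{\phi}-I)(2\ket{\phi_1}\bra{\phi_1}-I)$. On $\mathrm{span}\{\ket{\phi_0},\ket{\phi_1}\}$ it is implemented as a transducer by composing $V$ and then $U$, with complexity $W_V + W_U + 1$ per application. We write $\alpha = \sin\vartheta$, so that $G^k\ket{\phi} = \sin((2k+1)\vartheta)\ket{\phi_1} + \cos((2k+1)\vartheta)\ket{\phi_0}$.

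The BBHT algorithm works in rounds $j = 0,1,2,\dots$ with $M_j = \lceil \lambda^j \rceil$ for a constant $\lambda \in (1,4/3)$. In round $j$ it draws $k$ uniformly from $\{0,\dots,M_j-1\}$ and applies $G^k$ to a fresh copy of $\ket{\phi}$. It then measures whether the result is in $\ket{\phi_1}$, stopping on success. We make this coherent as follows.
\begin{enumerate}
\item In round $j$, a fresh garbage register prepares $\frac{1}{\sqrt{M_j}}\sum_k \ket{k}$. This is why the construction needs an infinite array of qubits.
\item We apply $G^k$ controlled on $\ket{k}$. This is done inside the counter-driven composition, where each global step applies one controlled $G$ to the branches whose sub-counter has not yet reached $k$.
\item The \emph{measurement} is replaced by sorting. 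The $\ket{\phi_1}$-component goes into the output subspace $\mathcal H_1$, where it stops being acted on. The $\ket{\phi_0}$-component goes into $\mathcal H_0$ and continues.
\end{enumerate}
To restart from a fresh $\ket{\phi}$, the failed component $\cos((2k+1)\vartheta)\ket{\phi_0}$ must become a multiple of $\ket{\phi}$. This can be done by applying $U$ and $V$ to rotate it back. Alternatively, the rotation angle can be absorbed into the garbage register, since by linearity only the amplitude matters. In either case the branch ends up in $\ket{\phi_1}$ with some garbage attached, and $\ket{\Gamma}$ collects all the round and index registers. Since every branch eventually terminates in $\ket{\phi_1}\otimes(\cdot)$, the total public output is exactly $\ket{\phi_1}\ket{\Gamma}$, so the transduction is zero-error.

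For complexity, \zcref{lem:contr-transd-comp} gives transduction complexity $\sum_t \|\psi_{t,0}\|^2(1+W_t)$. Here $\|\psi_{t,0}\|^2$ is exactly the probability that the classical BBHT algorithm is still running at step $t$. So the total is the expected number of Grover iterations (plus counter steps) times $O(W_U+W_V+1)$. The BBHT analysis shows this expectation is $O(1/\alpha)$. Once $M_j \ge 1/\sin(2\vartheta)$, the success probability per round is at least $1/4$. The remaining rounds then form a geometric series against the growth rate $\lambda < 4/3$. The same bound of $O(1/\alpha)$ expected counter steps gives the additional time complexity $O(T_{c,\infty}/\alpha)$, and each global step makes one controlled call to $U$ and one to $V$.

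The main obstacle is the bookkeeping in step 2 and the restart. The per-step accounting of \zcref{lem:contr-transd-comp} needs every branch to make one transducer call per step, and the norms $\|\psi_{t,0}\|^2$ have to match the classical run probabilities. The plan is to check this by writing the state at step $t$ explicitly as a sum over rounds $j$ and indices $k$. The re-preparation of $\ket{\phi}$ from the failed $\ket{\phi_0}$-component adds at most a constant number of calls per round, so it changes the bound only by a constant factor.
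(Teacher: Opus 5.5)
\paragraph{The restart step is a genuine gap.} Your architecture matches the paper's. You build a coherent version of the BBHT Las Vegas search, replace measurement by sorting into $\mathcal H_1$ (halt) and $\mathcal H_0$ (continue), and glue the rounds with \zcref{lem:contr-transd-comp}. Then $\|\psi_{t,0}\|^2$ is the classical survival probability, and the cost is the expected BBHT running time times $O(W_U+W_V+1)$.

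The gap is at the step you flag as the main obstacle, the restart from a fresh $\ket{\phi}$. Neither of your two options works.
\begin{itemize}
\item \emph{Re-preparing or rotating back.} In BBHT one has $\ket{\phi}=A\ket{0}$ and can re-prepare it, but here you get a single copy of $\ket{\phi}$ and only the reflections $U$ and $V$. These cannot rotate $\ket{\phi_0}$ back onto $\ket{\phi}$. Put $\ket{\phi_0}$ at angle $0$ and $\ket{\phi}$ at angle $\vartheta$ in the real plane. The two reflections act as $x\mapsto \pi-x$ and $x\mapsto 2\vartheta-x$, so any product of them sends $\ket{\phi_0}$ to an angle $2k\vartheta+l\pi$. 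Reaching $\pm\ket{\phi}$ would require $(2k-1)\vartheta\in\pi\mathbb{Z}$, which fails for generic, unknown $\vartheta$.
\item \emph{Ancillas and controlled calls do not fix this with boundedly many calls.} The unwanted $\ket{\phi^\perp}$-component of the output would be a trigonometric polynomial in $\vartheta$ vanishing on an interval, hence vanishing identically. But at $\vartheta=\pi/2$ both reflections act as $-1$ on $\ket{\phi_0}$, so the system never leaves $\ket{\phi_0}$ while $\ket{\phi}=\ket{\phi_1}$. An exact restart is therefore itself a zero-error amplitude-amplification task, not ``a constant number of calls per round.''
\item \emph{Absorbing the angle into garbage.} ``Absorbing the rotation angle into the garbage register'' is not a well-defined operation. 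The discrepancy between $\ket{\phi_0}$ and $\ket{\phi}$ lives in the system register, not in an amplitude.
\end{itemize}

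\paragraph{The missing idea: do not restart.} This is what the paper does. Every failed branch is exactly $\ket{\phi_0}$ tensored with garbage, and $\ket{\phi_0}$ lies in the invariant plane. So round $t$ can apply $G^j$, with $j$ uniform in $\{1,\dots,M\}$ for the current round length $M$, directly to $\ket{\phi_0}$. This gives $G^j\ket{\phi_0}=\pm\big(\sin(2j\vartheta)\ket{\phi_1}+\cos(2j\vartheta)\ket{\phi_0}\big)$. The per-round failure probability is then independent of the history, so $\|\psi_{t,0}\|^2$ is still the survival probability of a classical Las Vegas process.

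However, the amplitudes are now $\sin(2j\vartheta)$ rather than BBHT's $\sin((2k+1)\vartheta)$, so you cannot cite the BBHT per-round bound verbatim. You need the analogue $\frac{1}{M}\sum_{j=1}^{M}\sin^2(2j\vartheta)\ge \frac12-\frac{1}{2M\sin(2\vartheta)}$, which is at least $1/4$ once $M\ge 2/\sin(2\vartheta)$. With this, your geometric-series argument gives $\sum_t\|\psi_{t,0}\|^2 M_t\in O(1/\alpha)$ and the claimed complexity.

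Your flattened single-counter bookkeeping for the controlled powers is an acceptable alternative. The paper instead packages each round as one transducer $S_t$, built as in \zcref{lem:contr-transd} together with the transducer Hadamard test for the marking, and nests it inside the outer composition.
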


We can use this to prepare the elf state $\ket{f}$ exactly by letting $\ket{\phi} = \ket{\phi_s}$ and $\ket{\phi_1} = \ket{f}$, and using $U = 2\ket{\phi_s}\bra{\phi_s}-I$ and $V$ the elfs transducer from \zcref{thm:elf-transd-refl}.
With $W_U = 0$ and $W_V = \frac{\ET_s}{R_s d_s} - 1$, the resulting transduction complexity is $O\left(\frac{\ET_s}{\sqrt{R_s d_s}}\right)$.
Similar to \zcref{sec:high-prec-elfs}, we can again improve this complexity by turning to the modified graph~$\hat G$.

\begin{theorem}[Exact elfs] \label{thm:exact-elfs}
Given an upper bound $\bar\ET_s \geq \ET_s$ and a constant-factor approximation of $R_s d_s$, there is a transducer with an infinite counter that transduces
\[
\ket{\phi_s} \ket{0}
\rightsquigarrow
\ket{\hat f} \ket{\Gamma_s}
\]
with $\ket{\Gamma_s}$ some garbage state.
The transducer has transduction complexity
\[
O(\sqrt{\bar\ET_s}),
\]
and uses 1 controlled call to the quantum walk operator and $\tO(1)$ elementary operations.
\end{theorem}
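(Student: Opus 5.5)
The plan is to apply \zcref{thm:zero-error aa} on the modified graph $\hat G$ from \zcref{sec:graph-mod}, with source $\sigma$ in place of $s$. We take the decomposition $\ket{\phi} = \ket{\hat\phi_\sigma}$ and $\ket{\phi_1} = \ket{\hat f}$. Here $\ket{\phi_0}$ is the normalized component of $\ket{\hat\phi_\sigma}$ orthogonal to $\ket{\hat f}$, and $\alpha = \inner{\hat f|\hat\phi_\sigma} = 1/\sqrt{2\hat R_\sigma \hat d_\sigma}$ by \zcref{eq:elf-proj}. The two required transducers are as follows.
\begin{itemize}
\item $U = 2\ket{\hat\phi_\sigma}\bra{\hat\phi_\sigma} - I$. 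This is an exact reflection implementable by elementary operations (as in \cite[Lemma 19]{apers2019unified}), so it needs no catalyst, giving $W_U = 0$.
\item $V$ is the elfs transducer of \zcref{thm:elf-transd-refl} applied to $\hat G$. On $\mathrm{span}\{\ket{\hat\phi_\sigma},\ket{\hat f}\}$ it implements $2\ket{\hat f}\bra{\hat f} - I$. Its transduction complexity on $\ket{\hat\phi_\sigma}$ is $\hat\ET_\sigma/(\hat R_\sigma\hat d_\sigma) - 1$, and on $\ket{\hat f}$ it is zero.
\end{itemize}

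For a general $\ket{\psi}$ in the span, the catalyst is linear in the $\ket{\hat\phi_\sigma}$-component, so $W_V \le \hat\ET_\sigma/(\hat R_\sigma\hat d_\sigma)$ holds uniformly. The sign convention $2\ket{\phi_1}\bra{\phi_1}-I$ versus $I - 2\ket{\phi_1}\bra{\phi_1}$ differs only by a global phase. Either it is immaterial or we absorb it by composing with a trivial $-I$.

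Next I fix the rebalancing parameter. Using the constant-factor estimate $\tilde p \in [\frac{1}{2R_sd_s}, \frac{1}{R_sd_s}]$, I set $\eta = \tilde p\,\bar\ET_s$, so that $\eta R_s d_s \in [\bar\ET_s/2, \bar\ET_s]$. If needed I enlarge $\bar\ET_s$ by a constant to ensure $\eta \ge 1$, as required by \zcref{lem:ET-mod-graph}. That lemma then gives
\[
\frac{\hat\ET_\sigma}{\hat R_\sigma\hat d_\sigma} \le 2 + \frac{\ET_s}{\eta R_s d_s} \le 4,
\qquad
\hat R_\sigma\hat d_\sigma = 1 + \eta R_s d_s \le 1 + \bar\ET_s.
\]
Consequently $W_V = O(1)$ and $1/\alpha = \sqrt{2(1+\eta R_s d_s)} = O(\sqrt{\bar\ET_s})$.

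Plugging these into \zcref{thm:zero-error aa} gives transduction complexity $O((W_U+W_V+1)/\alpha) = O(\sqrt{\bar\ET_s})$. The construction makes one controlled call to each of $U$ and $V$. The only non-elementary part of either is $V$, which is a single call to the quantum walk operator on $\hat G$, and that in turn is implementable with one call to the walk on $G$ plus $O(1)$ extra operations. This yields the claimed single controlled walk call.

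The main obstacle is the additional time term $O(T_{c,\infty}/\alpha)$ from \zcref{thm:zero-error aa}. It is not $\tO(1)$ as stated, so I would reinterpret it. Per step of the transducer, the counter operations cost $T_{c,\infty}$, which after truncation to $\tO(K)$ states (by the infinite-dimensions remark) is $O(\log K) = \tO(1)$. The $1/\alpha$ factor should be understood as already accounted for in the transduction complexity, since the algorithm of \zcref{lem:trans-to-alg} makes $K \gtrsim W$ calls anyway. I would verify this by inspecting how the proof in \zcref{app:zero-error aa} charges the counter: check that each transducer step costs one counter increment/conditioning, not $1/\alpha$ of them. Finally, the starting state is $\ket{\phi_s}$ on $G$, whereas the theorem is applied to $\ket{\hat\phi_\sigma}$ on $\hat G$. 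I would either identify the input with $\ket{\hat\phi_\sigma}$, or map between the two by a known local unitary on the added edge. The output is correctly stated as $\ket{\hat f}$, not $\ket{f}$, consistent with \zcref{eq:mod-flow}.
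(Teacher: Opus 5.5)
Your proposal is correct and is essentially the paper's argument: apply \zcref{thm:zero-error aa} on the modified graph $\hat G$ with $U$ the reflection about the source star state ($W_U=0$) and $V$ the elfs transducer of \zcref{thm:elf-transd-refl}, choosing $\eta=\tilde p\,\bar\ET_s$ so that \zcref{lem:ET-mod-graph} gives $W_V=O(1)$ and $1/\alpha=\sqrt{2\hat R_\sigma\hat d_\sigma}=O(\sqrt{\bar\ET_s})$. One small imprecision: for a normalized state in $\mathrm{span}\{\ket{\hat\phi_\sigma},\ket{\hat f}\}$, the squared modulus of the $\ket{\hat\phi_\sigma}$-coefficient can reach $1/(1-\alpha^2)\le 2$, so your ``uniform'' bound on $W_V$ only holds up to this constant factor, which does not affect the result.
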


\section{Elfs process} \label{sec:elfs-process}

From our elfs transducer, we can build a transducer that implements the elfs process from \cite{apers2022elfs} to high precision and even exactly.
As described in \zcref{sec:elfs-intro}, the elfs process for an initial source vertex $s$ and a fixed sink $M$ describes a Markov chain $\{Y_0=s,Y_1,\dots,Y_\rho \in M\}$ over the vertices~$V$ with transition probabilities
\begin{equation} \label{eq:elfs-trans-prob}
\mathbb{P}(Y_{t+1}=y \mid Y_t=x)
= \sum_{z \in V} |\inner{y,z|f_x}|^2
= \frac{1}{2R_x} \sum_{z:(y,z)\in E} \frac{(f_x)_{yz}^2}{w_{yz}}
\end{equation}
with $f_x$ the unit electric flow from $x$ to $M$.
More verbosely, $y$ is a random endpoint of an edge sampled from $f_x$.
The process terminates when a vertex $Y_\rho \in M$ is drawn, and we call the electric hitting time $\EHT_s = \mathbb{E}[\rho]$ the expected time when this happens.
For comparison, we let $\{X_0 = s,X_1,\dots,X_\tau\in M\}$ denote a random walk from $s$ that is absorbed in $M$, so that $\mathbb{E}[\tau] = \HT_s$.
The following is a key fact.

\begin{lemma}[{\cite[Corollary 1]{apers2022elfs}}]
The elfs process has the same arrival distribution as the random walk, $Y_\rho \sim X_\tau$.
\end{lemma}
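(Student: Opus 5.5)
The statement is the lemma that the elfs process has the same arrival distribution as the random walk, $Y_\rho \sim X_\tau$. I will show this by a martingale argument on harmonic functions. For any $m \in M$, let $h_m(x) = \mathbb{P}_x(X_\tau = m)$ be the arrival probability of the random walk started at $x$. This function is harmonic on $V \setminus M$ with boundary values $h_m = \mathbbm{1}_m$ on $M$. The claim reduces to showing that $h_m(Y_t)$ is a martingale for the elfs chain, i.e.
\[
\sum_y \mathbb{P}(Y_{t+1}=y \mid Y_t=x)\, h(y) = h(x)
\]
for every harmonic $h$ on $V\setminus M$ and every $x \notin M$. Given this, the elfs process terminates almost surely, since $\EHT_s \le 2\HT_s < \infty$. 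Bounded optional stopping then gives $\mathbb{P}(Y_\rho = m) = \mathbb{E}[h_m(Y_\rho)] = h_m(s) = \mathbb{P}(X_\tau = m)$.

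\textbf{Main step: the one-step identity.} Fix $x$ and write $f = f_x$ and $v$ for its voltages, so $v_x = R_x$ and $v|_M = 0$. By \zcref{eq:elfs-trans-prob}, the transition probability to $y$ is $\frac{1}{2R_x}\sum_z w_{yz}(v_y-v_z)^2$. Hence
\[
\mathbb{E}[h(Y_{t+1})] = \frac{1}{2R_x}\sum_{(y,z)\in E} w_{yz}(v_y-v_z)^2\, h(y) = \frac{1}{4R_x}\sum_{(y,z)} w_{yz}(v_y-v_z)^2 \bigl(h(y)+h(z)\bigr),
\]
where the second form symmetrizes over the two endpoints. I would then use the algebraic identity
\[
(v_y-v_z)^2\bigl(h_y+h_z\bigr) = (v_y-v_z)\bigl[(v_y h_y - v_z h_z) + (v_y h_z - v_z h_y)\bigr]
\]
and handle the two pieces with discrete Green's identities (summation by parts). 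For any functions $a,b$ we have $\frac12\sum_{(y,z)} w_{yz}(a_y-a_z)(b_y-b_z) = \sum_y a_y (Lb)_y$, where $(Lb)_y = \sum_z w_{yz}(b_y - b_z)$. Here $Lv$ equals $\delta_x$ off $M$ and is supported on $\{x\}\cup M$, while $Lh = 0$ off $M$.

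The first piece is $\sum_y (v_yh_y)(Lv)_y$, summed with $v=0$ on $M$. This gives $v_x h_x = R_x h_x$ from the source term, and the sink terms vanish because $v_M=0$.

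The second piece, $(v_y-v_z)(v_yh_z - v_zh_y)$, I expect to vanish. To see this, rewrite it as a combination of $(v_y-v_z)(h_y-h_z)$ weighted by $v$. A cleaner route is to check the equivalent identity
\[
\sum_{y,z} w_{yz}(v_y-v_z)^2 h_y = \sum_y h_y\bigl((L v^2)_y - 2 v_y (Lv)_y\bigr).
\]
Summing against $h$ and moving $L$ onto $h$ by self-adjointness, the right-hand side becomes $\sum_y v_y^2 (Lh)_y - 2\sum_y h_y v_y(Lv)_y$, up to boundary corrections. The first term vanishes off $M$ because $Lh=0$ there, and on $M$ because $v=0$ there. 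The second term gives $-2h_xR_x$ up to sign, which yields $\mathbb{E}[h(Y_{t+1})]=h(x)$.

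The main obstacle is getting the signs and boundary terms right in this summation by parts, including the convention of summing over ordered pairs. I would sanity-check the identity on the path $s$–$x$–$t$ before trusting it. An alternative, if the algebra is messy, is to invoke the coupling from \zcref{sec:elfs-intro} with stopping times $\nu_i$ satisfying $Y_i=X_{\nu_i}$, since that coupling immediately gives $Y_\rho = X_{\nu_\rho} = X_\tau$. However, establishing the coupling is itself the content of \cite{apers2022elfs}, so I prefer the self-contained harmonic-function argument.
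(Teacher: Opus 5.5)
Your harmonic-function argument is correct and takes a genuinely different route from the paper. The elfs kernel fixes every function harmonic off $M$, so $h_m(Y_{t\wedge\rho})$ is a bounded martingale, and optional stopping gives $\mathbb{P}(Y_\rho=m)=h_m(s)=\mathbb{P}(X_\tau=m)$. The one-step computation needs repair in three places.

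\emph{First decomposition.} The second piece does not vanish. For the harmonic function $h\equiv 1$ it equals $\frac{1}{4R_x}\sum_{(y,z)}w_{yz}(v_y-v_z)^2=\frac12$. In general each piece contributes $h_x/2$. Your first piece is $\frac{1}{4R_x}\cdot 2\sum_y v_yh_y(Lv)_y=h_x/2$ once the factor $2$ in the Green identity is kept, not $h_x$.

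\emph{Cleaner route.} The pointwise identity has the wrong sign; the correct one is $\sum_z w_{yz}(v_y-v_z)^2=2v_y(Lv)_y-(Lv^2)_y$. Since $L=D-A$ is a symmetric matrix on all of $\mathbb{R}^V$, there are no boundary corrections, and
\[
\sum_{y,z}w_{yz}(v_y-v_z)^2h_y=2\sum_y h_yv_y(Lv)_y-\sum_y v_y^2(Lh)_y=2R_xh_x .
\]
This uses two facts. First, $v_y(Lv)_y$ equals $R_x$ at $y=x$ and $0$ elsewhere. Second, $v_y^2(Lh)_y=0$ for every $y$, because $Lh=0$ off $M$ and $v=0$ on $M$. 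Dividing by $2R_x$ gives exactly $h_x$.

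\emph{Termination.} Do not cite $\EHT_s\le 2\HT_s$, since the paper derives it from the very coupling you want to avoid. Argue directly instead. Unit inflow into $M$ together with $v|_M=0$ gives $\sum_{y\in M,z}w_{yz}v_z=1$. By Cauchy--Schwarz, the one-step probability of entering $M$ from $x$ is
\[
\frac{1}{2R_x}\sum_{y\in M,z}w_{yz}v_z^2\ \ge\ \Big(2R_x\sum_{y\in M}d_y\Big)^{-1}.
\]
This is uniformly positive over the finitely many $x$, so $\rho$ is dominated by a geometric variable.

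\emph{Comparison with the paper.} The paper gives no proof of its own. It cites Apers--Piddock and points to the stronger pathwise coupling: stopping times $0<\nu_1<\dots<\nu_\rho=\tau$ with $Y_i=X_{\nu_i}$, from which $Y_\rho=X_\tau$ is immediate. Your one-step identity $\mathbb{E}[h(Y_1)]=h(s)$ is exactly what optional stopping of the walk at $\nu_1$ would give. So your argument is the purely analytic content of that coupling: shorter, self-contained, and with no stopping times to construct. What it does not recover is the timing information the coupling carries: $\mathbb{E}[\nu_{i+1}-\nu_i]=\ET_{Y_i}/2$, hence $\mathbb{E}[\sum_t\ET_{Y_t}]=2\HT_s$ and $\EHT_s\le 2\HT_s$. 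The paper needs these for its $\sqrt{\EHT_s\HT_s}$ complexity bound.
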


More strongly even, we can couple the two processes so that there are random stopping times $0 < \nu_1 < \dots < \nu_\rho = \tau$ such that $Y_i = X_{\nu_i}$.
This was illustrated in \zcref{fig:intro-coupling}.
From \cite[Lemma 2]{apers2022elfs} we know that
\[
\mathbb{E}[\nu_1] = \ET_s/2,
\]
which gives another operational interpretation of the random walk escape time $\ET_s$ as the expected number of random walk steps between the elfs.
It also implies the useful identity (\cite[Eq.~(16)]{apers2022elfs})
\begin{equation} \label{eq:exp-ET}
\mathbb{E}\left[\sum_{t=0}^{\rho-1} \ET_{Y_t}\right]
= 2 \HT_s.
\end{equation}

\subsection{Elfs process transducer}

Given upper bounds $\bar\ET_x$ and constant-factor approximations of $R_x d_x$ for all $x$ in the elfs process, we can invoke \zcref{thm:exact-elfs} to simulate the elfs process exactly.
More precisely, the transducer uses an infinite counter to return a quantum sample of the elfs process $\{Y_i\}$
\[
\ket{h}
= \sum_{\{y_i\}} \sqrt{\mathrm{Pr}(\{y_i\})} \ket{\{y_i\}} \ket{\Gamma(\{y_i\})}
\]
where we let $\ket{\{y_i\}} = \ket{y_0,\dots,y_\rho,0,\dots}$ and $\ket{\Gamma(\{y_i\})}$ denotes garbage depending on $\{y_i\}$.
Measuring $\ket{h}$ effectively returns a full sample of the elfs process, and not just the endpoint.
Since the length $\rho$ of the elfs process is unbounded, this state requires an infinite counter to store.
Effectively though, the expected number of non-zero registers will be $\tO(\EHT_s)$, and this will allow us to turn this infinite-dimensional transducer into a finite-dimensional algorithm.

\begin{theorem}[Elfs process transducer]
\label{thm:algo for elfs process}
Given upper bounds $\bar\ET_x \geq \ET_x$ and constant-factor approximations of $R_x d_x$ for all $x$, there exists a transducer that outputs $\ket{h}$ exactly with transduction complexity of order
\[
\mathbb{E}\left[ \sum_t \sqrt{\bar\ET_{Y_t}} \right]
\]
and using 1 controlled call to the quantum walk operator and $\tO(1)$ elementary operations.
\end{theorem}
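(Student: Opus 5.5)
The plan is to build the transducer as an infinite composition via \zcref{lem:contr-transd-comp}, with each step realising one transition of the elfs process by the exact elfs transducer of \zcref{thm:exact-elfs}.

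The public space carries the current source $x$ in a register, an infinite array of vertex registers $\ket{y_0,y_1,\dots}$ initialised to $\ket{s,0,0,\dots}$, and garbage registers. For a fixed source $x \notin M$, the step transducer $S^{(x)}$ does the following:
\begin{enumerate}
\item It maps $\ket{x}\mapsto\ket{\phi_x}$ on a fresh edge register. This is a fixed isometry with zero catalyst, and it uses the modified graph $\hat G_x$ with $\eta_x = \tilde p_x \bar\ET_x$.
\item It applies the exact elfs transducer of \zcref{thm:exact-elfs} to obtain $\ket{\hat f_x}\ket{\Gamma_x}$.
\item It coherently picks a uniformly random endpoint of the edge register and copies it into the next empty slot of the array. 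The new source is then $Y_{t+1}$.
\end{enumerate}
Controlling on the value of $x$ gives $S_t=\sum_x \ket{x}\bra{x}\otimes S^{(x)}$. Its catalyst is $\sum_x \ket{x}\otimes w_x$, so its transduction complexity on a superposition $\sum_x \alpha_x\ket{x}$ is $\sum_x|\alpha_x|^2 W(S^{(x)})$, with $W(S^{(x)}) = O(\sqrt{\bar\ET_x})$. We use a single controlled call to the walk operator, since the walks on $\hat G_x$ differ only in the attached edge, whose weight is read from the register (cf.\ \cite[Lemma 19]{apers2019unified}). Weights $\eta_x d_x$ and the elementary operations are computed from $\tilde p_x$ and $\bar\ET_x$.

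For the splitting required by \zcref{lem:contr-transd-comp}, $\mathcal H_1$ is the part where the current source lies in $M$ (process halted) and $\mathcal H_0$ is the rest. Measuring $\ket{y_0,\dots,y_t}$ in $\psi_{t,0}$ then gives exactly the probability that the process is alive after $t$ steps with that history. This requires an inductive check that the amplitudes are $\sqrt{\Pr}$ times orthonormal garbage, using \eqref{eq:elfs-trans-prob}. The composition lemma then yields
\[
W = \sum_t \sum_x \Pr(Y_t=x,\ \rho>t)\bigl(1+O(\sqrt{\bar\ET_x})\bigr) = O\Bigl(\mathbb{E}\Bigl[\sum_{t<\rho}\sqrt{\bar\ET_{Y_t}}\Bigr]\Bigr),
\]
using $\bar\ET_x \ge 1$. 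The counter cost $T_{c,\infty}$ together with the copy operations is $\tO(1)$.

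I expect the main obstacle to be the mismatch between $\ket{\hat f_x}$ and $\ket{f_x}$. \zcref{thm:exact-elfs} prepares the flow on $\hat G_x$, which puts weight $1-R_x/\hat R_x$ on the added edge $(\sigma,x)$. To stay exact, I would treat sampling that edge, or the endpoint $\sigma$, as ``stay at $x$'': the source register keeps $x$ and the step is simply repeated. Conditioned on not hitting the added edge, $\hat f_x$ is proportional to $f_x$, so the resulting output distribution is exactly the elfs process. The extra steps form a geometric number of retries, with success probability $R_x/\hat R_x \ge 1-2/\bar\ET_x$ by \eqref{eq:mod-flow}. This inflates the expected cost by at most a constant factor and leaves $\ket{h}$ exact. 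The retries should not be recorded in the $y$-array; a separate counter that goes into garbage keeps the array equal to the true elfs path. The remaining work is to verify that the retry loop still fits \zcref{lem:contr-transd-comp}, with the counter indexing total steps rather than elfs steps.
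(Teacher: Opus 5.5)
Your proposal follows the paper's proof: an infinite composition via \zcref{lem:contr-transd-comp} in which each step runs the exact elfs transducer of \zcref{thm:exact-elfs} and copies a vertex of the sampled edge into a fresh slot as a deferred measurement, with $\mathcal H_1$ the branches that have entered $M$. (The paper copies the first edge register, which by antisymmetry of the flow state already gives a uniformly random endpoint.)

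The one difference is how you handle the mismatch between $\ket{\hat f}$ and $\ket{f}$. The paper treats a sample of the added edge as a recorded self-loop, so strictly it outputs the path of a lazier process that has the same arrival distribution. Your unrecorded retries instead keep the array exactly equal to the elfs path. The overhead is constant because $R_x/\hat R_x = \eta R_x d_x/(1+\eta R_x d_x) \geq 1/3$, using $\eta R_x d_x \geq \bar\ET_x/2 \geq 1/2$. Use this bound rather than $1-2/\bar\ET_x$, which is vacuous when $\bar\ET_x \leq 2$.
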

\begin{proof}
Similar to \zcref{thm:zero-error aa}, we prove this using controlled transducer composition (\zcref{lem:contr-transd-comp}).
Using the notation
\begin{align*}
	\begin{aligned}
		&\psi_{0,0}\;\overset{S_0}{\rightsquigarrow}\;\\[-7pt]
		&\;\\[-7pt]
		&\;
	\end{aligned}
	\left\{
	\begin{aligned}
	&\psi_{1,0}\;\overset{S_1}{\rightsquigarrow}\;\\[-7pt]
	&\;{\scriptscriptstyle+}\\[-7pt]
	&\psi_{1,1}
\end{aligned}
	\right.
	\left\{
\begin{aligned}
	&\psi_{2,0}\;\overset{S_2}{\rightsquigarrow}\;\\[-7pt]
	&\;{\scriptscriptstyle+}\\[-7pt]
	&\psi_{2,1}
\end{aligned}
\right.
\begin{aligned}
	&\ldots\;\\[-7pt]
	&\;\\[-7pt]
	&\;
\end{aligned}
\end{align*}
we will set $\psi_{0,0} = \ket{s}\ket{0}$ and we will use the first register to record the elfs process, and the second register to store garbage.
As an inductive hypothesis (trivially satisfied for $t = 0$), assume that the transducer $S_t$ at time $t$ takes a superposition
\[
\psi_{t,0}
= \sum_{y_0,\dots,y_t \notin M} \sqrt{\mathrm{Pr}(y_0,\dots,y_t)} \ket{y_0,\dots,y_t} \ket{\Gamma_{y_0,\dots,y_t}}
\]
with $\mathrm{Pr}(\cdot)$ the probability measure of the first $t$ steps of the elfs process.
It holds that
\[
\| \psi_{t,0} \|^2
= \sum_{y_0,\dots,y_t \notin M} \mathrm{Pr}(y_0,\dots,y_t)
= \mathrm{Pr}(\rho > t).
\]
To implement a new step, $S_t$ applies the elfs transducer $V$ from \zcref{thm:exact-elfs} on a state $\ket{y_0,\dots,y_t}\ket{\Gamma}$ to prepare the elfs state $\ket{f_{y_t}}$,
\[
\ket{y_0,\dots,y_t} \ket{\Gamma}
\overset{V}{\rightsquigarrow} \ket{y_0,\dots,y_t} \ket{f_{y_t},\Gamma'}.
\]
After that it copies the first register of $\ket{f_{y_t}} = \sum_z \alpha_z \ket{z,f_{y_t,z}}$ into a new register that is appended to the elfs register, simulating a deferred measurement:
\[
\sum_z \alpha_z \ket{y_0,\dots,y_t} \ket{z,f_{y_t,z},\Gamma'}
\rightarrow \sum_z \alpha_z \ket{y_0,\dots,y_t,z} \ket{z,f_{y_t,z},\Gamma'}.
\]
Notably, $|\alpha_z|^2 = \sum_{x \in V} |\inner{z,x|f_{y_t}}|^2$, so the new state $\psi_{t+1}$ satisfies
\[
\psi_{t+1}
= \sum_{y_0,\dots,y_t \notin M,z} \sqrt{\mathrm{Pr}(y_0,\dots,y_t,z)} \ket{y_0,\dots,y_t,z} \ket{\Gamma_{y_0,\dots,y_t,z}}.
\]
Letting $\psi_{t+1,0}$ (resp.~$\psi_{t+1,1}$) correspond to the component with $z \notin M$ (resp.~$z \in M$), we see that the new state satisfies the inductive hypothesis.

Given an upper bound $\bar\ET_{y_t} \geq \ET_{y_t}$ and a constant-factor estimate of $R_{y_t} d_{y_t}$, by \zcref{thm:exact-elfs} the transduction complexity of $S_t$ on $\psi_{t,0}$ is of order
\[
\sum_{y_0,\dots,y_t \notin M} \mathrm{Pr}(y_0,\dots,y_t) \sqrt{\bar\ET_{y_t}}
= \mathbb{E}\left[\sqrt{\bar\ET_{Y_t}}\right].
\]
By \zcref{lem:contr-transd-comp}, the transduction complexity of the composition is of order $\mathbb{E}\Big[\sum_t \sqrt{\bar\ET_{Y_t}}\Big]$.

A subtle point is that \zcref{thm:exact-elfs} prepares a flow $\ket{\hat f_{y_t}}$ on a modified graph $\hat G$.
However, measuring that state instead of $\ket{f_{y_t}}$ effectively corresponds to adding a self-loop to the elfs process with probability $O(1/\ET_{y_t}) \leq 1/2$ (see \zcref{eq:mod-flow}).
This increases the electric hitting time by at most a factor 2, and it does not influence the arrival distribution.
\end{proof}

\noindent
If our upper bounds are tight, $\bar\ET_x \in \Theta(\ET_x)$, then by Jensen's inequality the complexity is
\[
\mathbb{E}\Big[\sum_t \sqrt{\ET_{Y_t}}\Big]
\leq \sqrt{\EHT_s \cdot \mathbb{E}\Big[\sum \ET_{Y_t}\Big]}
= \sqrt{\EHT_s \cdot 2 \HT_s}.
\]

If we are given upper bounds $\bar\ET_x \geq \ET_x$ but no estimates of $R_x d_x$, we can run \zcref{thm:intro-eff-res} to get constant-factor estimates with inverse polynomial error probability using a $\widetilde{O}(\sqrt{\bar\ET_{Y_t}})$ overhead.
This yields the following corollary.

\begin{corollary}[Approximate elfs process transducer]
\label{thm:algo for approx elfs process}
Given upper bounds $\bar\ET_x \geq \ET_x$ for all~$x$, there exists a transducer that outputs a state $\varepsilon$-close to $\ket{h}$ with transduction complexity
\[
\widetilde{O}\left( \mathbb{E}\left[ \sum_t \sqrt{\bar\ET_{Y_t}} \right] \right)
\]
and using 1 controlled call to the quantum walk operator and $\tO(1)$ elementary operations.
\end{corollary}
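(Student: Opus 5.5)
The plan is to reuse the composition from the proof of \zcref{thm:algo for elfs process}. At step $t$, before invoking the exact elfs transducer of \zcref{thm:exact-elfs} on the current source $y_t$, we insert a subroutine that computes a constant-factor estimate $\tilde p_{y_t}$ of $1/(R_{y_t}d_{y_t})$ into a fresh ancilla register. The elfs transducer is then controlled on this register, meaning it builds the modified graph $\hat G$ with $\eta = \tilde p_{y_t}\bar\ET_{y_t}$. The estimation subroutine is the bounded-error algorithm of \zcref{thm:intro-eff-res}, run with constant precision. We boost its success probability to $1-\delta$ by the median trick, using $O(\log(1/\delta))$ repetitions. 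This gives cost $O(\sqrt{\bar\ET_{y_t}}\log(R_{y_t}d_{y_t})\log(1/\delta))$, which is $\widetilde{O}(\sqrt{\bar\ET_{y_t}})$ after using $R_xd_x\le \ET_x\le\bar\ET_x$.

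First, we express this subroutine as a transducer. Any unitary algorithm $A$ with $T_A$ calls to the walk operator is itself a transducer on its input, with transduction complexity $O(T_A)$. One can see this by composing its $T_A$ layers with \zcref{lem:contr-transd-comp} and $\psi_{t,1}=0$, where each intermediate state has norm at most $\|\psi_{t,0}\|$. The purified estimation procedure then maps $\ket{y_t}\ket{0}$ to $\sqrt{1-\delta'}\ket{y_t}\ket{\text{good}} + \sqrt{\delta'}\ket{y_t}\ket{\text{bad}}$ with $\delta'\le\delta$. Here the good branch carries a valid estimate together with some garbage. Next, we apply the controlled exact-elfs transducer. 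The catalyst of that transducer is only bounded on the good branch, because on the bad branch the estimate may be wrong and the complexity uncontrolled. We therefore replace the action on any branch where $\tilde p$ fails a sanity range by some fixed bounded-cost unitary. For example, we can cap $\eta$ between $1$ and $\bar\ET_{y_t}^2$, which keeps the transduction complexity $\widetilde{O}(\sqrt{\bar\ET_{y_t}})$ on every branch at the price of a wrong output there. We then copy the first vertex register as in the earlier proof, and split into $\psi_{t+1,0}$ and $\psi_{t+1,1}$ according to whether $z\in M$.

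Second, we bound the cost and the error. By \zcref{lem:contr-transd-comp}, the total transduction complexity is $\sum_t \|\psi_{t,0}\|^2\,\widetilde{O}(\sqrt{\bar\ET})$, averaged over branches, which gives $\widetilde{O}(\mathbb{E}[\sum_t\sqrt{\bar\ET_{Y_t}}])$. For the error, we compare with the ideal transducer of \zcref{thm:algo for elfs process} that is handed correct estimates. At step $t$, the two transduction actions differ by a map of norm $O(\sqrt\delta)$ on $\psi_{t,0}$. Because transduction actions are unitary, these differences add up to $O(\sqrt\delta\sum_t\|\psi_{t,0}\|)$. The main obstacle is that this is a sum over unboundedly many steps.

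To handle the obstacle, we would first bound $\sum_t\|\psi_{t,0}\| = \sum_t\sqrt{\Pr(\rho>t)}$. This is not directly controlled by $\mathbb{E}\rho$, so instead we truncate. Terms with $t>t^*$ contribute at most $\sum_{t>t^*}\|\psi_{t,0}\|^2$ to the squared error, since both outputs are normalized. Choosing $t^*$ polynomial in $\EHT_s/\varepsilon$ makes that tail small by Markov's inequality. Setting $\delta=\varepsilon^2/(t^*)^2$ then makes the first $t^*$ steps contribute error $O(\varepsilon)$, at only a $\log(1/\delta)$ overhead. One must also check that the garbage left by the estimation step does not break the proportionality between branches. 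It does not: the garbage simply becomes part of $\ket{\Gamma(\{y_i\})}$, and the definition of $\ket{h}$ allows arbitrary garbage.
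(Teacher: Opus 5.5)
Your outline is the paper's argument written out in full: boosted constant-factor estimates from \zcref{thm:intro-eff-res} are fed into the composition of \zcref{thm:algo for elfs process}. However, the step that handles failed estimates is false as stated, and so is your choice of $\delta$.

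Capping $\eta\in[1,\bar\ET_s^2]$ does not keep the cost of the exact-elfs transducer at $\tO(\sqrt{\bar\ET_s})$ on every branch. By \zcref{thm:zero-error aa} and the computation in the proof of \zcref{lem:ET-mod-graph}, that cost on $\hat G$ is of order $\sqrt{\eta R_sd_s}+\ET_s/\sqrt{\eta R_sd_s}$ for $\eta\ge 1$. At $\eta=\bar\ET_s^2$, the expected number of calls to the amplification operator, of order $1/\alpha=\sqrt{2(1+\eta R_sd_s)}$, is by itself at least $\bar\ET_s$. At $\eta=1$ with $R_sd_s=O(1)$, the catalyst term gives $\Theta(\ET_s)$.

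What saves the argument is that, for a transducer controlled on the estimate register, the catalysts of different estimates are orthogonal. Bad branches of squared weight at most $\delta$ therefore add only $\delta\cdot\poly(\bar\ET_s)$ per unit of $\|\psi_{t,0}\|^2$. Hence $\delta$ must also be inverse-polynomial in $\bar\ET$, which is what the paper's ``inverse polynomial error probability'' supplies. Your $\delta=\varepsilon^2/(t^*)^2$ does not depend on $\bar\ET$, and it breaks the claimed bound. For example, join $s$ to $M$ by one edge and to a dead-end cluster of total degree $D$. Then $R_sd_s=2$ and $\EHT_s=2$, so $t^*=\poly(1/\varepsilon)$, while a bad branch costs $\Theta(D)$ with $D$ arbitrary. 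Taking $\delta$ below both thresholds costs only an extra $\log\bar\ET$ factor.

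There are two smaller corrections. First, with capped $\eta$ the bad branch does not actually produce a wrong output. \zcref{thm:zero-error aa} needs no knowledge of $\alpha$, so the exact-elfs transducer prepares $\ket{\hat f}$ exactly for every $\eta$ in your range. A bad estimate only changes the cost and the self-loop probability $1/(1+\eta R_sd_s)\le 1/2$. This fact is what justifies reading $\sum_t\|\psi_{t,0}\|^2(\cdots)$ for your perturbed process as an expectation over a lazy elfs process. With a genuinely wrong output you would also have to bound the remaining cost from wherever the bad branch lands.

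Second, in the truncation the tail contributes $O(\|\psi_{t^*,0}\|)=O(\sqrt{\Pr(\rho>t^*)})$ to the error, and Markov bounds this by $\sqrt{\EHT_s/t^*}$. The quantity you wrote, $\sum_{t>t^*}\|\psi_{t,0}\|^2=\mathbb{E}[(\rho-t^*-1)^+]$, is not controlled by Markov's inequality.
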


\subsection{Elfs process on expanders}

As an example application, we develop our elfs process transducer for the particular case of expander graphs, which are a useful proxy for random and real-world networks.
Here we show uniform bounds on all relevant quantities, which allows us to invoke \zcref{thm:algo for elfs process}.
The following theorem is proven in \zcref{app:expander}.

\begin{restatable}{theorem}{expander} \label{thm:expander}
If $G$ is a regular, bounded-degree expander graph with a sink $M \subseteq V$ of size $|M| = m$, then for any $x \in V$ we have
\[
R_x \in \Theta(1),
\quad
\ET_x \in O\left(1 + \frac{n}{m^2}\right),
\quad
\EHT_x \in O\left(\min\left\{m,\frac{n}{m} + \log n\right\}\right),
\quad
\HT_x \in O\left( \frac{n}{m} + \log n \right).
\]
\end{restatable}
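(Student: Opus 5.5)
We prove the four bounds separately, using the electric expressions $\ET_x = \frac{1}{R_x}\sum_y v_y^2 d_y$ and $\HT_x = \sum_y v_y d_y$, where $v$ is the voltage with $v_x = R_x$ and $v|_M = 0$. Throughout, $d$ is the constant degree and $\lambda<1$ is the spectral gap parameter of the walk.

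\emph{Resistance.} For the lower bound, $R_x \ge 1/d_x \in \Omega(1)$ by Nash-Williams: all unit current passes through the star of $x$. For the upper bound it suffices to pick a single $m_0\in M$ and bound $R_{x,m_0} \in O(1)$, since adding sink vertices only decreases resistance. On a bounded-degree expander, this is the standard bound $R_{\mathrm{eff}}(x,y) \le \frac{1}{d}\sum_{t\ge0}\|(P^t)(e_x-e_y)\|^2$-type estimate. Concretely, $R_{xy} = (e_x-e_y)^T L^+ (e_x-e_y)$. Splitting $L^+$ as the local term $\frac{1}{d}I$ plus geometric corrections, the contributions are $\sum_t \|P^t(e_x - e_y)\|^2 \le 2\sum_t \lambda^{2t}\in O(1)$. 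Hence $R_x\in\Theta(1)$.

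\emph{Hitting time.} This is the standard expander bound. The probability of not hitting $M$ within $t$ steps decays like $(1-\Omega(m/n))^{t}$ once $t \gtrsim \log n$ (mixing time $O(\log n)$ plus the expander hitting lemma), so $\HT_x\in O(n/m+\log n)$. Alternatively one can use $\HT_x = \sum_y v_y d$ together with the tail bounds below.

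\emph{Escape time.} We bound $\sum_y v_y^2$. Here $v_y = R_x\,\Pr_y[\text{hit } x \text{ before } M]$, and this probability is at most $d\,G_M(y,x)$, the Green function of the walk killed on $M$. Then
\[
\sum_y v_y^2 \lesssim \sum_y G_M(x,y)^2 = \sum_{s,t}\Pr_x[X_s=X'_t\notin M,\ \text{both unkilled}],
\]
which we split into two terms:
\begin{itemize}
\item The pre-mixing part, $s,t\le C\log n$, contributes $O(1)$, because return probabilities decay like $\lambda^{s+t}$ plus the uniform contribution $1/n$ per pair; the latter totals $\log^2 n/n$.
\item The post-mixing part contributes about $\frac{1}{n}\big(\sum_t \Pr[\text{survive } t]\big)^2\approx \frac{1}{n}(n/m)^2 = n/m^2$.
\end{itemize}
Dividing by $R_x\in\Theta(1)$ gives $\ET_x\in O(1+n/m^2)$. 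This Green-function second-moment computation is the main obstacle: making the mixing and killing interplay rigorous, for instance via the spectral decomposition of the substochastic matrix $P_{\bar M}$, whose top eigenvalue is $1-\Theta(m/n)$ with a nearly uniform eigenvector, and whose other eigenvalues are at most $\lambda$ by interlacing.

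\emph{Electric hitting time.} The bound $\EHT_x\le 2\HT_x$ is already stated. For $\EHT_x\in O(m)$, note that each elfs step lands in $M$ with probability equal to the fraction of flow energy on edges incident to $M$. We show this fraction is $\Omega(1/m)$. The flow into $M$ sums to $1$, so by Cauchy–Schwarz the energy on the at most $dm$ sink-incident edges is at least $1/(dm)$, while the total energy is $R_y\in O(1)$. Hence every step succeeds with probability $\Omega(1/m)$ uniformly in $y$, and the number of steps is dominated by a geometric random variable, giving $\EHT_x\in O(m)$.
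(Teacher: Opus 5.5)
\paragraph{Verdict: a genuine gap in the escape-time bound.} Your bounds on $R_x$, $\HT_x$ and $\EHT_x$ are essentially fine. The $\EHT_x$ argument is the paper's. For $R_x$, you reduce to a single sink vertex by Rayleigh monotonicity and use the Neumann series $R_{xy}=\frac1d\sum_{t\ge0}(e_x-e_y)^TP^t(e_x-e_y)\le\frac{2}{d(1-\lambda)}$. This is a valid alternative to the paper's identity $R_xd=\sum_t Q^t_{xx}$, where $Q=P_{\bar M}$.

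The gap is the escape time, which is the one non-standard estimate in the theorem. Your post-mixing term ``$\approx\frac1n(\sum_t\Pr[\text{survive }t])^2$'' treats the two walks as independent uniform points, and you say yourself it is not rigorous. This is not a formality. In fact $v_y=G_M(x,y)/d$ exactly, so $\ET_x=\frac{1}{dR_x}\sum_{T\ge0}(T+1)Q^T_{xx}$. You have two pointwise bounds: $Q^T_{xx}\le\|Q\|^T\le(1-\delta m/n)^T$ (with $\delta=1-\lambda$) and $Q^T_{xx}\le P^T_{xx}\le\frac1n+\lambda^T$. Combined by taking a minimum, they only give $O(1+\frac{n\log^2 n}{m^2})$. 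On the range $\log n\lesssim T\lesssim\frac{n}{\delta m}\log n$ the minimum is about $\frac1n$, and $\sum_T(T+1)/n$ over that range is of order $\frac{n\log^2n}{m^2}$.

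Your spectral fallback hides the same difficulty. After interlacing, the Perron pair $(\lambda_1,\chi)$ of $Q$ contributes $\chi(x)^2/(1-\lambda_1)^2\approx\chi(x)^2n^2/m^2$. So you need an $\ell_\infty$ bound $\chi(x)^2=O(1/n)$ at the start vertex. An $\ell_2$ statement such as $\|\chi-\mathbf 1/\sqrt n\|^2=O(m/n)$ only gives $\chi(x)^2=O(m/n)$, hence a contribution of order $n/m$. You supply no delocalization argument.

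\paragraph{The missing idea.} The paper's route is to multiply the mixing and killing bounds rather than take their minimum, by splitting $T=u+t+r$:
\[
Q^{u+t+r}_{xx}=\sum_y Q^u_{xy}\,Q^{t+r}_{yx}\le\Big(\frac1n+\lambda^u\Big)\sum_y Q^{t+r}_{xy}.
\]
The survival probability is then controlled by first mixing and then using the norm of $Q$:
\[
\sum_y Q^{t+r}_{xy}=\sum_{y,z}Q^t_{xz}\,Q^r_{zy}\le\Big(\frac1n+\lambda^t\Big)\mathbf 1^TQ^r\mathbf 1\le\Big(1-\frac{\delta m}{n}\Big)^r+n\lambda^t .
\]
For $T\gtrsim\log n$ and $u,t,r\approx T/3$ this gives $Q^T_{xx}\lesssim\frac1n(1-\delta m/n)^{T/3}+\lambda^{T/3}$. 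Summing against $T+1$ gives $O(1+n/m^2)$. Together with the small-$T$ bound $\sum_{T\lesssim\log n}(T+1)(\frac1n+\lambda^T)=O(1)$ (your pre-mixing estimate), this proves the $\ET_x$ bound.

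The second display is also what makes your $\HT_x$ sketch rigorous. It combines the pointwise mixing bound $P^t_{xz}\le\frac1n+\lambda^t$ with $\mathbf 1^TQ^r\mathbf 1\le n\|Q\|^r$. By contrast, a pure norm bound loses a $\sqrt n$ factor, and a total-variation mixing bound leaves an additive error that does not decay in $t$.
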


Given these bounds, we can apply our exact elfs process transducer from \zcref{thm:algo for elfs process} using only knowledge of $n$ and $m$.
By uniformly setting our upper bounds $\bar\ET_{Y_t} = \bar\ET \in O(1+n/m^2)$, we get a transducer that samples exactly from the random walk arrival distribution on $M$ with transduction complexity bounded by
\[
\mathbb{E}\left[ \sum_t \sqrt{\bar\ET_{Y_t}} \right]
= \EHT_s \sqrt{\bar\ET}
\in O(\min\{\sqrt{n},\HT_s\}).
\]
If $m \ll \sqrt{n}$ then this yields a quantum speedup over the random walk hitting time $\HT_s \in O(n/m + \log n)$.
Turning the resulting transducer into a bounded-error quantum algorithm, we get the following theorem.

\begin{theorem}[Elfs on an expander] \label{thm:elfs-expander}
Let $G$ be a regular, bounded-degree expander graph on $n$ vertices with sink $M \subseteq V$.
For an initial vertex $s$, let $\{p_x\}_{x \in M}$ denote the arrival distribution of a random walk from $s$ to $M$.
There exists a quantum algorithm that returns a state
\[
\ket{\tilde h}
= \sum_x \sqrt{\tilde p_x} \ket{x} \ket{\Gamma_x}
\]
with garbage $\ket{\Gamma_x}$ so that $\| p - \tilde p \|_{TV} \leq \varepsilon$.
The quantum algorithm uses $O(\sqrt{n}/\varepsilon^2)$ quantum walk steps, $\tO(\sqrt{n}/\varepsilon^2)$ other elementary operations, and $\tO(\sqrt{n})$ space.
\end{theorem}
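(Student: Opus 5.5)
We combine the uniform bounds of \zcref{thm:expander} with the elfs process transducer of \zcref{thm:algo for elfs process}, and then convert the transducer into an algorithm with \zcref{lem:trans-to-alg}.

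\textbf{Step 1: the transducer.} Regularity, bounded degree and $R_x \in \Theta(1)$ give $R_x d_x \in \Theta(1)$ for every $x$. We can therefore use the same constant (say $c$) as a uniform constant-factor approximation of $R_x d_x$. If needed, we absorb the unknown constant into the rebalancing parameter $\eta$ of \zcref{sec:graph-mod}; \zcref{lem:ET-mod-graph} tolerates constant misestimates. We also set a uniform upper bound $\bar\ET \in O(1+n/m^2)$, which is valid by \zcref{thm:expander}.

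\zcref{thm:algo for elfs process} then yields a transducer that outputs $\ket{h}$ exactly, with transduction complexity
\[
W \in O\big(\mathbb{E}[\rho]\sqrt{\bar\ET}\big) = O\big(\EHT_s\sqrt{1+n/m^2}\big).
\]
We bound this by $O(\sqrt n)$ by splitting into two cases:
\begin{itemize}
\item if $m \ge \sqrt n$, then $\bar\ET = O(1)$ and $\EHT_s = O(n/m+\log n) = O(\sqrt n)$;
\item if $m<\sqrt n$, then $\EHT_s=O(m)$ and $m\sqrt{n}/m=\sqrt n$.
\end{itemize}
The factor $2$ from the self-loops of the modified graph does not affect this bound.

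\textbf{Step 2: algorithm and error.} We apply \zcref{lem:trans-to-alg} with $K=\Theta(W/\varepsilon^2)=\Theta(\sqrt n/\varepsilon^2)$. This gives a state $\ket{h'}$ with $\|\ket{h'}-\ket{h}\|\le\varepsilon$. Each transducer call costs one controlled walk step plus $\tO(1)$ operations, which gives the claimed step and gate counts.

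The final sample $\ket{\tilde h}$ is obtained by regarding the last nonzero entry $y_\rho$ of the elfs register as the output $x$, with everything else as garbage $\ket{\Gamma_x}$. One could copy $y_\rho$ out, but simply relabeling the registers suffices. Since $Y_\rho\sim X_\tau$ (\cite[Corollary 1]{apers2022elfs}), the ideal marginal on $x$ is $p$. The TV distance of the measurement distributions is at most the trace distance, which is $O(\varepsilon)$; rescaling $\varepsilon$ gives $\|p-\tilde p\|_{TV}\le\varepsilon$.

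\textbf{Step 3: space (main obstacle).} The transducer uses an infinite counter and an infinite elfs register, and it is not obvious how to truncate it to $\tO(\sqrt n)$ qubits without spending the $\varepsilon$ error budget. Following the remark on infinite dimensions, after $K$ calls only the first $\tO(K)$ counter values and registers are touched. This gives space $\tO(\sqrt n/\varepsilon^2)$, not $\tO(\sqrt n)$.

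To do better, I would truncate to the first $L$ elfs steps and $L'$ counter positions. The discarded amplitude is controlled by tail bounds: Markov gives $\Pr(\rho>L)\le\EHT_s/L$, and the counter length of each exact-elfs Las Vegas subroutine has bounded expectation $O(\sqrt{\bar\ET})$. The key observation is that the truncation does not need the $\varepsilon^{-2}$ factor, since Markov tails only need error $\varepsilon$ in norm-squared. We can therefore choose $L,L'=\tO(\sqrt n)$ up to $\mathrm{poly}(1/\varepsilon)$ factors that I expect can be made polylogarithmic, using geometric tails of $\rho$ on expanders: each elf lands in $M$ with constant probability per $O(\min\{m,n/m\})$ steps.

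Each elf register holds $O(\log n)$ qubits, and $\EHT_s\le O(\sqrt n)$ registers suffice up to logarithmic factors, which would give $\tO(\sqrt n)$ space. Pinning down this truncation error carefully is the step I expect to require the most work.
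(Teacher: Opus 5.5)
Your Steps 1 and 2 are exactly the paper's argument. You feed the uniform bounds of \zcref{thm:expander} (one $\bar\ET\in O(1+n/m^2)$ and $R_xd_x\in\Theta(1)$) into \zcref{thm:algo for elfs process}, bound $\EHT_s\sqrt{\bar\ET}\in O(\sqrt n)$ (your case split is the paper's $O(\min\{\sqrt n,\HT_s\})$), and convert with \zcref{lem:trans-to-alg} at $K=\Theta(\sqrt n/\varepsilon^2)$. This correctly gives the walk-step count, the gate count and the TV guarantee.

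The gap is Step 3, which you flag yourself: as written it does not establish $\tO(\sqrt n)$ space. Markov alone only lets you cut at $L=O(\EHT_s/\varepsilon)$. The geometric tail is asserted rather than derived, and ``per $O(n/m)$ steps'' is not a per-elf property. The counters and control garbage of the Las Vegas loop inside \zcref{thm:zero-error aa} are left unbounded. The paper gives no more than you here. It relies on the infinite-dimensions remark, which read literally yields $\tO(K)=\tO(\sqrt n/\varepsilon^2)$, plus the heuristic that the expected number of nonzero elfs registers is $\tO(\EHT_s)$. Your truncation idea does close the gap, as follows.

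(i) \zcref{thm:expander} bounds $\EHT_x$ for \emph{every} $x$, so $B=\max_x\EHT_x\in O(\sqrt n)$, at most doubled by the self-loops from $\hat G$. Markov's inequality plus the Markov property of $\{Y_t\}$ then give $\Pr(\rho>2kB)\le 2^{-k}$. Run \zcref{lem:contr-transd-comp} with finite $m=L=O(\sqrt n\log(1/\varepsilon))$. This can only decrease the transduction complexity. The unfinished branch $\psi_{L,0}$ is orthogonal to the finished ones, since its last vertex is outside $M$. So it costs only $\Pr(\rho>L)\le\varepsilon$ in TV, with no $\varepsilon^{-2}$.

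(ii) In the loop of \zcref{thm:zero-error aa}, once $T_t\ge 1/\sin(2\theta)$ each round succeeds with probability at least $1/4$ \cite{boyer1998tight}, and $1/\alpha\in O(\sqrt{\bar\ET})$ uniformly. So stop each loop after $O(\log\bar\ET+\log(L/\varepsilon))$ rounds. This costs at most $\varepsilon/L$ per elf, hence $\varepsilon$ in total over at most $L$ elfs. It leaves only $\mathrm{polylog}(n/\varepsilon)$ qubits of counters and control registers per elf.

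(iii) The algorithm behind \zcref{lem:trans-to-alg} only adds a $K$-valued time register to the (now finite) transducer space. The total space is therefore $L\cdot\mathrm{polylog}(n/\varepsilon)=\sqrt n\,\mathrm{polylog}(n/\varepsilon)$, independent of $K$.
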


As discussed in \zcref{sec:elfs-learning}, this quantum algorithm has applications in semi-supervised learning on expander graphs.

\section{Acknowledgements}

YZ was supported by the National Key R\&D Program of China (Grant No. 2025YFA1017200), and the 2026 International Research Collaboration Scholarship from AMSS, CAS.
Part of this work was done while YZ was visiting QuIC at Université libre de Bruxelles, hosted by Jérémie Roland. YZ is grateful to Simon Apers and Jérémie Roland for their hospitality and support.
SA was supported in part by the European QuantERA project QOPT (ERA-NET Cofund 2022-25), the French PEPR integrated projects EPiQ (ANR-22-PETQ0007) and HQI (ANR-22-PNCQ-0002), and the French ANR project QUOPS (ANR-22-CE47-0003-01).

\textbf{AI disclosure:} OpenAI's GPT‑5.4 Thinking provided valuable insights for the proof of \zcref{thm:expander}.
The authors put together the actual proof, and verified the correctness and originality of all content.

\bibliographystyle{alpha}
\bibliography{elfs.bib}

\begin{appendices}

\section{Span program witness size estimation} \label{app:span-program}

Here we connect our claims on witness size estimation (\zcref{lem:witness-prim,thm:witness-size}) to the problem of estimating the witness size of a span program.

\subsection{Span program background}

We recall only the minimum background needed for this application, and refer the reader to~\cite{ito2019approximate} for details. 
\begin{definition}[{Span program, \cite[Definition 2]{ito2019approximate}}]
A span program on $[q]^n$ is specified by a tuple $P=(H, V, \tau, A)$, that consists of finite-dimensional inner product spaces
\[
H = H_1 \oplus \cdots \oplus H_n \oplus H_{\mathrm{true}} \oplus H_{\mathrm{false}}
\]
and $\{H_{j, a} \subseteq H_j\}_{j \in[n], a \in[q]}$ such that $H_{j, 1}+\cdots+H_{j, q}=H_j$, a vector space $V$, a target vector $\tau \in V$, and a linear operator $A \in \mathcal{L}(H, V)$. For each input $x \in[q]^n$, we associate a subspace $H(x):=H_{1, x_1} \oplus \cdots \oplus H_{n, x_n} \oplus H_{\text{true }}$.
\end{definition}

A span program partitions $[q]^n$ into two sets.
Positive inputs, denoted $P_1$, correspond to the case where $\tau \in \Im(A \Pi_{H(x)})$.
The remaining inputs, denoted $P_0 = [q]^n \backslash P_1$, are called the negative inputs.
We restrict our attention to positive inputs $x \in P_1$.

\begin{definition}[{Positive witness, \cite[Definition 3]{ito2019approximate}}]
A \emph{positive witness} $w_x$ for $x \in P_1$ is a vector
\(w_x \in H(x)\) such that \(A w_x = \tau\), and its minimum squared norm is the \emph{positive witness size}
\[
w_+(x)
\coloneqq \min \bigl\{ \|w_x\|^2 \mid w_x \in H(x),\ A w_x =\tau \bigr\}.
\]
\end{definition}

\begin{definition}[{Negative error, \cite[Definition 5]{ito2019approximate}}]
The negative error for $x \in P_1$ is defined as
\begin{equation} \label{eq:e-min}
e_-(x)
\coloneqq \min\bigl\{ \| \Pi_{H(x)} A^\dag \omega \|^2 \mid \omega \in V, \; \tau^\dag \omega = 1 \bigr\},
\end{equation}
with $\cdot^\dag$ the adjoint.
The optimal min-error negative witness size is then
\[
\tilde w_-(x)
\coloneqq \min\bigl\{ \| A^\dag \omega \|^2 \mid \omega^\dag \tau = 1, \; \| \Pi_{H(x)} A^\dag \omega \|^2 = e_-(x) \bigr\},
\]
and its maximum is denoted $\tilde W_- \coloneqq \max_{x \in P_1} \tilde w_-(x)$.
\end{definition}

What is shown in \cite[Theorem 6]{ito2019approximate} is that there exists a quantum algorithm that estimates the positive witness size $w_+(x)$ on input $x$ with multiplicative error $\eps$ using
\[
O\left(\sqrt{w_+(x)\,\widetilde W_-}/\varepsilon^{3/2}\right)
\]
queries to $x$.

\subsection{Witness size estimation}

Here we connect the estimation problem that we solve in \zcref{sec:witness-size} to the problem of estimating the witness size of a span program, as considered in \cite{ito2019approximate}.
There, they consider an initial state
\[
\ket{w_0}
= A^+ \tau,
\]
where it is assumed that $\tau$ is scaled in such a way that $\ket{w_0}$ has unit norm (this corresponds to a ``normalized'' span program \cite[Definition 8]{ito2019approximate}).
Now define $T:=\ker A \oplus \mathrm{span}\{|w_0\rangle\}$ and $Q_x:=T\cap H(x)$, and note that $Q_x$ is the positive-witness subspace for $x$.
In particular, it holds that
\[
\|P_{Q_x}|w_0\rangle\|^2=\frac{1}{w_+(x)},
\]
which is the quantity we want to estimate.
Next we show that the quantity $\tilde w_-(x)$ can be reexpressed using pseudo-inverses.
\begin{lemma} \label{lem:transd-span}
It holds that
\[
\tilde w_-(x)
= 1 +
\left\| \big( \Pi_{T^\perp} \Pi_{H(x)} \Pi_{T^\perp} \big)^+ \Pi_{H(x)^\perp} \ket{w_0} \right\|^2.
\]
\end{lemma}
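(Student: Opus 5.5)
We would prove the lemma by reading both sides through the dictionary $\Pi = \Pi_{T^\perp}$, $\Delta = \Pi_{H(x)^\perp}$, $\ket{\psi} = \ket{w_0}$ from \zcref{sec:witness-size}. The first observation is that $\Pi - \Pi\Delta\Pi = \Pi_{T^\perp}(I-\Pi_{H(x)^\perp})\Pi_{T^\perp} = \Pi_{T^\perp}\Pi_{H(x)}\Pi_{T^\perp}$. So the vector inside the norm on the right-hand side is exactly the catalyst $w$ of \zcref{thm:eff-gap-transducer}, and it lies in $T^\perp$. The plan is then to show that the optimization defining $\tilde w_-(x)$ is a least-squares problem over $T^\perp$ whose minimum-norm minimizer is $w$.

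\emph{Step 1: parametrize the feasible set.} We would show
\[
\{A^\dagger\omega : \omega\in V,\ \tau^\dagger\omega = 1\} = \ket{w_0} + T^\perp .
\]
Since $\mathrm{Im}(A^\dagger) = (\ker A)^\perp$ and $T = \ker A \oplus \mathrm{span}\{\ket{w_0}\}$ with $\ket{w_0} = A^+\tau \in (\ker A)^\perp$ a unit vector, we have $(\ker A)^\perp = \mathrm{span}\{\ket{w_0}\}\oplus T^\perp$. For any $\omega$,
\[
\langle w_0|A^\dagger\omega\rangle = (A w_0)^\dagger\omega = (AA^+\tau)^\dagger\omega = \tau^\dagger\omega,
\]
where we use $AA^+\tau = \tau$, valid because $\tau \in \mathrm{Im}(A)$ for positive $x$. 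Hence the constraint $\tau^\dagger\omega = 1$ says exactly that the $\ket{w_0}$-component of $A^\dagger\omega$ equals $1$, and every $t\in T^\perp$ is attained because $T^\perp \subseteq \mathrm{Im}(A^\dagger)$. Consequently $\|A^\dagger\omega\|^2 = 1 + \|t\|^2$ when $A^\dagger\omega = \ket{w_0} + t$. This step is where the normalization and the pseudoinverse conventions must be checked carefully, and we expect it to be the main (though mild) obstacle.

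\emph{Step 2: solve the least-squares problem.} By Step 1, $e_-(x) = \min_{t\in T^\perp}\|\Pi_{H(x)}(\ket{w_0}+t)\|^2$. The normal equations, restricted to $t\in T^\perp$, read
\[
\Pi\Pi_{H(x)}\Pi\, t = -\Pi\Pi_{H(x)}\ket{w_0} = \Pi\Delta\ket{w_0},
\]
where the last equality uses $\Pi\ket{w_0}=0$. This system is consistent: the range of $\Pi\Pi_{H(x)}\Pi = (\Pi\Pi_{H(x)})(\Pi\Pi_{H(x)})^\dagger$ equals the range of $\Pi\Pi_{H(x)}$, which contains the right-hand side. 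The set of minimizers is therefore $t^* + \big(\ker(\Pi\Pi_{H(x)}\Pi)\cap T^\perp\big)$ with
\[
t^* = (\Pi\Pi_{H(x)}\Pi)^+\Pi\Delta\ket{w_0} = (\Pi\Pi_{H(x)}\Pi)^+\Delta\ket{w_0},
\]
since the pseudoinverse is supported on $T^\perp$. Moreover $t^*$ is the unique minimizer of minimal norm, because it lies in $\mathrm{Im}(\Pi\Pi_{H(x)}\Pi)$, which is orthogonal to the kernel directions.

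\emph{Step 3: conclude.} By definition, $\tilde w_-(x)$ minimizes $\|A^\dagger\omega\|^2 = 1+\|t\|^2$ over these minimizers, so $\tilde w_-(x) = 1 + \|t^*\|^2$, which is the claimed identity. As a consistency check, \zcref{lem:proj} gives $\Pi_{H(x)}(\ket{w_0}+t^*) = P_{Q_x}\ket{w_0}$. This agrees with the optimal negative-error vector being the projection onto $Q_x$, and yields $e_-(x) = 1/w_+(x)$.
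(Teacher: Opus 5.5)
Your proof is correct and follows the same route as the paper. Both write $A^\dagger\omega = \ket{w_0} + \nu$ with $\nu \in T^\perp$, solve the normal equations $\Pi_{T^\perp}\Pi_{H(x)}(\ket{w_0}+\nu)=0$ by the minimal-norm pseudoinverse solution, and conclude using $\|A^\dagger\omega\|^2 = 1+\|\nu\|^2$. Beyond the paper, you also justify steps it leaves implicit: that the feasible set is exactly $\ket{w_0}+T^\perp$, that the normal equations are consistent, and that the pseudoinverse solution is the unique minimal-norm minimizer.
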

\begin{proof}
First, notice that any $\omega$ minimizing $e_-(x)$ in \zcref{eq:e-min} satisfies $A^\dag \omega = \ket{w_0} + \nu$ with $\nu \in T^\perp$.
Moreover, since $A^\dag \omega$ minimizes $\| \Pi_{H(x)} A^\dag \omega \| = \| \Pi_{H(x)} (\ket{w_0} + \nu) \|$ over $\nu \in T^\perp$, $\nu$ must satisfy the equation
\[
\Pi_{T^\perp} \Pi_{H(x)} ( \ket{w_0} + \nu )
= 0.
\]
Equivalently, it must hold that $\Pi_{T^\perp} \Pi_{H(x)} \Pi_{T^\perp} \nu = -\Pi_{T^\perp} \Pi_{H(x)} \ket{w_0}$, and so the minimal norm solution will satisfy
\[
\nu
= -(\Pi_{T^\perp} \Pi_{H(x)} \Pi_{T^\perp})^+ \Pi_{T^\perp} \Pi_{H(x)} \ket{w_0}
= (\Pi_{T^\perp} \Pi_{H(x)} \Pi_{T^\perp})^+ \Pi_{H(x)^\perp} \ket{w_0},
\]
where the last equality used that $\Pi_{T^\perp} \ket{w_0} = 0$.
Using that $\tilde w_-(x) = \| A^\dag \omega \|^2 = 1 + \| \nu \|^2$ completes the proof.
\end{proof}

We can now make precise the connection between our \zcref{thm:witness-size} and the setting of witness size estimation in \cite{ito2019approximate}.
Specifically, in \zcref{lem:witness-prim,thm:witness-size} we can set $\Pi = \Pi_{T^\perp}$, $\Delta = \Pi_{H(x)^\perp}$ and $\ket{\psi} = \ket{w_0}$.
We then get that
\[
\| P_{\ker(\Pi) \cap \ker(\Delta)} \ket{\psi} \|^2
= \| P_{Q_x} \ket{w_0} \|^2
= 1/w_+(x).
\]
Finally, by \zcref{lem:transd-span}, the transduction complexity $W = \| (\Pi-\Pi\Delta\Pi)^+ \Delta \ket{\psi} \|^2$ is equal to $\tilde w_-(x) - 1$.
Our \zcref{thm:witness-size} then shows that we can estimate $w_+(x)$ to multiplicative error $\varepsilon$ while making
\[
O\left(\sqrt{w_+(x)\,\widetilde W_-}/\varepsilon\right)
\]
calls to $U = (2\Pi-I)(2\Delta-I) = (2\Pi_T-I)(2\Pi_{H(x)}-I)$.
By \cite[Lemma 2]{ito2019approximate}, we can implement $U$ with only 2 queries to $x$, and so we improve the error scaling in \cite{ito2019approximate} from $1/\varepsilon^{3/2}$ to $1/\varepsilon$.

\section{Zero-error amplitude amplification} \label{app:zero-error aa}

Consider an initial state
\[
\ket{\phi} = \alpha \ket{\phi_1} + \beta \ket{\phi_0}
\]
with orthogonal states $\ket{\phi_0}$, $\ket{\phi_1}$.
Consider two transducers $U$ and $V$ that on $\ket{\psi} \in \mathrm{span}\{\ket{\phi_0},\ket{\phi_1}\}$ implement the reflections
\[
\ket{\psi} \overset{U}{\rightsquigarrow} (2\ket{\phi}\bra{\phi}-I) \ket{\psi}
\quad \text{ and } \quad
\ket{\psi} \overset{V}{\rightsquigarrow} (2\ket{\phi_1}\bra{\phi_1}-I) \ket{\psi}
\]
with respective transduction complexities $W_U$ and $W_V$ and time complexities $T_U$ and $T_V$.
The following is a useful lemma.

\begin{lemma}[Transducer Hadamard test] \label{lem:transd-had}
Let $cV = I \otimes \ket{0}\bra{0} + V \otimes \ket{1}\bra{1}$ and $H$ the Hadamard gate.
Then for any $\ket{\psi} = \gamma \ket{\phi_1} + \delta \ket{\phi_0}$, the operator $V' = (I \otimes H) cV (I \otimes H)$ transduces
\[
\ket{\psi} \ket{0}
\overset{V'}{\rightsquigarrow} \gamma \ket{\phi_1} \ket{0} + \delta \ket{\phi_0} \ket{1}
\]
with transduction complexity $W_V/2$.
\end{lemma}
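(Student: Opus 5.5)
We write down the catalyst explicitly and compute directly on the public space tensored with the ancilla qubit. Since $V$ transduces on $\mathrm{span}\{\ket{\phi_0},\ket{\phi_1}\}$, there is a catalyst $w_V(\xi)$, linear in $\xi$ (it is the minimal-norm catalyst, given by a pseudoinverse as in \cite[Eq.~(5.2)]{belovs2024taming}), with
\[
V(\xi \oplus w_V(\xi)) = (2\ket{\phi_1}\bra{\phi_1}-I)\xi \oplus w_V(\xi).
\]
For the controlled operator $cV$, the public space will be $\mathrm{span}\{\ket{\phi_0},\ket{\phi_1}\} \otimes \mathbb{C}^2$, and the private space the private space of $V$ tensored with the control qubit. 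On input $\xi_0\ket{0} + \xi_1\ket{1}$, the $\ket{0}$ branch acts as the identity and needs no catalyst. The $\ket{1}$ branch needs catalyst $w_V(\xi_1)\ket{1}$. Hence $cV$ transduces $\xi_0\ket{0}+\xi_1\ket{1}$ into $\xi_0\ket{0}+(2\ket{\phi_1}\bra{\phi_1}-I)\xi_1\ket{1}$ with catalyst $w_V(\xi_1)\ket{1}$.

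Next we conjugate by the Hadamard gates, which act only on the public ancilla qubit. This conjugation preserves catalysts: if $S$ transduces $\xi$ to $\tau$ with catalyst $w$, then $(I\otimes H)S(I\otimes H)$ transduces $(I\otimes H)\xi$ to $(I\otimes H)\tau$ with the same $w$. Starting from $\ket{\psi}\ket{0}$, the first $H$ gives $\ket{\psi}(\ket{0}+\ket{1})/\sqrt{2}$. The $\ket{1}$ component is $\ket{\psi}/\sqrt{2}$, so by linearity the catalyst is $w_V(\ket{\psi})\ket{1}/\sqrt{2}$. After $cV$ the state is
\[
\tfrac{1}{\sqrt{2}}\big(\ket{\psi}\ket{0} + (\gamma\ket{\phi_1}-\delta\ket{\phi_0})\ket{1}\big).
\]
The final $H$ then maps this to $\gamma\ket{\phi_1}\ket{0} + \delta\ket{\phi_0}\ket{1}$, a routine two-line computation.

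The transduction complexity is $\|w_V(\ket{\psi})\|^2/2 = W_V/2$, where $W_V$ is the complexity of $V$ on $\ket{\psi}$.

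The only delicate point is justifying that the catalyst can be taken linear in the input, so that splitting $\ket{\psi}$ across the two control branches is legitimate. This follows from the transducer definition: the map $\xi\mapsto\tau$ is unitary, and the canonical catalyst $w(S,\xi)=(\Pi-\Pi S\Pi)^+\Pi S\xi$ is linear in $\xi$. Alternatively, we only need the catalyst on the single vector $\ket{\psi}/\sqrt{2}$, and scaling is immediate. Either way, no real obstacle remains.
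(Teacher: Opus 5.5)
Your strategy is the paper's: give $V$ the catalyst $\frac{1}{\sqrt2}w$ on the $\ket{1}$-branch, where it receives $\ket{\psi}/\sqrt2$, and read off the complexity $\|w\|^2/2=W_V/2$. Your public-space computation is correct, and the scaling remark suffices, so the linearity of the canonical catalyst is not needed.

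There is, however, one false step. In $V'=(I\otimes H)\,cV\,(I\otimes H)$, the identity $I$ acts on the whole space $\mathcal H\oplus\mathcal L$ of $V$. The ancilla qubit is therefore shared by the public part $\mathcal H\otimes\mathbb C^2$ and the private part $\mathcal L\otimes\mathbb C^2$, and the Hadamards act on it in both. Your claim that they ``act only on the public ancilla qubit'' and leave the catalyst unchanged is wrong for this operator, and your catalyst $\frac{1}{\sqrt2}w\ket{1}$ fails. The first Hadamard turns it into $\frac12 w\ket{0}-\frac12 w\ket{1}$. On the $\ket{1}$-branch, $V$ is then applied to $\frac{1}{\sqrt2}\ket{\psi}\oplus(-\frac12 w)$, which in general is not a valid input--catalyst pair.

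The correct principle is this. Suppose $S$ transduces $\xi\leadsto\tau$ with catalyst $w$, and $A=A_{\mathrm{pub}}\oplus A_{\mathrm{priv}}$ is a unitary that is block-diagonal with respect to the public/private split. Then $ASA^\dagger$ transduces $A_{\mathrm{pub}}\xi\leadsto A_{\mathrm{pub}}\tau$ with catalyst $A_{\mathrm{priv}}w$.

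Apply this with $S=cV$, $\xi=\ket{\psi}\ket{+}$ and $A=I\otimes H=A^\dagger$. The catalyst of $V'$ is then $(I\otimes H)\frac{1}{\sqrt2}w\ket{1}=\frac{1}{\sqrt2}w\ket{-}$, exactly the one in the paper's one-line proof. Since $H$ is unitary, the norm is unchanged and the complexity $W_V/2$ stands.

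Your argument as written is literally valid only for a different operator, in which $H$ acts solely on $\mathcal H\otimes\mathbb C^2$. That operator has the same transduction action but is not the $V'$ of the statement. The conclusion survives, but the catalyst you write down must be rotated by $H$ on the private ancilla.
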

\begin{proof}
If $V(\ket{\psi} \oplus w) = (2\ket{\phi_1}\bra{\phi_1}-I) \ket{\psi} \oplus w$, then
\[
V'\left( \ket{\psi} \ket{0} \oplus \frac{1}{\sqrt{2}} w \ket{-} \right)
= \gamma \ket{\phi_1} \ket{0} + \delta \ket{\phi_0} \ket{1} \oplus \frac{1}{\sqrt{2}} w \ket{-}. \qedhere
\]
\end{proof}

We use this lemma as a subroutine to prove \zcref{thm:zero-error aa}, which we restate here.
\aatransd*

\begin{proof}[{Proof of \zcref{thm:zero-error aa}}]
Our transducer builds on the quantum search algorithm from \cite[Section 4]{boyer1998tight}.
With $R = (2\ket{\phi}\bra{\phi}-I)\,(2\ketbra{\phi_1}{\phi_1}-I)$ the amplitude amplification operator,\footnote{To be precise, \cite{boyer1998tight} consider $-R$, but the extra minus sign is inconsequential in this application.} we describe a slight variation of their algorithm.
Starting from $t=0$ and the state $\ket{\phi} \ket{0} = (\alpha \ket{\phi_1} + \beta \ket{\phi_0}) \ket{0}$, do:
\begin{enumerate}
\item
Mark and measure the ancilla qubit.\footnote{Marking here means mapping $(\gamma \ket{\phi_0} + \delta \ket{\phi_1}) \ket{0} \to \gamma \ket{\phi_0} \ket{0} + \delta \ket{\phi_1} \ket{1}$.} If it returns ``0'', terminate and return the resulting state $\ket{\phi_1}\ket{0}$.
Otherwise, increment $t$, flip the ancilla qubit, and take the resulting state $\ket{\phi_0}\ket{0}$ to step 2.
\item
For a uniformly random integer $j \in [1,(6/5)^t]$, apply $R^j$ to the current state $\ket{\phi_0}\ket{0}$ and go to step 1.
\end{enumerate}
Following the analysis in \cite[Theorem 3]{boyer1998tight}, this algorithm terminates and returns $\ket{\phi_1}\ket{0}$ after $O(1/\alpha)$ expected calls to $R$.

To turn a single iteration of this algorithm into a transducer, we use the transducers $U$ and $V$ to implement~$R^j$, and the transducer Hadamard test (\zcref{lem:transd-had}) to mark the state.
More precisely, using a similar argument to \zcref{lem:contr-transd}, we can combine these into a transducer~$S_t$ so that
\begin{equation} \label{eq:transd-contr}
\ket{\phi_0} \ket{0} \ket{0}
\overset{S_t}{\rightsquigarrow} \frac{1}{\sqrt{T_t}} \sum_{j=1}^{T_t} \big( \sin(2j\theta) \ket{\phi_1} \ket{1} + \cos(2j\theta) \ket{\phi_0} \ket{0} \big) \ket{j}
\end{equation}
where $T_t = \lfloor (6/5)^t \rfloor$ and $\sin(\theta) = \alpha$.
The transducer $S_t$ makes a single call to $U$ and $V$, and it has transduction complexity $W_t \in O(T_t (W_U+W_V+1))$.

To loop over this operation as in the full algorithm, we implement controlled transducer composition (\zcref{lem:contr-transd-comp}) with $\mathcal{H}_b$ corresponding to the subspace with the marking qubit in state $\ket{b}$.
Using the notation
\begin{align*}
	\begin{aligned}
		&\psi_{0,0}\;\overset{S_0}{\rightsquigarrow}\;\\[-7pt]
    	&\;{\scriptscriptstyle+}\\[-7pt]
    	&\psi_{0,1}
	\end{aligned}
	\left\{
	\begin{aligned}
	&\psi_{1,0}\;\overset{S_1}{\rightsquigarrow}\;\\[-7pt]
	&\;{\scriptscriptstyle+}\\[-7pt]
	&\psi_{1,1}
\end{aligned}
	\right.
	\left\{
\begin{aligned}
	&\psi_{2,0}\;\overset{S_2}{\rightsquigarrow}\;\\[-7pt]
	&\;{\scriptscriptstyle+}\\[-7pt]
	&\psi_{2,1}
\end{aligned}
\right.
\begin{aligned}
	&\ldots\;\\[-7pt]
	&\;\\[-7pt]
	&\;
\end{aligned}
\end{align*}
we will set $\psi_{0,0} = \beta \ket{\phi_0}\ket{0}\ket{0}$, and for $t > 0$ we will have $\psi_{t,0} = \beta_t \ket{\phi_0}\ket{0}\ket{\Gamma_t}$ with garbage $\ket{\Gamma_t}$ and $\|\psi_{t,0}\|^2 = \beta_t^2$ the probability that the algorithm has not terminated after $t$ iterations.
Effectively, $S_t$ acts on $\psi_{t,0}$ by first appending new qubits for the control register in \zcref{eq:transd-contr}.
Following \zcref{lem:contr-transd-comp}, the composed transducer maps $\ket{\phi}\ket{0}\ket{0}$ to $\sum_{t \geq 0} \psi_{t,1} \otimes \ket{t} = \ket{\phi_1} \ket{\Gamma}$ with transduction complexity
\[
W
= \sum_{t \geq 0} \| \psi_{t,0} \|^2 \big(1 + T_t (W_U+W_V+1) \big)
\in O\left(\frac{W_U+W_V+1}{\alpha}\right),
\]
where we used that $\sum_{t \geq 0} \| \psi_{t,0} \|^2 T_t = \sum_{t \geq 0} \beta_t^2 T_t \in O(1/\alpha)$ equals the expected runtime of the original quantum search algorithm.
\end{proof}

\section{Hitting and escape time on expanders} \label{app:expander}

Here we prove useful bounds on the relevant random walk quantities on an expander.

\expander*

\begin{proof}
Let $P$ denote the random walk transition matrix on $G$, $\delta \in \Omega(1)$ its spectral gap, and $d$ the degree of $G$.
Let $Q$ denote the submatrix of $P$ with rows and columns in $V \backslash M$.
A standard but extremely useful fact is that for a random walk starting from $s$, it holds that
\[
\mathbb{E}_s[\text{$\#$ visits to $x$ before $\tau_M$}]
= \sum_{t \geq 0} Q^t_{sx}
= d v_x.
\]
See for instance \cite[Proposition 2.1]{lyons2017probability}.
This implies that $R_s d = \sum_t Q^t_{ss}$, $\HT_s = \sum_{x,t} Q^t_{sx}$ and
\[
\ET_s
= \frac{1}{d R_s} \sum_{x,t,t'} Q^t_{sx} Q^{t'}_{sx}
= \frac{1}{d R_s} \sum_{t,t'} Q^{t+t'}_{ss}
= \frac{1}{d R_s} \sum_t (t+1) Q^t_{ss}.
\]
We can bound both quantities by combining the bound
\begin{equation} \label{eq:Q-bnd-sp}
Q^t_{sx}
\leq (1-\delta m/n)^t,
\end{equation}
which follows from the fact that $\| Q \| \leq 1 - \delta m/n$ \cite[Lemma 18.2]{childs2017lecture}, with the bound
\begin{equation} \label{eq:Q-bnd-mix}
Q^t_{sx}
\leq P^t_{sx}
\leq \frac{1}{n} + (1-\delta)^t,
\end{equation}
which follows from the fact that $\| P - u u^T/n \| \leq 1-\delta$, with $u$ the all-ones vector.
Combining these bounds we get that
\begin{equation} \label{eq:Q-comb-bnd}
\sum_x Q^{t+r}_{sx}
= \sum_{x,y} Q^t_{sy} Q^r_{yx}
\leq \left(\frac{1}{n} + (1-\delta)^t \right) \sum_{x,y} Q^r_{yx}
\leq \left(1 - \frac{\delta m}{n} \right)^r + n (1-\delta)^t,
\end{equation}
where we used that $\sum_{x,y} Q^r_{yx} \leq \inner{\vec{1},Q^r\vec{1}} \leq n \|Q\|^r$.
This yields the bound
\begin{align*}
\HT_s
= \sum_{x,t} Q_{sx}^t
&\leq O(\log(n)) + \sum_{x,t \in \Omega(\log n)} Q_{sx}^t \\
&\leq \log(n) + \sum_{t\in\Omega(\log n)} \left(1 - \frac{\delta m}{n} \right)^{\lfloor t/2\rfloor} + n (1-\delta)^{\lceil t/2\rceil}
\in O\left( \log(n) + \frac{n}{\delta m} \right).
\end{align*}
To bound $\ET_s = \frac{1}{d R_s} \sum_t (t+1) Q^t_{ss}$ we again split the sum around $\log(n)$.
For the first terms we have that
\begin{equation} \label{eq:bound-small}
\sum_{t\in O(\log n)} (t+1) Q^t_{ss}
\overset{\eqref{eq:Q-bnd-mix}}{\leq}
\sum_{t\in O(\log n)} (t+1) \left( \frac{1}{n} + (1-\delta)^t \right)
\in O(1),
\end{equation}
where the last inclusion uses that $\sum_{t \geq 0} (t+1) (1-\delta)^t = 1/\delta^2 \in O(1)$.
For the larger terms we rewrite
\begin{equation} \label{eq:bound-large}
\begin{aligned}
Q^{u+t+r}_{ss}
= \sum_x Q^u_{sx} Q^{t+r}_{xs}
&\overset{\eqref{eq:Q-bnd-mix}}{\leq} \left(\frac{1}{n} + (1-\delta)^u\right) \sum_x Q^{t+r}_{xs} \\
&\overset{\eqref{eq:Q-comb-bnd}}{\leq} \left(\frac{1}{n} + (1-\delta)^u\right) \left( \left(1 - \frac{\delta m}{n} \right)^r + n (1-\delta)^t \right).
\end{aligned}
\end{equation}
We can then bound
\begin{align*}
\sum_{t \in \Omega(\log n)} (t+1) Q^t_{ss}
&\leq \sum_{t \in \Omega(\log n)} (t+1) \left(\frac{1}{n} + (1-\delta)^{\lfloor t/3\rfloor}\right) \left( \left(1 - \frac{\delta m}{n} \right)^{\lfloor t/3\rfloor} + n (1-\delta)^{\lfloor t/3\rfloor} \right) \\
&\leq \frac{2}{n} \sum_{t \in \Omega(\log n)} (t+1) \left( \left(1 - \frac{\delta m}{n} \right)^{\lfloor t/3\rfloor} + n (1-\delta)^{\lfloor t/3\rfloor} \right) \\
&\in O\left( \frac{n}{m^2} + 1 \right),
\end{align*}
where the last inclusion repeatedly uses that $\sum_{t \geq 0} (t+1) (1-p)^t = 1/p^2$.
By combining the bounds in \zcref{eq:bound-small} and \zcref{eq:bound-large} we also get the bound $R_s = \frac{1}{d} \sum_t Q^t_{ss} \in O(1)$.

It remains to bound the electric hitting time $\EHT_x$, for which we reiterate the argument from~\cite{apers2022elfs}.
The bound $\EHT_s \in O(n/m + \log n)$ follows from the fact that $\EHT_s \leq 2\HT_s$ (\zcref{eq:exp-ET}).
The $\EHT_s \in O(m)$ bound follows from the fact that the electric flow must push a unit flow through at most $d m$ edges going into $M$.
Since the effective resistance is $O(1)$, this implies that measuring the electric flow returns such an edge (and hence an element in $M$) with probability $\Omega(1/m)$.
The elfs process hence terminates after $O(\min\{m,n/m + \log n\})$ samples in expectation.
\end{proof}

\end{appendices}

\end{document}